\definecolor{skyblue}{RGB}{0,255,255}
\tikzstyle{labcolor}=[fill=skyblue]  
\providecommand\ignore[1]{}
\apptocmd{\sloppy}{\hbadness 10000\relax}{}{}
\tikzstyle{round corners}=[rounded corners=0.5ex]
\providecommand{\shortv}[1]{#1}
\providecommand{\longv}[1]{} 
\providecommand{\examplev}[1]{#1} 
\providecommand{\ap}[1]{} 
\providecommand{\forget}[1]{} 
\renewcommand{\red}{\color{black}}		
\providecommand{\CS}[1]{{\red CS:  #1}} 
\newcommand{\legit}{{legit} }
\newcommand{\emfk}{\mathrm{emf}k}
\newcommand{\emf}{\mathrm{emf}}
\newcommand{\vmod}{\rotatebox[origin=c]{-180}{$\models$}}
\begin{document}
\title{Graph Repair and its Application to \\Meta-Modeling\thanks{This work is partly supported by the German Research Foundation (DFG), Grants HA 2936/4-2 and TA 2941/3-2 (Meta-Modeling and Graph Grammars: Generating Development Environments for Modeling Languages).}
}
\author{Christian Sandmann\institute{Universit\"at Oldenburg\\ Oldenburg, Germany \email{christian.sandmann@uni-oldenburg.de}}}
\maketitle 



\begin{abstract} 
Model repair is an essential topic in model-driven engineering.
We present typed graph-repair programs for specific conditions; application to any typed graph yields a typed graph satisfying the condition. 
A~model graph based on the Eclipse Modeling Framework (EMF), short EMF-model graph, is a typed graph satisfying some structural EMF-constraints. Application of the results to the EMF-world yields model-repair programs for \EMF constraints, a first-order variant of EMF constraints; application to any typed graph yields an \EMF model graph. 
From these results, we derive results for EMF model repair.
\end{abstract}



\section{Introduction}
In model-driven software engineering, the primary artifacts are models\ignore{ \cite{Ehrig-etal15a},} \cite{Sendall-etal03a,Heckel-Taentzer20a}. Models have to be consistent w.r.t. a set of constraints, specified for example in the Object Constraint Language (OCL) \cite{OCL}.
To increase the productivity of software development, it is necessary to automatically detect and resolve inconsistencies arising during the development process \ignore{called model repair }(see, e.g. \cite{Nentwich-etal03a,Macedo-etal17a,Nebras-etal17a}). 


To enable automated model repair or model completion, we look for an algorithm that - given a meta-model with two constraints and any model satisfying one of the constraints - creates another model satisfying the old as well as a new one (see Figure \ref{fig:introduction}).  

\begin{figure}[h]
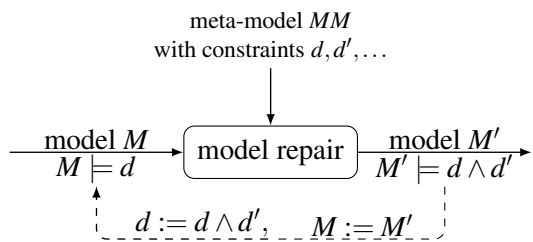

\[\scalebox{1}{
\tikz[node distance=6em,rounded corners]{
\node(input) {};
\node[strictly right of=input,inner sep=5pt,draw] (repair) {model repair};
\node[strictly right of=repair] (output) {};
\node(h)[node distance=3em,below of=repair]{};

\draw[arrow] (input) to node [above] {model $M$} node [below](a) {$M \models d$} (repair);
\draw[arrow] (repair) to node [above] {model $M'$} node [below](b) {$M' \models d \wedge d'$} (output);
\draw[dashed,-] (b) |- node[above left] {$M : = M'$~~~~~} (h);
\draw[dashed,arrow] (h) -| node[above right] {~~~~~$d := d \wedge d',$} (a);

\node (MM) [node distance=4em,above of=repair] {\footnotesize \begin{tabular}{c}
meta-model $MM$ \\ with constraints $d,d',\ldots$\end{tabular}};
\draw[arrow] (MM) to (repair);
}}\]
\caption{\label{fig:introduction} General idea to model repair}
\end{figure}

If we have such an algorithm, the process can be iterated: Using the model satisfying two constraints, and a new constraint as input, the algorithm creates a model satisfying the conjunction of three constraints, and so on.
This iterative approach is necessary in handling large conditions. In each step, one condition is handled. If all steps terminate, and in all steps the preceding conditions remain preserved, we can be sure that, after the consideration of all finitely many conditions, the conjunction of the conditions is satisfied. If after one step a preceding condition is violated, the condition must be considered again and the process may be come non-terminating.

In this paper, we represent a meta-model as a type graph, the instance model as a graph typed over the type graph, and first-order constraints as a typed graph constraint, equivalent to a first-order graph formula \cite{Radke+18a}. Given a typed constraint, we extract a typed program from the constraint, called \emph{repair program}, such that the application of the typed repair program to an arbitrary typed graph yields a typed graph satisfying the constraint.

For small (basic) constraints, we extract a basic repair program directly from the constraint. For larger (proper) constraints, the repair program is composed from basic repair programs. For generalized proper constraints repair programs for subconditions may be used for the construction. For conjunctive constraints we take repair programs for the components and compose them to a typed repair program, provided that there is a sequentialization that preserves the preceding constraints. For disjunctive constraints we need a repair program for one component. Altogether, for so-called legit constraints, we can construct a repair program. These constructions are done in the $\M$-adhesive category of typed graphs with $\Epi'$-$\M$-pair factorization.

\ignore{
For a so-called proper constraint, i.e. without conjunctions and disjunctions, with alternating quantifiers ending with $\ctrue$ or of the form $\PE(a, \NE b)$, we extract a repair program, directly from the constraint. 
For conjunctions of constraints, for which there exist a repair program, we take a sequence of the constraints, construct the repair programs and constraint-preserving versions of them. In more detail, we construct a sequence of repair programs where the first one preserves $\ctrue$, the second one preserves the first constraint, the third one preserves the first two constraints, and so on, and take the sequential composition of these repair programs. }


A model graph based on the Eclipse Modeling Framework (EMF) \cite{Steinmann-etal08a}, short EMF model graph, is a typed graph satisfying some structural EMF constraints. Application of the results for typed graphs to the EMF world yields model completion programs for \EMF constraints, a first-order version of the EMF constraints, such that the application to a typed graph yields an \EMF model graph. The results known from typed graph repair are applied to \EMF model repair and EMF model repair.
 
The structure of the paper is as follows. 
In Section~\ref{sec:preliminaries}, we review the definitions of typed graphs, typed graph conditions, and typed graph programs.
In Section~\ref{sec:repair}, we introduce the concept of typed repair programs and show that there are repair programs for a large number of conditions, so-called legit conditions. Application of a typed repair program to any typed graph yields a typed graph satisfying the constraint.
In Section~\ref{sec:application}, application of the results to EMF-world yields model-repair and completion programs for \EMF constraints, a first-order variant of EMF constraints.  From these results, we derive results of EMF-model repair and completion.
In Section~\ref{sec:related}, we present some related concepts.
In Section~\ref{sec:conclusion}, we give a conclusion and mention some further work.
\section{Preliminaries}\label{sec:preliminaries}

In the following, we recall the definitions of typed graphs, graph conditions, rules and transformations, graph programs, and basic transformations \cite{Biermann-Ermel-Taetzer12a,Habel-Pennemann09a,Pennemann09a}. 
In the following, our concepts are based on \cite{Biermann-Ermel-Taetzer12a}. For simplicity, we ignore the attributes.


A directed graph consists of a set of nodes and a set of edges where each edge is equipped with a source and a target node.

\begin{definition}[graphs \& morphisms]\label{def:graph}
A \emph{(directed) graph} $G=(V_G,E_G,\sou_G,\tar_G)$ consists of a\ignore{finite} set $V_G$ of \emph{nodes} and a\ignore{finite} set $E_G$ of \emph{edges}, as well as source and target functions $\sou_G,\tar_G\colon E_G\to V_G$.
Given graphs $G$ and $H$, a \emph{(graph) morphism} $g\colon G \to H$ consists of  total functions $g_V\colon V_G\to V_H$ and $g_E\colon E_G\to E_H$ that preserve sources and targets, that is, $g_V\circ\sou_G=\sou_H\circ g_E$ and $g_V\circ\tar_G=\tar_H\circ g_E$. The morphism $g$ is \emph{injective}\/ (\emph{surjective}\/) if $g_{\V}$ and $g_{\E}$ are injective (surjective), and an \emph{isomorphism}\/ if it is injective and surjective. In the latter case, $G$ and $H$ are \emph{isomorphic}, denoted by $G\cong H$. \end{definition}

\newpage 
\begin{convention}
Drawing a graph, nodes are drawn as circles  and edges as arrows. 
Arbitrary morphisms are drawn by usual arrows $\to$, injective ones by $\injto$. 
\end{convention}

A type graph (with containment) is a graph with a distinguished set of containment edges, and a relation of opposite edges. 


\begin{definition}[Type graph]\label{def:type-graph}
A~\emph{type graph} $TG = (T,C,O)$ consists of a graph $T$, a set $C\subseteq E_{T}$ of \emph{containment edges}, and a relation $O\subseteq E_{T}\times E_{T}$ of \emph{opposite edges}. The relation $O$ is anti-reflexive, symmetric, functional, i.e., $\forall (e_1,e_2),(e_1,e_3)\in O$, $e_2 = e_3$, and opposite direction, i.e., $\forall (e_1, e_2) \in O$, $\src(e_1)= \tgt(e_2)$ and $\src(e_2)=\tgt(e_1)$.
\end{definition}


\begin{convention}
The drawing of a type graph is obtained from the underlying graph by marking every containment edge (\containment) with a black diamond at the source, and adding, for every pair $(e_1,e_2)$ of opposite edges, a bidirectional edge (\opposite)  between the source and the target of the first edge with two edge type names, one at each end. 
\end{convention}

\ignore{\red The opposite edges, refer to the reference representing the opposite direction of a bidirectional association. Thus, such an association is represented by the two reference edges, each defining the other as its opposite. In UML, containment is a stronger type of association that implies a whole-part relation ship: an object cannot, directly or undirectly, contain its own container, it can have no more than one container, and its life span ends with that of its container.}

\examplev{\begin{example}\label{exa:tgic}
A type graph for Petri-nets \ignore{\cite{Reisig82a} }is given in Figure \ref{fig:mm}.
\begin{figure}[h]
\[\scalebox{0.9}{$
\begin{tikzpicture}[node distance=4.5em]
\node[class] (net) {PetriNet};
\node[class,strictly below right of=net,minimum height=2em] (ptarc) {PTArc};
\node[class,strictly below left of=net,minimum height=2em,yshift=-1em] (tparc) {TPArc};
\node[class,strictly right of=ptarc,minimum height=2em,yshift=-1em] (place) {Place};
\node[class,strictly left of=tparc,minimum height=2em,yshift=1em] (transition) {Transition};
\node[class,strictly below of=place,minimum height=2em] (token) {Token};
\node(h)[strictly above of=tparc,node distance=2em,inner sep=0pt,outer sep=0pt]{};

\draw[containment] (net) -| node[above]{\footnotesize $\mathrm{trans}$} (transition);
\draw[containment] (net) -| node[above]{\footnotesize $\mathrm{places}$} (place);
\draw[containment] (place) edge node[right]{\footnotesize $\token$} (token);
\draw[opposite] ([yshift=0em]ptarc.east) to node[below left]{\footnotesize $\mathrm{out}$} node[above right]{\footnotesize $\mathrm{src}$} ([yshift=1em]place.west);
\draw[opposite] ([yshift=-1em]tparc.east) to node[below left]{\footnotesize $\mathrm{in}$} node[above right]{\footnotesize $\mathrm{tgt}$} ([yshift=-1em]place.west);
\draw[opposite] ([yshift=0em]tparc.west) to node[below left]{\footnotesize $\mathrm{src}$} node[above right]{\footnotesize $\mathrm{out}$} ([yshift=-1em]transition.east);
\draw[opposite] ([yshift=1em]ptarc.west) to node[below left]{\footnotesize $\mathrm{tgt}$} node[above right]{\footnotesize $\mathrm{in}$} ([yshift=1em]transition.east);

\draw[containment] (net) to node[above,right]{\footnotesize $\mathrm{ptarcs}$} (ptarc);
\draw[containmentundir] (net) to node[above,left]{\footnotesize $\mathrm{tparcs}$} (h);
\draw[->] (h) to (tparc);
\end{tikzpicture}
$}\]
\caption{\label{fig:mm}Type graph for Petri-nets, adapted from \cite{Wachsmuth07a}}
\end{figure}
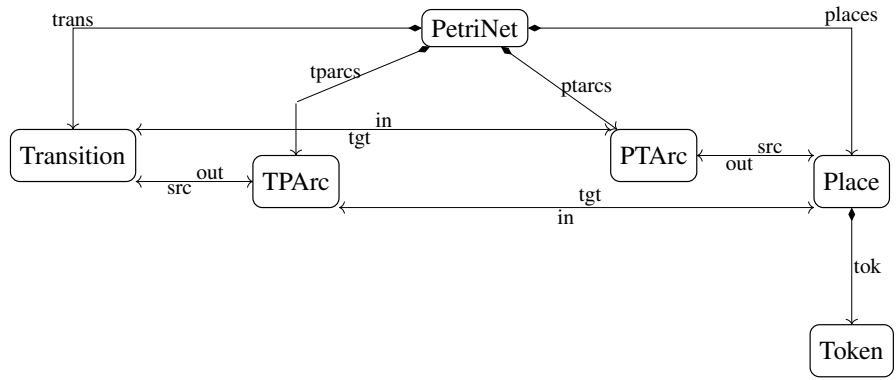

The type graph consists of the nodes PetriNet (PN), Place (Pl), Transition (Tr), Token (Tk), place-to-transition arcs (PTArc), and transition-to-place arcs (TPArc), written inside the nodes, and the edges places, trans, and $\token$. The distinguished containment edges from the PetriNet to the Place (Transition, PTArc, and TPArc)-node are marked in the graph. The opposite edge relation relates the edges from the PTArc (TPArc)-node to the Place (Transition)-node of type src and the Place (Transition)-node to the PTArc (TPArc)-node of type out.
\end{example}}

\begin{assumption}
In the following, let $TG = (T,C,O)$ be a fixed type graph.
\end{assumption}

A typed graph over a type graph is a graph together with a typing morphism. The typing itself is done by a graph morphism between the graph and the type graph.

\begin{definition}[typed graphs]\label{def:typed-graph}
A~\emph{typed graph} $(G,type)$ is a graph $G$ together with a typing morphism $type\colon G \to T$. 
Given typed graphs $(G,type_G)$, $(H,type_H)$, a \emph{typed graph morphism} $g \colon G \to H$ is a graph morphism such that $type_{H,V}\circ g_V=type_{G,V}$ and $type_{H,E}\circ g_E=type_{G,E}$.
For a node $v$ (an edge $e$) in $G$, $type_V(v)$ ($type_E(e)$) is the \emph{node (edge) type}.
\end{definition}

\longv{
\[\tikz[node distance=1.5em,shape=rectangle,outer sep=1pt,inner sep=2pt]{
\node(P){$G$};
\node(G)[strictly below right of=P]{$TG$};
\node(C)[strictly above right of=G]{$H$};
\draw[morphism] (P) -- node[overlay,above](a){$g$} (C);
\draw[morphism] (P) -- node[overlay,below left]{$type_G$} (G);
\draw[morphism] (C) -- node[overlay,below right](q){$type_H$} (G);
\draw[draw=white] (a) -- node[overlay](tr1){=} (G);
}\]
}

\begin{convention}
Given a typed graph $(G,type)$, we draw the graph $G$ and put in type information: For a node $v$ in $G$, we depict the node type $type_V(v)$ inside the node; for an edge $e$ in $G$, we depict the edge type $type_E(e)$ near the target node of the edge $e$. Each edge with edge type containment edge is marked as a containment edge.
For every pair of nodes whose type nodes are connected by an opposite edge, an opposite edge is added.
For each pair $(v_1,v_2)$ of nodes in $G$ for which $type_V(v_1),type_V(v_2)$ are connected by an opposite edge, a bidirectional edge between the nodes with two edge type names, one at each end, is added.
\end{convention}


\begin{assumption}
In the following, all graphs are typed over $TG$ and all morphisms are injective. 
\end{assumption}

{
{ \begin{note}Typed graphs (over $TG$) with containment and morphisms form a category \textbf{Graphs}$_{\textbf{TG}}$. 
This is $\M$-adhesive  and has a $\Epi'$-$\M$ pair factorization \cite{Ehrig-Golas-Hermann10a,Ehrig-Ehrig-Prange-Taentzer06b} where $\M$ is the class of injective\ignore{ and type-preserving} morphisms and $\Epi'$ is the class of pairs of jointly surjective morphisms. $\M$-adhesiveness implies the existence of pushouts\ignore{ along
$\M$-morphisms} (used in Definition~\ref{def:rule} and Lemma~\ref{lem:left}); $\Epi'$-$\M$ pair factorization is used in the shift construction in Lemma~\ref{lem:shift}.\end{note}
}}

\ignore{
\begin{remark}
Typed graphs over a type graph $TG$ and injective morphisms form a category \textbf{Graphs}$_{\textbf{TG}}$. This category is $\M$-adhesive \cite{Ehrig-Golas-Hermann10a} and has a $\Epi'$-$\M$ pair factorization \cite{Ehrig-Ehrig-Prange-Taentzer06a} where $\M$ is the class of injective morphisms, and $\Epi'$ is the class of pairs of jointly surjective morphisms.
By $\M$-adhesiveness, there are pushouts along ``$\M$-morphisms'', used in Definition \ref{def:rule}.
The $\Epi'-\M$-factorization is used in the construction of $\Shift$ in Lemma~\ref{lem:shift}.
\ignore{
{\red The chosen typed graphs are $\M$-adhesive: there exists a mapping from the containment edges of the typed graphs here to the edges of the typed graph, enriched with a special {\red marking} ~~\begin{rotate}{90}$\blacklozenge$\end{rotate}~~, drawn at the source of the containment edge, and the opposite edges is a special relation.}
Consequently, the typed graphs chosen here, form an $\M$-adhesive category.}
\end{remark}
}


Typed graph conditions are nested constructs, which can be represented as trees of morphisms equipped with quantifiers and Boolean connectives. Graph conditions and first-order graph formulas are expressively equivalent.

\begin{definition}[typed graph conditions]\label{def:cond}
A \emph{(typed graph) condition} over a graph $A$ is of the form (a) $\ctrue$ or $\exists(a,c)$ where $a\colon A \injto C$ is a real inclusion morphism, i.e., $A \subset C$, and $c$ is a condition over $C$. (b) For a condition $c$ over~$A$, $\neg c$ is a condition over~$A$.
(c) For conditions $c_i$ ($i \in I$ for some finite index set $I$\longv{\footnote{In this paper, we consider graph conditions with finite index sets.}}) over $A$, $\wedge_{i \in I} c_i$ is a condition over~$A$. \ignore{Conditions built by (a) and (b) are called \emph{linear}.} 
Conditions over the empty graph~$\emptyset$ are called \emph{constraints}. In the context of rules, conditions are called \emph{application conditions}. 
Any morphism $p\colon A\injto G$ \emph{satisfies} $\ctrue$. A~morphism $p$ \emph{satisfies} $\PE(a,c)$ with $a\colon~A\injto C$ if there exists an morphism $q\colon C\injto G$ such that $q\circ a=p$ and $q$~satisfies~$c$. 
\[\tikz[node distance=1.5em,shape=rectangle,outer sep=1pt,inner sep=2pt]{
\node(P){$A$};
\node(G)[strictly below right of=P]{$G$};
\node(C)[strictly above right of=G]{$C,$};
\draw[monomorphism] (P) -- node[overlay,above](a){$a$} (C);
\draw[monomorphism] (P) -- node[overlay,below left]{$p$} (G);
\draw[altmonomorphism] (C) -- node[overlay,below right](q){$q$} (G);
\draw[draw=white] (a) -- node[overlay](tr1){=} (G);
\node(c)[outer sep=0pt,inner sep=0pt,node distance=0em,strictly right of=C]{\tikz[draw=black,fill=lightgray]{
\filldraw (0,0) -- (0.6,0.12) -- node[right,outer sep=1ex]{\footnotesize{$c$}} (0.6,0) -- (0.6,-0.12) -- (0,0);}};
\draw[draw=white] (q) -- node[overlay,sloped](tr1){$\models$} (c);
\node(Y)[node distance=0.2em,strictly right of=c]{$)$};
\node(X)[node distance=0.0em,strictly left of=P]{$\PE($};
\node(TGI)[strictly above of=a]{$TG$};
\draw[monomorphism] (P) edge node[overlay,above,sloped]{\footnotesize} (TGI);
\draw[monomorphism] (C) edge node[overlay,above,sloped]{\footnotesize} (TGI);
\draw[morphism,dashed] (G) edge [bend right=25] (TGI);
\draw[draw=white] (a) -- node[overlay](tr1){} (TGI);}\]

A~morphism $p$ \emph{satisfies} $\neg c$ if $p$ does not satisfy $c$, and $p$ \emph{satisfies} $\wedge _{i \in I}c_i$ if $p$ satisfies each $c_i$ ($i \in I$). We write  $p\models c$ if $p$ satisfies the condition $c$ (over $A$). A~graph $G$ \emph{satisfies} a constraint $c$, $G\models c$, if the morphism $p\colon\emptyset\injto G$ satisfies~$c$. A constraint $c$ is \emph{satisfiable} if there is a graph $G$ that satisfies $c$. 
\ignore{$\psem{c}$~denotes the class of all graphs satisfying $c$. 
Two conditions $c$ and $c'$ over $A$ are \emph{equivalent}, denoted by $c\equiv c'$, if for all graphs $G$ and all  morphisms $p\colon A\injto G$, $p\models c$ iff $p\models c'$. A condition $c$ \emph{implies} a condition $c'$, denoted by $c\impl c'$, if for all graphs and all  morphisms $p\colon A\injto G$, $p\models c$ implies $p\models c'$.}
\end{definition}
\begin{notation}Conditions may be written in a more compact form: $\PE a:=\PE(a,\ctrue)$, $\cfalse:=\neg \ctrue$, $\PA(a,c):=\NE(a, \neg c)$, and $\NE:=\neg\PE$\ignore{, and $\NA:=\neg\PA$}. The expressions $\vee_{i \in I} c_i$ and $c\impl c'$ are defined as usual. For a morphism $a\colon A\DSinjto C$ in a condition, we just depict the codomain $C$, if the domain $A$ can be unambiguously inferred.\end{notation}

\ignore{\begin{example}
The expression \[\PA (\emptyset \injto \directededgelab{Tr}{TPArc}, \PE \directededgelab{Tr}{TPArc} \injto \oppositeedgelab{Tr}{TPArc}, \ctrue))\] is constraint according to Definition \ref{def:cond}, written in compact form as\\ $\PA (\directededgelab{Tr}{TPArc}, \PE \oppositeedgelab{Tr}{TPArc}))$ meaning that there are no real\footnote{An edge is said to be \emph{real} if it is not a loop.} incoming reference edges to the Petri-net node. The type graph with inheritance and containment is given in Example \ref{exa:tgic}, where the typing morphism can be unambiguously inferred.
\end{example}}

\ignore{\begin{definition}[conditions with alternating quantifiers]Conditions of the form $\Q(A_1,\bar{\Q}(A_2,\Q(A_3,\ldots)))$ with $\Q\in\{\PA,\PE\}$, $\bar\PA=\PE$, $\bar\PE=\PA$ ending with $\ctrue$ or $\cfalse$ are conditions with \emph{alternating quantifiers}. A condition with alternating quantifiers ending with $\ctrue$ is \emph{pure} and \emph{proper} if it is pure or of the form $\PE(A,\NE C)$. 
\end{definition}

\begin{fact}[alternating quantifiers]\label{fac:alternate} For every condition (without conjunctions and disjunctions), an equivalent condition with alternating quantifiers can be constructed.\end{fact}
\shortv{\begin{proof}Given a condition $d$, by a normal form result\longv{ in \cite[Theorem 2]{Pennemann04a}}, an equivalent condition $d'$ in normal form can be constructed. Applying the rule $\NE(a, \neg c)\equiv \PA(a, \PE c)$ as long as possible to $d'$, yields an equivalent condition with alternating quantifiers.\end{proof}}

\longv{Pure conditions with alternating quantifiers are satisfiable, non-pure conditions are not satisfiable ($\cfalse$) or end with a condition equivalent to $\NE C$ ($\PA(C,\false)\equiv\NE C$) and may be satisfiable.}

\begin{fact}Proper conditions are satisfiable.\end{fact}
\shortv{\begin{proof}A proper condition is $\ctrue$, ends with a condition of the form $\PE(x,\ctrue)\equiv\PE x$ or  $\PA(x,\ctrue)\equiv\ctrue$, or is of the form $\PE(A,\NE C)$. Thus, it is satisfiable.\end{proof}}

\ignore{$\PE(x,\ctrue)\equiv\PE x$\\
$\PA(x,\ctrue)\equiv\ctrue$\\
$\PE(x,\cfalse)\equiv\cfalse$\\
$\PA(x,\false)\equiv\NE x$\\
$\PA(x,\PE(y,\cfalse)\equiv\PA(x,\cfalse)\equiv\NE x$\\
$\PE x,\PA(x,\false)\equiv\PE(x,\NE y)$\\}
}


\longv{Plain rules are specified by a pair of injective morphisms. {\red The matches are required to be type-refining.} They may be equipped with application conditions, and interfaces. By the interfaces, it becomes possible to hand over information between the transformation steps. }

The following is done in the framework of $\M$-adhesive categories.
Rules are specified by a pair of morphisms, interface morphisms, and an application condition. By the interfaces, it becomes possible to hand over information between the transformation steps.\ignore{ The ``horizontal'' morphisms are type-preserving, the ``vertical'' ones type-refining.} In contrast to \cite{Taentzer12a}, our vertical morphisms are injective.
\begin{definition}[typed rules \& transformations]\label{def:rule} Given a category $\C$, a \emph{(typed) rule} $\prule=\tuple{x,p, \ac, y}$ (\emph{with interfaces} $X,Y$) consists of a plain rule $p = \tuple{L \injlto K \injto R}$ of morphisms $l \colon K \injto L, r \colon K \injto R$, morphisms $x\colon X\injto L$, $y\colon Y\injto R$, the \emph{(left and right) interface morphisms}, and a left application condition $\ac$ over $L$. The partial morphism $i \colon X \injto Y$ with $i = y^{-1} \circ r \circ l^{-1} \circ x$ is the \emph{interface morphism} of the rule $\prule$.
 
If the domain of an interface morphisms is empty or the application condition $\ac$ is $\ctrue$, then the component may not be written.

A \emph{direct transformation}\/ from $G$ to $H$ applying $\prule$ at $g\colon X\injto G$ consists of the following steps: 

\begin{enumerate}
\item[(1)] Mark a morphism $g'\colon L\injto G$, called \emph{match} satisfying the dangling condition, such that $g=g'\circ x$ and $g'\models \ac$.
\item[(2)] Apply the plain rule $p$ at $g'$\ignore{\red (possibly)} yielding a morphism \ignore{comatch }$h'\colon R\injto H$.
\item[(3)]Unmark $h\colon Y\injto H$, i.e., define $h=h'\circ y$.
\end{enumerate}
\vspace*{-0.5em}
\begin{figure}[h]
\[\scalebox{1}{$\begin{tikzpicture}[node distance=3.5em,shape=rectangle,outer sep=1pt,inner sep=2pt,label distance=-1.25em]
\node(X){$X$};
\node(L)[strictly right of=X]{$L$};
\node(K)[strictly right of=L]{$K$};
\node(R)[strictly right of=K]{$R$};
\node(Y)[strictly right of=R]{$Y$};
\node(D)[strictly below of=K]{$D$};
\node(G)[strictly left of=D]{$G$};
\node(H)[strictly right of=D]{$H$};
\node(TGI)[strictly above of=K]{$TG$};
\draw[monomorphism] (X) -- node[overlay,above]{\footnotesize $x$} (L);
\draw[altmonomorphism] (K) -- node[overlay,above]{\footnotesize $l$} (L);
\draw[monomorphism] (K) -- node[overlay,above]{\footnotesize $r$} (R);
\draw[altmonomorphism] (Y) -- node[overlay,above]{\footnotesize $y$} (R);
\draw[altmonomorphism] (D) -- node[overlay,above]{\footnotesize $l^{*}$}(G);
\draw[monomorphism] (D) -- node[overlay,above]{\footnotesize $r^{*}$}(H);

\draw[monomorphism] (X) -- node[overlay,below left](g){\footnotesize $g$} (G);
\draw[monomorphism] (L) -- node[overlay,left]{\footnotesize $g'$} (G);
\draw[monomorphism] (K) -- node[overlay,left]{\footnotesize } (D);
\draw[monomorphism] (R) -- node[overlay,right]{\footnotesize $h'$}(H);
\draw[monomorphism] (Y) -- node[overlay,below right](h){\footnotesize $h$}(H);
\draw[draw=none] (L) -- node[overlay]{\footnotesize (1)} (D);
\draw[draw=none] (R) -- node[overlay]{\footnotesize (2)} (D);
\draw[draw=white] (g) -- node[overlay]{=} (L);
\draw[draw=white] (h) -- node[overlay]{=} (R);

\draw[morphism,dashed] (X) edge[bend right=14] node[above right]{\footnotesize $i$} (Y);

\draw[morphism] (L) -- node[overlay,above,sloped]{\footnotesize} (TGI);
\draw[morphism] (K) edge node[overlay,above,sloped]{\footnotesize } (TGI);
\draw[morphism] (R) edge node[overlay,above,sloped]{\footnotesize } (TGI);
\draw[morphism] (X) edge [bend right=-20] node[overlay,above]{\footnotesize} (TGI);
\draw[morphism] (Y) edge [bend right=20] node[overlay,above]{\footnotesize} (TGI);
\node(c)[outer sep=0pt,inner sep=0pt,node distance=0em,strictly above of=L]{\tikz[draw=black,fill=lightgray]{
\filldraw (0,0) -- (-0.12,0.5) -- node[above,outer sep=1ex]{\footnotesize{$\ac$}} (0.12,0.5) -- (0,0);}};
\end{tikzpicture}$}\]
\caption{\label{fig:DPO}A direct transformation}
\end{figure}
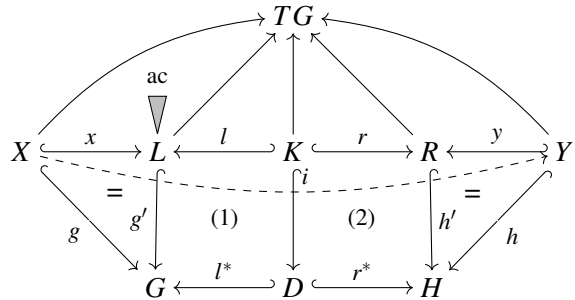
The application of a plain rule is as in the double-pushout approach \cite{Ehrig-Ehrig-Prange-Taentzer06b} in the category of typed graphs. A plain rule $p = \tuple{L \injlto K \injto R}$ is applicable to a graph $G$ w.r.t. a morphism $g' \colon L \injto G$, iff $g'$ satisfies the \emph{dangling condition}: ``No edge in $G - g'(L)$ is incident to a node in $g'(L-K)$.''

The \emph{semantics} of the rule $\prule$ is the set $\psem{\prule}$ of all triples $\tuple{g,h,i}$ of a morphism $g\colon X\injto G$, a morphism $h\colon Y\injto H$, and a partial interface morphism $i\colon X\injto Y$ with $i=y^{-1} \circ r \circ l^{-1}\circ x$. Instead of $\tuple{g,h,i}\in\psem{\prule}$, we write $g\dder_{\prule,i}h$ or short $g\dder_{\prule}h$.\ignore{, $G\dder_{\prule,g,h,i} H$ or short $G\dder_\prule H$. }
\end{definition}

\textbf{The dangling edges operator.} For node-deleting rules~$\prule$, the dangling condition may not be satisfied. 
In this case, $\prule'$ means that the rule shall be applied in the SPO-style of replacement \cite{Loewe93a}, i.e., first to remove the dangling edges, and, afterwards application of the rule in the DPO style. 
Note that this style of replacement also can be described by a DPO-program that fixes a match for the rule, deletes the dangling edges, and afterwards applies the rule at the match. The proceeding can be extended to sets of rules: For a rule set $\S$, $\S'=\{\prule'\mid\prule\in\S\}$.
\vspace{1em}

\textbf{Interface \& markings.} Rules with interfaces enable the control over marking and unmarking of elements in a typed graph and are capable of handling the markings over transformation steps. The left interface restricts the application of the rule to a previously marked context: Given a morphism $g \colon X \injto G$, the application is restricted to those morphisms $g'\colon L \injto G$ that fit to $g$, i.e. $g = g' \circ x$.\ignore{ In the case that the left interface is empty, there is no restriction for the application of the rule.} The right interface restricts the application of the next rule: By the morphism $h \colon Y \injto H$, the next rule can only be applied at~$Y$. 
Instead of rules with interfaces in the sense of \cite{Pennemann09a}, we could use markings as, e.g., in \cite{Habel-Sandmann18a,Poskitt-Plump13a}. \longv{Rules with interfaces may be seen as rules with markings: Whenever there is a marking of $A$ in a graph $G$, i.e., a morphism from $A$ to $G$, choose an extended marking of $C$ in $G$, i.e. a morphism from $C$ to $G$, apply the marked program at that marked position, and, finally, unmark the occurrence. 
Rules with interfaces may be seen as a formal {\red morphism-based} version of the idea of markings combined with rules. }
We have decided to use the interfaces instead of markings, because we have nested markings and the description of markings by morphisms makes transparent what happens.


Typed graph programs are made of sets of typed rules with interface, non-deterministic choice $\{P,Q\}$, sequential composition $\tuple{P;Q}$, the try-statement $\try P$, and the as long as possible iteration $P\downarrow$. 

\begin{definition}[typed graph programs]\label{def:prog}\label{terminating} The set of \emph{(typed graph) programs with interface $X$}, $\Prog(X)$, is defined inductively: 
\[\begin{tabular}{lll}
(1)& Every typed rule $\prule$ with interface $X$ (and $Y$) is in $\Prog(X)$.\\
(2)& If $P,Q\in\Prog(X)$, then $\{P,Q\}$ is in $\Prog(X)$.\\
(3)& If $P\in\Prog(X)$ and $Q\in\Prog(Y)$, then $\tuple{P;Q}\in\Prog(X)$. &\\
(4)& If $P\in\Prog(X)$, then $\try P$, and $P\downarrow$ are in $\Prog(X)$. &\\
\end{tabular}\]
The statement $\Skip$ denotes the identity rule $\id_X = \tuple{X \injlto X \injto X}$.  

The \emph{semantics} of a program $P$ with interface $X$, denoted by $\psem{P}$, is a set of triples such that, for all $\tuple{g,h,i}\in\psem{P}$, the domain of $g$ and $i$ is $X$ and the codomain of $h$ and $i$ is equal:
\[\begin{array}{llcl}
(1) &\psem{\prule}&&\mbox{as in Definition \ref{def:rule}},\\
(2)&\psem{\{P,Q\}} &=& \psem{P}\cup\psem{Q},\\
(3) &\psem{\tuple{P;Q}}&=&\{\tuple{g_1,h_2,i_2{\circ}i_1}\mid \tuple{g_1,h_1,i_1}{\in}\psem{P}, \tuple{g_2,h_2,i_2}{\in}\psem{Q}\mbox{, }h_1=g_2\},\\
(4) &\psem{\try P}& =& \{\tuple{g,h,i} \mid \tuple{g,h,i} \in \psem{P}\} \cup \{\tuple{g,g,\id} \mid \NE h.\tuple{g,h,i} \in \psem{P}\},\\
&\psem{\aslong{P}}& = &\{\tuple{g,h,\id}\in P^*\mid\nexists h'.\tuple{h,h',\id}\in\psem{\Fix(P)}\}, \\
\end{array}\]
where $P^*=\bigcup_{j=0}^\infty P^j$ with $P^0 = \Skip$, $P^{j} = \tuple{\Fix(P); P^{j-1}}$ for $j>0$ 
and $\psem{\Fix(P)}=\{\tuple{g,h\circ i,\id}\mid \tuple{g,h,i}\in\psem{P}\}$. 
Instead of $\tuple{g,h,i}\in\psem{P}$, we write $g\dder_{P,i}h$ or short $g\dder_{P}h$.
\end{definition}

\longv{\red
\begin{remark}
The semantics of the sequential composition implies that a program with interface $X$ may only be iterated, if the output interface of the previous computation equals the (input) interface $X$.
The statement $\Fix$ is a generic way of making programs iterable that do not delete or unselect elements of their interface. $\Fix$ ensures that every possible computation ends with the output interface $X$ by finally deselecting all elements additionally selected during a run of the program. 
\end{remark}}


In the following, we consider the basic transformations \cite{Habel-Pennemann09a}.
The construction $\Shift$ ``shifts'' existential conditions over morphisms into a disjunction of existential application conditions. This can be done because the category of typed graphs has an $\Epi'- \M$-factorization. The construction $\Left$ ``shifts'' a right  application condition over a rule into a left application condition. Constraints can be integrated into left application conditions of a rule such that every transformation is condition-preserving. 


 \begin{lemma}[$\Shift,\Left,\cpres$]\label{lem:shift}\label{lem:left}\label{lem:pres} 
In an $\M$-adhesive category $\C$ with $\Epi'-\M$ pair factorization, there are constructions $\Shift$, $\Left$, and $\cpres$ such that the following holds.
For each condition $d$ over $P$ and every\ignore{\red type-refining} morphism $b\colon P\injto R,n\colon R\injto H$, $n\circ b\models d\iff n \models \Shift(b,d)$.
For each rule $p=\brule{L}{K}{R}$ and each condition $\ac$ over $R$, for each $G\dder_{p,g,h}H$, $g\models\Left(p,\ac)\iff h\models\ac$.
For each rule $\prule$ and each condition $d$, a condition $\ac=\cpres(\prule,d)$ can be constructed such that the  rule $\tuple{p,\ac}$ is $d$-\emph{preserving}, i.e., for all $g\dder_{\tuple{\prule,\ac}} h$, $g\models d$ implies $h\models d$.
\end{lemma}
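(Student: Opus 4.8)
The plan is to establish the three constructions in sequence, since $\cpres$ is built on top of $\Shift$ and $\Left$. First I would recall the standard $\Shift$ construction: given $d$ over $P$ and $b\colon P\injto R$, define $\Shift(b,\ctrue)=\ctrue$, $\Shift(b,\neg d')=\neg\Shift(b,d')$, $\Shift(b,\wedge_i d_i)=\wedge_i\Shift(b,d_i)$, and for $d=\exists(a\colon P\injto C, d')$ set
\[
\Shift(b,\exists(a,d'))=\bigvee_{(a',b')\in F}\exists(a'\colon R\injto C',\ \Shift(b',d')),
\]
where $F$ is the (up to iso finite) set of pairs $(a',b')$ with $(a',b')$ an $\Epi'$-pair obtained from the $\Epi'$-$\M$ pair factorization of the cospan $C\to C'\leftarrow R$ induced by $a$ and $b$ — concretely, $b'\circ a = a'\circ b$ jointly epi, with the joint image taken in $C'$. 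The correctness $n\circ b\models d\iff n\models\Shift(b,d)$ is proved by induction on the structure of $d$; the existential case uses exactly the $\Epi'$-$\M$ pair factorization (the property stated in the note following the category definition) to decompose a witness $q\colon C\injto H$ for $n\circ b$ into a choice of pair $(a',b')\in F$ together with an $\M$-morphism $C'\injto H$, and conversely.

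Next I would recall $\Left$. For $p=\brule{L}{K}{R}$ and $\ac$ over $R$, one pushes $\ac$ back through the span $L\hookleftarrow K\hookrightarrow R$: replace each $\exists(a\colon R\injto C,\ac')$ by $\exists(a^*\colon L\injto C^*,\Left(p^*,\ac'))$ where $C^*$ and the relevant pushouts/pushout complements are formed from $a$ and the span of $p$ (when the required pushout complement does not exist, the clause collapses to $\ctrue$ or $\cfalse$ as appropriate), and commute $\Left$ with $\neg$ and $\wedge$. The correctness $g\models\Left(p,\ac)\iff h\models\ac$ for a direct transformation $G\dder_{p,g,h}H$ follows by induction, the existential case being the standard pushout/pushout-complement bookkeeping available in any $\M$-adhesive category (guaranteed by the $\M$-adhesiveness noted in the excerpt). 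I would treat the partial-morphism and dangling-condition aspects of our rule format as in the cited works, since they do not affect the logical content of $\Left$.

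Finally, for $\cpres$: given a rule $\prule=\tuple{x,p,\ac,y}$ and a constraint $d$, I would construct the preserving application condition as
\[
\cpres(\prule,d)=\ac\wedge\bigl(\Shift(\emptyset\injto L,d)\impl\Left(p,\Shift(\emptyset\injto R,d))\bigr),
\]
so that the rule $\tuple{p,\cpres(\prule,d)}$ still respects the original $\ac$ but additionally, whenever the match already sees a graph satisfying $d$, forces the co-match to satisfy $d$. Correctness: take $g\dder_{\tuple{\prule,\cpres(\prule,d)}}h$ with $G\models d$. By the $\Shift$ correctness (applied to $\emptyset\injto L$ composed with $\emptyset\injto G$), $G\models d$ gives $g'\models\Shift(\emptyset\injto L,d)$; since $g'\models\cpres(\prule,d)$, the implication yields $g'\models\Left(p,\Shift(\emptyset\injto R,d))$; by $\Left$ correctness, $h'\models\Shift(\emptyset\injto R,d)$; by $\Shift$ correctness again, $H\models d$, i.e.\ $h\models d$. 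I expect the main obstacle to be purely bookkeeping rather than conceptual: making the $\Shift$ and $\Left$ constructions interact correctly with the left/right interfaces $X,Y$ and with the dangling-edges convention $\prule'$, and checking that the finiteness of the $\Epi'$-$\M$ pair factorization in $\textbf{Graphs}_{\textbf{TG}}$ keeps the resulting conditions finite so they remain legitimate conditions in the sense of Definition~\ref{def:cond}.
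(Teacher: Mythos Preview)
Your proposal is correct and follows essentially the same approach as the paper: the paper does not give a proof at all beyond stating the constructions (Construction~\ref{const:Shift}), relying on the cited reference for correctness, and your $\Shift$, $\Left$, and $\cpres$ are exactly those constructions with the standard inductive correctness arguments filled in. Two cosmetic deviations are worth noting: the paper writes $\cpres(\prule,d)=\Shift(A\injto L,d)\impl\Left(\prule,\Shift(A\injto R,d))$ without conjoining the original $\ac$ (the lemma only asks that $\tuple{p,\ac}$ be $d$-preserving, so your extra conjunct is harmless but unnecessary), and in the $\Left$ clause the non-applicability case collapses specifically to $\cfalse$, not ``$\ctrue$ or $\cfalse$ as appropriate.''
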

A pair $(a',b')$ of morphisms is \emph{jointly surjective} if for each $x\in R'$ there is a preimage $y\in R$ with $a'(y)=x$ or $z\in C$ with $b'(z)=x$. For a rule $p=\brule{L}{K}{R}$, $p^{-1}=\brule{R}{K}{L}$ denotes the \emph{inverse} rule. For $L'\dder_p R'$ with intermediate graph $K'$, $\brule{L'}{K'}{R'}$ is the \emph{derived} rule.
\begin{construction}\label{const:Shift} \label{const:Left}\label{const:pres}The construction is as follows.
\[\begin{tabular}{l} 
\tikz[shape=rectangle,node distance=2em,shape=circle,outer sep=0pt,inner sep=1pt]{
\node(P){$P$};
\node(C)[strictly below of=P]{$C$};
\node(space)[node distance=1.5em,strictly below of=C]{};
\node(P')[strictly right of=P]{$R$};
\node(C')[strictly right of=C]{$R'$};
\draw[monomorphism] (P) -- node[overlay,left]{\small $a$} (C);
\draw[monomorphism,dashed] (P') -- node[overlay,right,inner sep=0pt]{\small $a'$} (C');
\draw[draw=none] (P) -- node[overlay]{(0)} (C');
\draw[monomorphism] (P) -- node[overlay,above]{\small $b$} (P');
\draw[monomorphism,dashed] (C) -- node[overlay,below]{\small $b'$} (C');
\node(c)[outer sep=0pt,inner sep=0pt,node distance=0em,strictly below of=C]{
\tikz[baseline,draw=black,fill=lightgray]{\filldraw (0,0) -- node[below,pos=0.6,overlay,outer sep=1ex]
{\small $c$} (0.12,-0.3) -- (-0.12,-0.3) -- (0,0);}};
\node(c')[outer sep=0pt,inner sep=0pt,node distance=0em,strictly below of=C']{
\tikz[baseline,draw=black,fill=lightgray]{\filldraw (0,0) -- node[below,pos=0.6,overlay,outer sep=1ex]
{\small } (0.12,-0.3) -- (-0.12,-0.3) -- (0,0);}};}\end{tabular}
\hspace{0.3cm}\hfill
\begin{tabular}{p{12cm}} 
$\Shift(b,\ctrue):=\ctrue$.\\
$\Shift(b,\PE(a,c)):=\bigvee_{(a',b')\in\F}\PE(a',\Shift(b',c))$ where \\
$\F=\{(a',b')\mid\mbox{$b'\circ a=a'\circ b$, $a',b'$ inj, $(a',b')$ jointly surjective}\}$\\
$\Shift(b,\neg d):=\neg\Shift(b,d)$, $\Shift(b,\wedge_{i\in I}d_i):=\wedge_{i\in I}\Shift(b,d_i)$.
\end{tabular}\]
\[\begin{tabular}{l}
\tikz[node distance=2em,shape=rectangle,outer sep=0pt,inner sep=2pt,label distance=0pt]{
\node(space){};
\node(L)[node distance=0em,strictly right of=space]{$R$};
\node(K)[strictly right of=L]{$K$};
\node(R)[strictly right of=K]{$L$};
\node(Kstar)[strictly below of=K]{$K'$};
\node(Lstar)[strictly below of=L]{$R'$};
\node(Rstar)[strictly below of=R]{$L'$};
\node(space)[node distance=1em,strictly below of=Rstar]{};
\draw[altmonomorphism] (K) -- node[overlay,above]{}(L);
\draw[monomorphism] (K) -- node[overlay,above]{} (R);
\draw[altmonomorphism] (Kstar) -- (Lstar);\draw[monomorphism] (Kstar) -- (Rstar);
\draw[morphism] (L) -- node[left]{$$} node[left]{\footnotesize $a$}(Lstar);
\draw[morphism] (K) -- (Kstar);
\draw[morphism] (R) -- node[left]{$$} node[right]{\footnotesize $a'$}(Rstar);
\draw[draw=none] (L) -- node[overlay]{\footnotesize (1)} (Kstar);
\draw[draw=none] (R) -- node[overlay]{\footnotesize (2)} (Kstar);
\node(space)[node distance=1.5em,strictly below of=Lstar]{};
\node(acL)[outer sep=0pt,inner sep=0pt,node distance=0em,strictly below of=Lstar]{
\tikz[baseline,draw=black,fill=lightgray]{
\filldraw (0,0) -- node[below,pos=0.7,overlay,outer sep=1ex]
{\footnotesize $\ac$} (-0.12,-0.3) -- (0.12,-0.3) -- (0,0);}};
\node(acR)[outer sep=0pt,inner sep=0pt,node distance=0em,strictly below of=Rstar]
{\tikz[baseline,draw=black,fill=lightgray]{\filldraw (0,0) -- node[below,pos=0.7,overlay,outer sep=1ex] 
{\footnotesize $$} (0.12,-0.3) -- (-0.12,-0.3) -- (0,0);}};}
\end{tabular}
\hspace{0.1cm}\hfill
\begin{tabular}{p{12cm}} 
$\Left(p,\ctrue):=\ctrue$.\\
$\Left(p,\PE(a,\ac)) := \PE(a',\Left(p',\ac))$ if $p^{-1}$ is applicable w.r.t. the morphism $a$, $p':=\brule{L'}{K'}{R'}$ is the \emph{derived}\, rule, and $\cfalse$, otherwise. 
$\Left(p,\neg\ac){:=}\neg\Left(p,\ac)$. $\Left(p,\wedge_{i\in I}\ac_i):=\wedge _{i\in I}\Left(p,\ac_i)$.
\end{tabular}\]

$\cpres(\prule,d):=\Shift(A \injto L,d)\impl\Left(\prule,\Shift(A \injto R,d)$.
\end{construction}

\begin{example}\label{exa:preservation}
Let $\prule = \tuple{\;\classlabindex{Pl}{}\classlabindex{Tk}{}\dder\containmentedgelabindex{Pl}{Tk}{\tok}{}{}\;}$ be a rule, $d = \NE (\scalebox{1}{\atmostonecontainerlab{Pl}{Tk}{Pl}{\tok}{\tok}})$ be a constraint, and 
$b_L$ and $b_R$ be the morphisms from the empty graph to the left- and right-hand side of the rule, respectively. Then we have the following.

\renewcommand{\arraystretch}{1.5}

\[\begin{array}{lcl}
\Shift(b_L, d) &=& \NE (\; \atmostonecontainerlabindex{Pl}{Tk}{Pl}{\tok}{\tok}{1}{2}\;) \wedge \ldots\\
\Shift(b_R, d) &=& \NE (\; \atmostonecontainerlabindex{Pl}{Tk}{Pl}{\tok}{\tok}{1}{2}\;) \wedge \ldots\\

\Left(\prule, \Shift(b_R, d)) &=& 
\NE  (\; \scalebox{0.8}{\begin{tikzpicture}[node distance=2em]
\node[node,label={[marking]below:\footnotesize 1}] (A) {$\mathrm{Pl}$};
\node[node,strictly right of=A,label={[marking]below: \footnotesize 2}] (B) {$\mathrm{Tk}$};
\node[node,strictly right of=B] (C) {$\mathrm{Pl}$};
\draw[contain] (C) to node[above] {\footnotesize tok} (B);
\end{tikzpicture}}\;) \wedge \ldots\\
\cpres(\prule, d) &=& 
\NE (\; \atmostonecontainerlabindex{Pl}{Tk}{Pl}{\tok}{\tok}{1}{2} \;) \wedge \ldots \dder
\NE  (\; 
\scalebox{0.8}{\begin{tikzpicture}[node distance=2em]
\node[node,label={[marking]below:\footnotesize 1}] (A) {$\mathrm{Pl}$};
\node[node,strictly right of=A,label={[marking]below: \footnotesize 2}] (B) {$\mathrm{Tk}$};
\node[node,strictly right of=B] (C) {$\mathrm{Pl}$};
\draw[contain] (C) to node[above] {\footnotesize tok} (B);
\end{tikzpicture}}\;) \wedge \ldots
\end{array}\]

If the rule $\prule$ is equipped with the application condition $\cpres(\prule, d)$, we obtain the rule $\prule' = \tuple{\prule,\cpres(\prule, d)}$, restricting the applicability of the rule to those matches satisfying the application condition and preserving the constraint $d$. 
The application condition is satisfied if the rule is applied to an occurrence of a graph with Tk-node hat does not have incoming containment edges from different Pl-nodes. (A node of type Tk is said to be \emph{Tk-node}.)

\end{example}


\section{Repair programs}\label{sec:repair}

In this section, we introduce repair programs and show some repair results for repair programs.

A repair program for a constraint is a graph program with the property that there exists a derivation and the application to any graph yields a graph satisfying the constraint. More generally, we consider repair programs for conditions.


\begin{definition}[repair programs]\label{def:repair} A (typed) program $P$ is a \emph{(typed) repair program} for a constraint $d$ if,
for all (typed) graphs $G$, $\exists\,G\dder_P H$ and $\forall\,G\dder_P
H$, $H\models d$.
A program $P$ with interface~$A$ is a \emph{repair program} for a
condition $\ac$ over $A$, if, for all injective morphisms $g\colon
A\injto G$, $\exists\,g\dder_{P,i} h$ and $\forall\,g\dder_{P,i} h$,
$h\circ i\models\ac$.
\end{definition}

\ignore{\begin{remark}The requirement of the existence of a
transformation is necessary: Let $d_1,d_2$ be constraints with $d_1=\neg
d_2$. Then $d=d_1\wedge d_2$ is not satisfiable. With the old
definition,  $P=\tuple{P_1;P_2^{d_1}}\equiv\Abort$ is a repair program
for $d$: There does not exist a transformation $G\dder H$. Thus,
$\forall G\dder H$, $H\models d$. \end{remark}} 

\begin{example}
For the constraint $d$ (see below), intuitively meaning, there do not exist two parallel edges of type $\token$ between a Pl-node and a Tk-node, the program $P_d$ is a repair program for $d$.
\[d= \NE (\;\paredgelab{Pl}{Tk}{\token}{\token} \;) \quad \quad P_d = \tuple{\; \scalebox{1}{\paredgelab{Pl}{Tk}{\token}{\token}} \dder \scalebox{1}{\referenceedgelab{Pl}{Tk}{\token}} \;}\downarrow\] 
It works as follows: whenever there are two parallel $\token$-edges, $P_d$ deletes one of the two $\token$-edges as long as possible.
\end{example}

We look for stable, maximally preserving, and terminating repair programs. 
A repair program is 
\emph{stable}, if it does nothing whenever the condition is already satisfied,
\emph{maximally preserving}, if, informally, items are preserved whenever possible (see \cite{Sandmann-Habel19a}),  
\emph{terminating} if the relation $\dder$ is terminating. 

{ 
We start with basic conditions of the form $\PE a$ (or $\NE a$) with morphism $a \colon A \injto C$. For basic conditions, we construct so-called \emph{repairing} sets from the morphism of the condition and repair programs based on the repairing set using the $\try$-statement and the as-long-as-possible iteration, respectively.
}

\ignore{
The construction of the repair programs is based on the construction for the \emph{basic} conditions requiring the existence (non-existence) of a morphism. 

\begin{definition}
A condition of the form $\PE a$ or $\NE a$ with real inclusion morphism $a \colon A \injto C$ is called \emph{basic}.
\end{definition}

Given a basic condition $\PE a ~(\NE a)$ with real inclusion morphism $a \colon A \injto C$, we construct \emph{repairing} rule sets $\R_a$ and $\S_a$, respectively. Whenever we have repairing sets, we obtain repair programs using the $\try$-statement and the as long as possible iteration, respectively.
}

\begin{lemma}[basic repair]\label{lem:basic}
For basic conditions, there are repair programs.
\end{lemma}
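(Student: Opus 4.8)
The plan is to treat the two cases of a basic condition, $\PE a$ and $\NE a$, separately, in each case constructing an explicit finite rule set (a ``repairing set'') and wrapping it in the control constructs $\try$ and $\aslong{(\cdot)}$ to obtain the repair program; correctness then reduces to checking the semantics of Definition~\ref{def:prog}.

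First consider $\NE a$ with $a\colon A\injto C$. Intuitively, a morphism $p\colon A\injto G$ violates $\NE a$ exactly when there is an extension $q\colon C\injto G$ with $q\circ a=p$, i.e.\ an occurrence of the ``forbidden'' extra structure $C\setminus A$ anchored at $p(A)$. To repair this we must delete enough of that extra structure. So I would take the repairing set $\S_a$ to consist of rules with left interface $A$ that delete a nonempty part of $C\setminus A$: concretely, for each intermediate graph $A\subseteq B\subsetneq C$ (or, more economically, for each single element of $C\setminus A$ together with the dangling-edge operator), a rule $\tuple{A\injto C \injlto B \injto B, \ldots}$ that removes $C\setminus B$. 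Then $P = \S_a'\!\downarrow$ (using the dangling-edges operator so node deletions are always possible). Termination of each step is immediate since every rule strictly decreases the graph size, hence the as-long-as-possible loop terminates and produces some $h$; and the loop exits precisely when no rule of $\S_a$ applies at the interface, which by construction means no $q\colon C\injto G$ over $p$ exists, i.e.\ $h\circ i\models\NE a$. This also makes the program stable: if $p\models\NE a$ already, no rule applies and the loop does nothing.

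Next consider $\PE a$ with $a\colon A\injto C$. Here $p\colon A\injto G$ violates the condition when there is \emph{no} extension to $C$, and the repair is to add the missing structure $C\setminus A$. The repairing set $\R_a$ is then the single rule $\tuple{A\injlto A \injto C}$ with left interface $A$ and right interface $C$, i.e.\ the rule that glues a fresh copy of $C\setminus A$ onto $p(A)$. The repair program is $\try\R_a$: if $p$ already satisfies $\PE a$ we still apply the rule (adding another copy) — or, to get stability, we first equip the rule with the application condition $\neg\Shift(a,\ctrue)$-style guard, or simply note that one application always yields $h$ with $h\circ i\models\PE a$ since the comatch $C\injto H$ witnesses it; $\try$ guarantees a result even when (as cannot happen here) the rule fails. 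I would include a guard $\ac = \neg\,\PE(\id_A, \ldots)$ checking ``no extension to $C$ exists yet'' so the program is stable and maximally preserving, and argue the guarded rule is still always applicable when the condition is violated.

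The main obstacle I expect is the $\NE a$ case: a single forbidden occurrence of $C$ may be destroyable in several ways, and one must ensure (i) that \emph{every} run of the loop terminates in a graph with \emph{no} occurrence of $C$ over the interface — not merely that some run does — and (ii) that deleting nodes does not run afoul of the dangling condition, which is why the dangling-edges operator $(\cdot)'$ is needed and must be folded into the rule set. Making the repairing set precise enough that ``loop exits'' $\iff$ ``$\NE a$ holds'' is the crux; the rest is a routine unwinding of the program semantics in Definition~\ref{def:prog} together with the observation that each rule strictly shrinks the graph, giving termination. For $\PE a$, the only subtlety is stability/maximal preservation, handled by the guard, and the fact that adding a disjoint copy of $C\setminus A$ can always be performed (no dangling issues, since nothing is deleted).
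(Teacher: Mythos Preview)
Your approach is correct, and for $\NE a$ it coincides with the paper's: the repairing set $\S_a$ consists of rules $C\dder B$ with $A\subseteq B\subsetneq C$ (the paper additionally restricts to single-element deletions via a side condition, which is your ``more economical'' variant), wrapped as $\S_a'{\downarrow}$; your termination and exit-condition arguments are the right ones.

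For $\PE a$ your construction genuinely differs. You take a single rule $A\dder C$ with a guard; the paper instead uses $\R_a=\{\tuple{b,\,B\dder C,\,\ac\wedge\ac_B,\,a}\mid A\injto^b B\subsetneq C\}$, one rule per intermediate $B$, where $\ac_B=\bigwedge_{B\subsetneq B'\subseteq C}\NE B'$ forbids any strictly larger partial occurrence. Both yield valid repair programs for $\PE a$, but only the paper's is \emph{maximally preserving}: when part of $C$ already sits in $G$ over $p(A)$, the paper's rules complete that partial occurrence rather than gluing in a fresh disjoint copy of $C\setminus A$. Your guard buys stability (nothing happens if $\PE a$ already holds), but stability and maximal preservation are distinct properties, and your single rule does not achieve the latter---so your parenthetical claim to maximal preservation is not correct, even though the lemma itself goes through.

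As a side note, the paper does not argue correctness of Construction~\ref{const:basic} in situ: it states the construction and, in the proof of Theorem~\ref{thm:repair}, defers the untyped case to prior work, adding only that the rules and conditions inherit typings via $type_B:=type_C\circ\inc_B$. Your direct semantic argument is therefore more explicit than what the paper itself supplies.
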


\begin{construction}\label{const:basic} 
For a real morphism $a\colon A\injto C$, the programs $P_{\PE a}$ and $P_{\NE a}$ are constructed as follows.
\renewcommand{\arraystretch}{1.5}
\[
\begin{array}{lll}
P_{\PE a} = \try \R_a & \mbox{with } \R_a=\{\tuple{b,B\dder C,\ac\wedge\ac_B,a}\mid A\injto^b B\subset C\}\\
P_{\NE a} = \S_a'{\downarrow} & \mbox{with } \S_a=\{\tuple{a,C\dder B,b}\mid A\injto^b B\subset C\mbox{ and (*)}\}
\end{array}
\]
where $\ac=\Shift(A\injto B,\NE a)$, $\ac_B{=}\bigwedge_{B\subset B'\subseteq C}\NE B'$, (*) $\pif \E_C\supset \E_{A} \pthen |\V_C|=|\V_B|,|\E_C|=|\E_B|+1 \pelse |\V_C|=|\V_B|+1$, and $'$~denotes the dangling edges operator.
\end{construction}

The rules in $\R_a$ are \ignore{increasing and }of the form $B\dder C$ where $A\subseteq B\subset C$. They possess an application condition $\ac$ requiring the condition $\NE a$, shifted from $A$ to~$B$, and the application condition $\ac_B$ requiring that no larger subgraph $B'$ of~$C$ occurs. The rules in $\S_a$ are \ignore{decreasing and }of the form $C\dder B$ where $A\subseteq B\subset C$ such that, if the number of edges in $C$ is larger than the one in $A$, they delete one edge and no node, and delete a node, otherwise. By $B\subset C$, both rule sets
do not contain identical rules.

\begin{example}\label{ex:Ra}
Consider the condition $d= \PE b$ with $b\colon\classlab{Pl}\injto \containmentedgelab{Pl}{Tk}{\tok}$, intuitively meaning that, whenever there is a place there exists a token and a connecting containment edge.
Application of the Construction~\ref{const:basic} yields a rule set $\R_b$ with two rules.
\renewcommand{\arraystretch}{1.3}
\[\R_b=
\left\{\begin{array}{lcl}
\prule_1=\tuple{\;x_1, \classlab{Pl}&\dder&\containmentedgelab{Pl}{Tk}{\token},\NE \classlab{Pl} ~\classlab{Tk},y_1\;}\\
\prule_2=\tuple{\;x_2, \classlab{Pl}\classlab{Tk}&\dder&\containmentedgelab{Pl}{Tk}{\token},
\NE\containmentedgelab{Pl}{Tk}{\token} \wedge \NE \classlab{Tk}\containmentedgelab{Pl}{Tk}{\token}, y_2\;}\\
\end{array}\right.\]
where 
$x_1 \colon \classlab{Pl} \injto \classlab{Pl}$, $y_1 \colon \classlab{Pl} \injto \containmentedgelab{Pl}{Tk}{\token}$, $x_2 \colon \classlab{Pl} \injto \classlab{Pl} \classlab{Tk}$, and $y_2 \colon \classlab{Pl} \injto \containmentedgelab{Pl}{Tk}{\token}$.
The rule $\prule_1$ requires a node of type Pl and attaches a node of type Tk and a connecting containment edge, provided that there do not exist a Pl-node and a Tk-node. 
The second rule $\prule_2$ requires an occurrence of a Pl- and a Tk-node and inserts a connecting containment edge, provided there is no containment edge from the occurrence of the Pl-node to the image of Tk-node, and there is no containment edge to another Tk-node.
By Lemma~\ref{lem:basic} resp. Theorem \ref{thm:repair}(2), $P_d=\try\R_b$ is a repair program for $d=\PE b$.
\end{example}

Conditions with alternating quantifiers ending with $\ctrue$ or of the form $\PE(a,\NE b)$ or $\NE b$ are \emph{proper}. 
A proper condition of the form $\PA(a,c)$ and $\PE(a,c)$ that ends with $\ctrue$ is \emph{universal} and \emph{existential}, respectively. A condition of the form $\PE a$ ($\NE a$) is \emph{positive} (\emph{negative}).

\[\scalebox{0.9}{
\tikz[node distance=1.2em,label distance=1pt,outer sep=1pt]{%
  \node(l1) at (2.5,0.5){};
  \node(l2) at (2.5,-2.5){};
  \draw[-] (l1) to (l2); 
  \node(uld) at (-2.5,-2.5){};
  \node(org) at (7.5,0.5) {};
 \draw[rounded corners] (uld) node [above]{} rectangle (org); 
 \draw[rounded corners] (uld) rectangle (l1);
 \draw[rounded corners] (org) rectangle (l2);

 \node(0) at (0,0) {\footnotesize $\PE(a_1, \PA(a_2, \PE(a_3, \ldots, \ctrue)$};
 \node(1a) at (5,0) {\footnotesize $\PE(a_1, {\PA}(a_2, \PE(a_3, \ldots, \cfalse)$};
 \node(1b) at (5,-0.75) {\footnotesize $\PA(a_1, {\PE}(a_2, \PA(a_3, \ldots, \cfalse)$};
 
 \node(PA) at (0,-0.75) {\footnotesize $\PA(a_1, \PE (a_2, \PA(a_3, \ldots, \ctrue)$};
 \node(NE) at (-0.5,-1.5) {\footnotesize \ignore{+} $ \NE a_1$};   
 \node(2) at (0,-2.25) {\footnotesize \ignore{+} $\PE (a_1, \NE a_2)$};   
 \node(3) at (5,-2.25) {\footnotesize \ignore{-} $\PE (a_1, \NE a_2)$};  
 \node(4) at (5,-1.5) {\footnotesize \ignore{-} $\NE a_1$};     
 
\node(uni) at (-3.25,-0.75) {\footnotesize universal};
\node(neg) at (-3.25,-1.5) {\footnotesize negative};
\node(exi) at (-3.3,-0) {\footnotesize existential};
 
  \node (proper) at (0,-3) {proper};
  \node (r) at (5,-3) {non-proper};
  \draw[->] (uni) to (PA);
  \draw[->] (neg) to (NE);  
  \draw[->] (exi) to (0);  
  }} \]

For proper conditions, a repair program can be constructed.

\begin{theorem}[Repair I]\label{thm:repair} There is a repair program for proper conditions.
\end{theorem}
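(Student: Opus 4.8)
The plan is to prove the statement by induction on the structure of a proper condition, using the basic repair programs from Lemma~\ref{lem:basic} as the base case and composing them via the program constructors. Recall that a proper condition is either $\ctrue$, or $\NE a$, or $\PE(a, c)$ with $c$ proper, or has alternating quantifiers ending with $\ctrue$ (equivalently, built from the universal/existential/negative cases in the classification diagram). For $\ctrue$ the repair program is $\Skip$, which trivially satisfies Definition~\ref{def:repair}. For the basic cases $\NE a$ and $\PE a$, we invoke Lemma~\ref{lem:basic} with the programs $P_{\NE a} = \S_a'{\downarrow}$ and $P_{\PE a} = \try\R_a$ from Construction~\ref{const:basic}.

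The core of the argument is the inductive step for $\PE(a, c)$ with $a\colon A \injto C$ and $c$ a proper condition over $C$. First I would build the repair program for the ``outer'' part: using $\R_a$ as in Construction~\ref{const:basic}, $\try\R_a$ establishes the existence of a morphism $q\colon C\injto H$ extending the given $p\colon A\injto G$ (or reuses one already present). The right interface $Y = C$ of the rules in $\R_a$ then hands this occurrence of $C$ to the next stage. By the induction hypothesis there is a repair program $P_c \in \Prog(C)$ for $c$; we then form $P_{\PE(a,c)} = \tuple{\try\R_a\,;\,P_c}$. To see this is correct: given $g\colon A\injto G$, applying $\try\R_a$ yields some $h_1\colon C\injto G_1$ with $h_1 \circ i_1 = $ (the embedding of $C$), and then applying $P_c$ at $h_1$ yields $h_2$ with $h_2\circ i_2 \models c$; composing, the resulting morphism $C\injto H$ satisfies $c$ and restricts along $a$ to the image of $g$, so $g$-composed-with-interface satisfies $\PE(a,c)$. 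Existence of a derivation follows because $\try$ always succeeds and $P_c$ admits a derivation by the induction hypothesis. The universal case $\PA(a,c)$ is handled dually: here we must repair \emph{all} occurrences of $A$ (more precisely, of $C$'s negative part), so the program wraps the inner repair in an as-long-as-possible iteration over matches of $C$, analogous to the $\downarrow$ in $P_{\NE a}$; the classification diagram shows a universal proper condition either ends with $\ctrue$ (handled by iterating a repair that makes each match of $a$ extend to $C$) or is $\NE a$ itself.

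The main obstacle I expect is \emph{termination together with preservation across the composition}: when we run $P_c$ to repair the inner condition $c$ at one occurrence of $C$, the transformation may create or destroy other occurrences of $A$, potentially invalidating the outer quantifier structure, and the as-long-as-possible loops in the universal/negative cases must be shown to terminate. For the negative and simple cases this is controlled by the side condition~(*) in Construction~\ref{const:basic}, which forces each rule application to strictly decrease a well-founded measure (number of nodes, or number of edges). For the nested case one needs that repairing $c$ inside a fixed copy of $C$ does not reintroduce violations handled by the outer program — this is exactly why the rules in $\R_a$ carry the application conditions $\ac = \Shift(A\injto B, \NE a)$ and $\ac_B$, and why the interface morphisms pin down the context. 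I would therefore argue that, because the inner repair $P_c$ operates relative to the interface $C$ which is kept marked throughout, it cannot interfere with the guard conditions of the outer rules, so the composite still terminates and still yields a graph satisfying $\PE(a,c)$; making this interference-freeness precise, using the semantics of interfaces and of $\Fix$ from Definition~\ref{def:prog}, is the technical heart of the proof.
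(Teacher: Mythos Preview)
Your proposal and the paper take quite different routes. The paper's proof does \emph{not} redo a structural induction on proper conditions; it simply cites the corresponding result for untyped graphs from \cite{Habel-Sandmann18a} and shows that the construction lifts to the typed setting: for each intermediate graph $B$ with $A\subseteq B\subset C$ appearing in the rules of $\R_a$ and $\S_a$, one defines $type_B := type_C\circ\inc_B$, so that all the rule morphisms become typed-graph morphisms, and the typing of the application conditions then follows. That transfer-of-typing is the whole argument.

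Your direct induction is the kind of argument that \cite{Habel-Sandmann18a} presumably carries out, but as written it has a concrete gap in the existential step. You assert that ``the right interface $Y=C$ of the rules in $\R_a$ then hands this occurrence of $C$ to the next stage''; this is not what Construction~\ref{const:basic} provides. Each rule in $\R_a$ has the form $\tuple{b,\,B\dder C,\,\ac\wedge\ac_B,\,a}$, so the right interface morphism is $y=a\colon A\injto C$, i.e.\ the output interface is $Y=A$, not $C$ (compare Example~\ref{ex:Ra}, where $y_1$ has domain $\mathrm{Pl}=A$). Consequently your program $\tuple{\try\R_a\,;\,P_c}$ is ill-formed: after $\try\R_a$ the interface is $A$, whereas $P_c\in\Prog(C)$. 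Construction~\ref{const:proper} inserts the bridging steps $\select(a)$ and $\unselect(a)$ precisely to move the interface from $A$ to $C$ and back; these cannot be absorbed into $\R_a$, because giving those rules right interface $C$ would make the two branches of $\try$ produce different output interfaces.

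For the universal case you likewise omit the key ingredient $\select(a,\neg c)$ from Construction~\ref{const:proper}(5). This is what makes the outer $\downarrow$ terminate: each iteration picks an occurrence of $a$ that \emph{violates} $c$ and repairs it, and since the inner repair for a proper universal condition is increasing (Fact~\ref{fac:decrease}), repaired occurrences stay repaired. Your ``interference-freeness via pinned interface'' sketch is in the right spirit but does not isolate this mechanism, and without the $\neg c$ guard the loop need not make progress.
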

\begin{construction}\label{const:proper} For proper conditions $d$, the repair program $P_d$ is constructed inductively as follows.
\begin{enumerate}
\item[(1)] For $d = \ctrue$, $P_d = \Skip$.
\item[(2)] For $d = \PE a$, $P_d = \try\R_a$.
\item[(3)] For $d = \NE a$, $P_d = \Rdown{\S_a'}$.
\item[(4)] For $d = \PE(a,c)$, $P_d = P_{\PE a};\tuple{\select(a);P_c;\unselect(a)}$.
\item[(5)] For $d = \PA(a,c)$, $P_d = \Rdown{\tuple{\select(a,\neg c);P_c;\unselect(a)}}$.
\end{enumerate}
where $a\colon A\injto C$ is real, $\R_a$ and $\S_a^\prime$ are the sets according to Construction~\ref{const:basic}, and $P_c$ is a repair program for $c$ with interfaces $C$.
$\select(a)=\tuple{a,\id_C}$ is the rule with left interface $a$ and identical plain rule $\id_C=\brule{C}{C}{C}$. 
Given an occurrence of $A$, it is used for a marking of an occurrence of~$C$, extending the occurrence of $A$.
Similar, $\select(a,\ac)=\tuple{a,\id_C,\ac}$ is used for marking an occurrence of~$C$ satisfying the condition $\ac$. $\unselect(a)=\tuple{\id_C,a}$ is the identical plain rule with right interface $a$, used for unmarking the  occurrence of $C$.
\end{construction}

\begin{example}\label{exa:repairprogram}
Given the constraint $d= \PA(\scalebox{1}{\classlab{Pl}}, \PE \scalebox{1}{\containmentedgelab{Pl}{Tk}{\token}})$, meaning that, for each place, there exists a token, a repair program for~$d$ can be constructed according to Theorem~\ref{thm:repair}.  
The constraint $d$ is of the form $\PA(a,c)$ with morphism $a\colon \emptyset\injto \classlab{Pl}$ and condition $c=\PE\classlab{Pl}\injto\scalebox{1}{\containmentedgelab{Pl}{Tk}{\token}}$. By Theorem \ref{thm:repair}(5), $P_d = \tuple{\select(a, \neg c);P_{c}; \unselect(a)}\downarrow$. 
The condition $c$ is of the form $\PE b$. By Repair Theorem~\ref{thm:repair}(2), the repair program for $c$ is $P_c = \try \R_b$, where $\R_b$ is the rule set from Example~\ref{ex:Ra}.
%
The program $P_d$ marks an occurrence of a Pl-node without connecting containment edge to a Tk-node. 
The program $P_c$ tries to add a Tk-node and the containment edge, provided there do not exist a Pl-node and a Tk-node, and to add a containment edge between a Pl- and Tk-node, provided that there does not exist such an edge to another Tk-node. 
Finally the marked part is unmarked. This is done as long as possible. 
Whenever no further application is possible, the constraint $d$ is satisfied.
\end{example}

The constructed programs are increasing (decreasing): 
A program $P$ is \emph{decreasing (increasing)} if all rules in $P$ are decreasing (increasing).
A rule $\prule=\tuple{L\injlto K \injto R,\ac}$ is \emph{decreasing} if $L\supset K\cong R$ and increasing if $L\cong K\subset R$.

\begin{fact}\label{fac:decrease}
For negative conditions, the repair program is decreasing. For positive, existential, and universal conditions, the repair program is increasing.
\end{fact}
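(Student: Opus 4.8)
The plan is to prove Fact~\ref{fac:decrease} by structural induction on the proper condition $d$, following exactly the case split of Construction~\ref{const:proper}, and checking in each case that the rules occurring in the constructed program $P_d$ have the required shape. Recall the definitions: a rule $\prule = \tuple{L \injlto K \injto R, \ac}$ is \emph{decreasing} if $L \supset K \cong R$ and \emph{increasing} if $L \cong K \subset R$, and a program is decreasing (increasing) if all its rules are. So it suffices to inspect the rule sets $\R_a$ and $\S_a'$ from Construction~\ref{const:basic} and then argue that the program-forming operations ($\try$, $\Rdown{\cdot}$, sequential composition, and the auxiliary $\select$/$\unselect$ rules) preserve the relevant property.

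First I would handle the base cases. For a positive condition $d = \PE a$ with $a \colon A \injto C$ real, $P_d = \try \R_a$, and every rule in $\R_a$ has the form $\tuple{b, B \dder C, \ac \wedge \ac_B, a}$ with $A \injto^b B \subset C$; here the plain rule is $B \dder C$, i.e.\ $L = K = B$ and $R = C$, which is increasing since $B \subset C$ (strictness comes from $B \subset C$ in the index set). The $\try$-wrapper and the identity rule it may introduce ($\Skip = \id_X$, which is both increasing and decreasing, being $X \cong X \cong X$) do not spoil this. For a negative condition $d = \NE a$, $P_d = \Rdown{\S_a'}$; the rules in $\S_a$ have the form $\tuple{a, C \dder B, b}$ with $A \subseteq B \subset C$, so the plain rule $C \dder B$ is decreasing. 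Applying the dangling-edges operator $'$ replaces each such rule by a DPO-program that first fixes a match, deletes dangling edges, then applies $C \dder B$: each constituent rule here only deletes, so it is decreasing, and the as-long-as-possible iteration $\Rdown{\cdot}$ of decreasing rules is decreasing. For $d = \ctrue$, $P_d = \Skip$ is vacuously both.

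Next the inductive cases. For an existential condition $d = \PE(a,c)$ (so $c$ ends with $\ctrue$ and is itself universal or $\ctrue$, hence increasing by the induction hypothesis), $P_d = P_{\PE a}; \tuple{\select(a); P_c; \unselect(a)}$. Here $P_{\PE a} = \try \R_a$ is increasing as above; $\select(a) = \tuple{a, \id_C}$ has plain rule $\id_C = \brule{C}{C}{C}$, which is increasing (indeed an identity is $L \cong K \subseteq R$ trivially), and likewise $\unselect(a) = \tuple{\id_C, a}$; $P_c$ is increasing by induction; and sequential composition of increasing programs is increasing. For a universal condition $d = \PA(a,c)$ (again $c$ is increasing by induction), $P_d = \Rdown{\tuple{\select(a,\neg c); P_c; \unselect(a)}}$; the body is a sequential composition of the increasing rule $\select(a, \neg c) = \tuple{a, \id_C, \neg c}$ (same identity plain rule, now with an application condition, which does not affect the decreasing/increasing classification since that concerns only $L, K, R$), the increasing program $P_c$, and the increasing $\unselect(a)$, hence increasing; and $\Rdown{\cdot}$ preserves being increasing. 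This covers all five cases, and since by the classification picture every proper condition is either $\ctrue$, positive, negative, existential, or universal (an existential/universal condition being one ending with $\ctrue$, and a proper condition of the form $\PE(a, \NE b)$ reduces via $\PE a$ then $\NE b$ to an increasing-then-decreasing composition — wait, this last case needs care, see below), the claim follows.

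The main obstacle I anticipate is the bookkeeping around which syntactic forms actually occur and how the compound operators interact with the classification. Two points deserve attention. (i) The case $d = \PE(a, \NE b)$: by Construction~\ref{const:proper}(4) this gives $P_d = P_{\PE a}; \tuple{\select(a); P_{\NE b}; \unselect(a)}$, where $P_{\PE a}$ and the $\select$/$\unselect$ rules are increasing but $P_{\NE b} = \Rdown{\S_b'}$ is decreasing — so $P_d$ is neither increasing nor decreasing as a whole. Hence Fact~\ref{fac:decrease} must be read as classifying only the five named subcases ($\ctrue$, positive, negative, existential, universal), not every proper condition; I would state this restriction explicitly rather than claim something false about mixed proper conditions. (ii) One must confirm that the auxiliary arguments — ``$\Rdown{\cdot}$ of an increasing (decreasing) program is increasing (decreasing)'' and ``sequential composition preserves each property'' — follow directly from the definitions of $P\downarrow$ and $\tuple{P;Q}$ in Definition~\ref{def:prog}, since $\Fix(P)$ and the finite iterates $P^j$ are built only from rules of $P$ and $\Skip$, and $\Skip$ is harmless. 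These are routine once spelled out, so the real content of the proof is just the inspection of $\R_a$, $\S_a'$, $\select$, and $\unselect$ carried out above.
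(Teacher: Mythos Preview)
The paper does not prove this fact; it is stated as evident from Construction~\ref{const:proper}. Your structural induction is the natural way to make it precise and is essentially correct, and your observation that the proper-but-mixed case $\PE(a,\NE b)$ falls outside the four named classes (so the fact makes no claim about it) is accurate and worth keeping.

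One small wrinkle you half-notice but do not fully resolve: under the paper's strict reading of $\subset$ (cf.\ ``real inclusion morphism, i.e., $A \subset C$''), the rules $\select(a)$, $\select(a,\neg c)$, and $\unselect(a)$ have plain rule $\id_C = \brule{C}{C}{C}$ with $L = K = R$, and are therefore \emph{neither} increasing nor decreasing in the literal sense. Your parenthetical ``being $X \cong X \cong X$'' and the quiet switch from $\subset$ to $\subseteq$ paper over this. Since identity rules preserve every condition, this is harmless for every downstream use of the fact (Fact~\ref{fac:neg}, the proof of Theorem~\ref{thm:repair2}), and the paper evidently intends a non-strict reading or means to ignore bookkeeping identity rules; but if you want a clean statement you should either relax the definition of increasing/decreasing to allow $L \cong K \subseteq R$ (resp.\ $L \supseteq K \cong R$), or phrase the fact as ``every non-identity rule in $P_d$ is increasing (resp.\ decreasing)''.
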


\ignore{ 
\begin{proof}[of Theorem \ref{thm:repair}]
The proof in \cite[Theorem 1]{Habel-Sandmann18a} can be extended to typed graphs: 
For every type-refining morphism $a\colon A\injto C$, and every proper subgraph $B$ of $C$, define $type_B = type_C \circ \inc_B$, where for $B\subseteq C$, $\inc_B$ (short $i_B$) denotes the inclusion of $B$ in $C$. Then the inclusion morphism $\inc_A\colon A\injto B$ is type-refining, $\inc_B\colon B\injto C$ is type-preserving, and the rules $B\dder C\in\R_a$ and $C\dder B\in\S_a$ consist of type-preserving morphisms. In this way the graphs in the rules in $\R_a$ and $\S_a$ become typed. The typing of the conditions is a direct consequence.
\ignore{
\[\tikz[node distance=3em,shape=rectangle,outer sep=1pt,inner sep=2pt]{
\node(A){$A$};
\node(B)[strictly above right of=A]{$B$};
\node(C)[strictly below right of=B]{$C$};
\node(TG)[strictly above of=B]{$TG$};
\draw[monomorphism] (A) -- node[overlay,above](a){$a$} (C);
\draw[monomorphism] (A) -- node[overlay,above]{$i_A$} (B);
\draw[monomorphism] (B) -- node[overlay,above]{\footnotesize{$i_B$}} (C);
\draw[draw=white] (a) -- node[overlay]{=} (B);
\draw[monomorphism] (A) -- node[overlay,left](a){\footnotesize{$type_A$}} (TG);
\draw[monomorphism] (B) -- node[below left](b){\footnotesize{\ignore{$\leq$}}} node[below right](b){\ignore{=}} (TG);
\draw[monomorphism] (C) -- node[overlay,right]{\footnotesize{$type_C$}} (TG);}
\vspace{-0.4cm}\]}
\end{proof}
}

{ 
\begin{proof}[of Theorem \ref{thm:repair}]
In  \cite[Theorem 1]{Habel-Sandmann18a} the statement is proven for graphs. 
The statement also holds for typed graphs: 
For every morphism $a\colon A\injto C$, and every proper subgraph $B$ of $C$, define $type_B = type_C \circ \inc_B$, where for $B\subseteq C$, $\inc_B$ (short $i_B$) denotes the inclusion of $B$ in $C$. 
Then $type_A = type_B \circ \inc_A$ and the rules $B\dder C\in\R_a$ and $C\dder B\in\S_a$ consist of typed graph morphisms. In this way the graphs in the rules in $\R_a$ and $\S_a$ become typed. The typing of the conditions is a direct consequence.
\ignore{
\[\tikz[node distance=3em,shape=rectangle,outer sep=1pt,inner sep=2pt]{
\node(A){$A$};
\node(B)[strictly above right of=A]{$B$};
\node(C)[strictly below right of=B]{$C$};
\node(TG)[strictly above of=B]{$TG$};
\draw[monomorphism] (A) -- node[overlay,above](a){$a$} (C);
\draw[monomorphism] (A) -- node[overlay,above]{$i_A$} (B);
\draw[monomorphism] (B) -- node[overlay,above]{\footnotesize{$i_B$}} (C);
\draw[draw=white] (a) -- node[overlay]{=} (B);
\draw[monomorphism] (A) -- node[overlay,left](a){\footnotesize{$type_A$}} (TG);
\draw[monomorphism] (B) -- node[below left](b){\footnotesize{$=$}} node[below right](b){=} (TG);
\draw[monomorphism] (C) -- node[overlay,right]{\footnotesize{$type_C$}} (TG);}
\vspace{-0.4cm}\]}
\end{proof}
}

\begin{remark}\label{rem:generalized-proper}Theorem
\ref{thm:repair} could  be formulated for a larger class of conditions.
In Construction \ref{const:proper}, it is not necessary that the condition $c$ in a condition $\PE(a,c)$ (or $\PA(a,c)$) is proper. The construction of a repair program can be done provided that there exists a repair program for $c$. Properness only guarantees the existence of a repair program.\end{remark}

In the following, we consider conjunctions of conditions. 
We try to construct a repair program for a conjunction from the repair programs of the conditions in the conjunction.

Given a conjunction $d$ of conditions, we proceed as follows.
\begin{enumerate}
\item[(1)] Try to find a ``preserving sequentialization'' $d_1, \ldots, d_n$ of $d$.
\item[(2)] Construct repair programs $P_1,\ldots,P_n$ for $d_1, \ldots, d_n$. 
\item[(3)] Compose the repair programs to a repair program $P=\tuple{P_1;\ldots;P_n}$ for $d$.
\end{enumerate}

\ignore{\red The construction of repair programs relies on finding a preserving sequentialization of a given condition.  
The process is to consider all sequentializations and test it on preservation. In general, for a constraint and a program, it is undecidable if the program is constraint-preserving \cite[Theorem 6]{Sager19a}. Consequently, the problem of finding a preserving sequentialization is undecidable. 

In general, the programs cannot be applied in arbitrary order (see Example~\ref{exa:preserving1}) and sometimes cannot be ordered (see Example~\ref{exa:no-sequence}).
The sequential composition $P=\tuple{P_1;\ldots;P_n}$ of the programs may be no repair program for $d$:
Since $P_1$ is a repair program for $d_1$, after the application of $P_1$, the condition $d_1$ is satisfied.
Since $P_2$ is a repair program for $d_2$, after the application of $P_2$, the condition $d_2$ is satisfied.
But the condition $d_1$ may be not preserved. Thus, we have to look for a preserving sequentialization of the conditions resp. repair programs. 
}

1. For a conjunction of negative (positive) conditions, this is very simple. We take any sequentialization of the negative (positive) conditions and consider the sequential composition of the corresponding repair programs. 
This works because, for negative (positive) conditions, the repair programs are decreasing (increasing), and every sequence of decreasing (increasing) repair programs ``preserves'' the preceding negative (positive) conditions.

2. For conjunctions of universal conditions, this is not so easy: In general, not every sequentialization is preserving. Consider, e.g.,  the constraints $d_1=\PA(\onenode{},\PE\onenodeloop{}{})$ and $d_2=\PA(\onenode{},\PE\twonodesedge{}{}{})$ with the repair programs $P_1$ and $P_2 $ constructed according to Construction~\ref{const:proper}. For the sequentialization $d_1,d_2$, the program $P_2$  does not preserve the constraint $d_1$: For a node with loop\ satisfying $d_1$ the program $P_2$ adds a new node and a connecting edge. The new node does not have a loop, i.e., the resulting graph does not satisfy $d_1$. For the sequentialization $d_2,d_1$,  the program $P_1$  preserves the constraint $d_2$.

3. Moreover, sometimes there is no preserving sequentialization.
Consider, e.g., $d_1=\PA(\onenode{},\PE\twonodesedge{}{}{})$ and $d_2=\PA(\twonodesedge{}{}{},\PE\twonodesedge{}{}{}\onenode{})$ with the repair programs $P_1$ and $P_2 $ constructed according to Construction \ref{const:proper}.
The condition $d_1\wedge d_2$ is satisfiable: the graph $\threenodesthreecycle{}{}{}{}{}$ satisfies $d_1 \wedge d_2$.
Then $P_2$ does not preserve $d_1$ and $P_1$ does not preserve $d_2$:
Application of $P_2$ to $\twonodesedgeredge{}{}{}{}\models d_1$ yields to $\twonodesedgeredge{}{}{}{}\onenode{}\not\models d_1$ and application of $P_1$ to $\onenode{}\models d_2$ yields to $\twonodesedgeredge{}{}{}{}\not\models d_2$. 
\ignore{
\[\begin{array}{lll}
\onenode{}\DSLongdder_{P_1}\twonodesedgeredge{}{}{}{}&\DSLongdder_{P_2}\twonodesedgeredge{}{}{}{}\onenode{}&\not\models d_1\\
\onenode{}\DSLongdder_{P_2}\onenode{}&\DSLongdder_{P_1}\twonodesedgeredge{}{}{}{}&\not\models d_2\\
\end{array}\]}
\vspace{0.3cm}

For this proceeding, preservation of conditions is essential: Whenever a condition is satisfied, it shall be preserved in the following.

\begin{definition}[preservation]\label{def:preservation}
A program $P$ is \emph{$d$-preserving} if every rule in $P$ is $d$-preserving.
\end{definition}

\begin{lemma}[preservation]\label{lem:preserve} 
For every program $P$ and every $d$, there is a $d$-preserving program~$P^d$.
\end{lemma}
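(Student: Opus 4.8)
The plan is to construct $P^d$ by structural induction on the program $P$, using at the base case the $\cpres$ construction from Lemma~\ref{lem:pres} (Construction~\ref{const:pres}), and at the inductive steps showing that $d$-preservation is compositional with respect to all four program-forming operations. So first I would handle the base case: if $P = \prule$ is a typed rule, set $P^d = \tuple{\prule, \cpres(\prule,d)}$; by Lemma~\ref{lem:pres}, for every $g \dder_{\tuple{\prule,\cpres(\prule,d)}} h$ we have $g \models d \Rightarrow h \models d$, so $P^d$ is $d$-preserving, and by Definition~\ref{def:preservation} (``every rule in $P^d$ is $d$-preserving'') this is exactly what is required. One subtlety: $\cpres$ is stated for rules, and we should check it extends to rules with interfaces, but since preservation is a statement about the underlying plain rule plus application condition and the interfaces only restrict matches, adding an interface cannot destroy preservation.

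Next I would treat the inductive cases, observing that Definition~\ref{def:preservation} makes $d$-preservation a property of the \emph{set of rules occurring} in a program, so the operation is essentially ``replace every rule $\prule$ occurring anywhere in $P$ by $\tuple{\prule, \cpres(\prule,d)}$.'' Concretely: for $P = \{Q,R\}$ set $P^d = \{Q^d, R^d\}$; for $P = \tuple{Q;R}$ set $P^d = \tuple{Q^d; R^d}$; for $P = \try Q$ set $P^d = \try (Q^d)$; and for $P = Q{\downarrow}$ set $P^d = (Q^d){\downarrow}$. In each case the rules occurring in $P^d$ are exactly the $d$-preserved versions of the rules occurring in $Q$ (and $R$), which are $d$-preserving by the induction hypothesis, so $P^d$ is $d$-preserving by definition. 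I would also remark that interfaces are preserved by these replacements (the interface morphisms $x,y$ of each rule are untouched, only $\ac$ is strengthened), so the membership conditions $\Prog(X)$ of Definition~\ref{def:prog} are respected and $P^d$ is a well-formed program with the same interface as $P$.

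The main thing to be careful about — and what I would flag as the real content rather than a routine obstacle — is that Definition~\ref{def:preservation} is the \emph{right} definition to make this work, i.e., that ``$P$ is $d$-preserving'' genuinely follows from ``every rule occurring in $P$ is $d$-preserving'' at the semantic level. This amounts to checking that if every single direct transformation step in $\psem{P}$ preserves $d$, then every derivation $g \dder_P h$ preserves $d$; this is immediate for sequential composition and nondeterministic choice, follows for $\try P$ because the new triples $\tuple{g,g,\id}$ trivially preserve $d$, and for $P{\downarrow}$ follows by induction on the number of iterations using that $\Fix$ only post-composes with an interface morphism (an $\M$-morphism) and the semantics of $P^*$ is built from finitely many $\Fix$-steps — none of which can violate $d$ once each underlying step does not. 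Since the statement of the lemma only asserts \emph{existence} of $P^d$, I would not need to argue that $P^d$ still terminates or still ``repairs'' anything; those properties are addressed separately, so the proof is essentially the induction above plus the appeal to $\cpres$.
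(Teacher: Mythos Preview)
Your proposal is correct and matches the paper's approach: the paper's construction is exactly ``replace all rules $\prule$ in $P$ by $\tuple{\prule,\cpres(\prule,d)}$,'' and its one-line proof is that by Construction~\ref{const:pres} every rule in $P^d$ is $d$-preserving, hence $P^d$ is $d$-preserving by Definition~\ref{def:preservation}. Your structural induction spells this out more carefully, and your final paragraph about semantic preservation under the program combinators goes beyond what the lemma requires (since Definition~\ref{def:preservation} is purely syntactic), but it does no harm.
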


\begin{construction}$P^d$: replace all rules $\prule$ in $P$ by $\tuple{\prule,\cpres(\prule,d)}$ (Construction~\ref{lem:pres}).
\end{construction}

\begin{proof}
By Construction \ref{lem:pres}, all rules in $P^d$ are $d$-preserving, thus, the program is $d$-preserving.
\end{proof}

In the following, we consider a sequence of conditions together with their repair programs.
\begin{convention}
Let $ds = d_1, \ldots, d_n$, $Ps = P_1, \ldots P_n$, and $P_i$ be a repair program for $d_i$, respectively.
\end{convention}

A~sequence of programs is preserving if for each natural number~$k$, the respective repair program $P_k$ preserves all preceding conditions, i.e. $P_1$ is a repair program for $d_1$, $P_2$ is a repair program for $d_2$ and $d_1$-preserving, $P_3$ is a repair program for $d_3$ and $d_1 \wedge d_2$-preserving, and so on\longv{(see Figure~\ref{fig:preservation})}.  

\begin{definition}[preservation]
The sequence $Ps$ is \emph{$ds$-preserving} (and the sequence $ds$ is \emph{preserving}) if, for $k=2,\ldots,n$, $P_k$ is $\wedge_{i=1}^{k{}-1} d_i$-preserving. 
\end{definition}

We show that sequences of repair programs for sequences of conditions can be sequentially composed to a repair program for the conjunction, provided that the sequences of conditions is preserving.

\begin{lemma}[preserving repair]\label{lem:presrepair}
{If} $d_1,\ldots, d_n$-preserving, {then} $\tuple{P_1;\ldots;P_n}$ is a repair program for $\wedge_{i=1}^n d_i$. 
\end{lemma}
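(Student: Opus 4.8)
The plan is to prove the statement by induction on $n$, using the semantics of sequential composition together with the preservation property. For $n=1$ the claim is just the hypothesis that $P_1$ is a repair program for $d_1$. For the induction step, suppose the sequence $d_1,\ldots,d_n$ is preserving and write $P=\tuple{P_1;\ldots;P_{n-1}}$ and $d=\wedge_{i=1}^{n-1}d_i$. The subsequence $d_1,\ldots,d_{n-1}$ is again preserving (the preservation conditions for $P_2,\ldots,P_{n-1}$ are unchanged), so by the induction hypothesis $P$ is a repair program for $d$. It then remains to show that $\tuple{P;P_n}$ is a repair program for $d\wedge d_n$.

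For this I would verify the two defining properties of a repair program (Definition~\ref{def:repair}). \emph{Existence:} for an arbitrary graph $G$, existence of $G\dder_P H'$ comes from $P$ being a repair program, and then existence of $H'\dder_{P_n}H$ comes from $P_n$ being a repair program; by the semantics of $\tuple{P;P_n}$ this yields $G\dder_{\tuple{P;P_n}}H$. \emph{Correctness:} take any $G\dder_{\tuple{P;P_n}}H$; by the semantics of sequential composition this factors as $G\dder_P H'\dder_{P_n}H$ for some intermediate graph $H'$. Since $P$ is a repair program for $d$, we get $H'\models d$. Since $P_n$ is a repair program for $d_n$, we get $H\models d_n$. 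Since the sequence is preserving, $P_n$ is $d$-preserving (here $d=\wedge_{i=1}^{n-1}d_i$ is exactly the preceding conjunction), so from $H'\models d$ and $H'\dder_{P_n}H$ we conclude $H\models d$. Hence $H\models d\wedge d_n=\wedge_{i=1}^n d_i$, as required.

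The argument for the constraint case (interface $\emptyset$) is the one just sketched; for the version with a nonempty interface one threads the partial interface morphisms through in the obvious way, using that $\psem{\tuple{P;P_n}}$ composes the interface morphisms and that the notions of ``repair program for a condition'' and ``$d$-preserving'' in the excerpt are stated relative to $h\circ i$. I expect no real obstacle here: the only point needing a little care is that the induction hypothesis is applied to the \emph{prefix} sequence and that $d$-preservation of $P_n$ is invoked with $d$ being precisely the conjunction of \emph{all} earlier conditions, which is guaranteed by Definition~\ref{def:preservation} of a preserving sequence; everything else is a direct unfolding of the program semantics from Definition~\ref{def:prog}.
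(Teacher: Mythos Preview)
Your proof is correct and follows essentially the same inductive approach as the paper: induct on $n$, use the induction hypothesis on the prefix to get $H'\models\wedge_{i<n}d_i$, then use that $P_n$ is both a repair program for $d_n$ and $\wedge_{i<n}d_i$-preserving. You are slightly more careful than the paper in that you explicitly separate the \emph{existence} clause of Definition~\ref{def:repair} from the \emph{correctness} clause, whereas the paper only spells out correctness; and your base case rightly appeals to the standing assumption that $P_1$ is a repair program for $d_1$ rather than to Theorem~\ref{thm:repair}.
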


\begin{proof}
By induction on the number $n$ of conditions with the repair programs. 
For $n{=}1$, by Theorem~\ref{thm:repair}, $P_1$ is a repair program for $d_1$.
{Inductive hypothesis:} If $P_2,\ldots,P_n$ is $d_1,\ldots,d_n$-preserving, then $P=\tuple{P_1;\ldots;P_n}$ is a repair program for the conjunction $\wedge_{i=1}^n d_i$. 
{Inductive step:} For $n=n+1$, let $P_2,\ldots,P_{n+1}$ be $d_1,\ldots,d_{n{+}1}$-preserving. 
Then $P_2,\ldots,P_n$ is $d_1,\ldots d_n$-preserving and, by induction hypothesis, the program $P=\tuple{P_1;\ldots;P_n}$ is a repair program for the conjunction $d=\wedge_{i=1}^n d_i$. Moreover, $P_{n+1}$ is a $d$-preserving repair program for $d_{n+1}$. 
Consequently, for every transformation $g\dder_{P} g_n \dder_{P_{n+1}}h$, $g_n\models d$ and $h \models \wedge_{i=1}^{n+1} d_i$.
Thus, $\tuple{P_1;\ldots;P_{n+1}}$ is a repair program for $\wedge_{i=1}^{n+1} d_i$. 
\end{proof}

\ignore{\red
\begin{definition}
A sequence is $ds = d_1, \ldots, d_n$ is \emph{homogeneous}, if all $d_i$ are either positive, or negative, or preserving.   
\end{definition}
}

A sequence $ds$ of conditions is \emph{negative (or positive, or universal)} if all conditions in it have the property.

For a sequence of negative (or positive) conditions, the sequence $Ps$ of repair programs is $ds$-preserving.

\begin{fact}\label{fac:neg}If $ds$ is negative (or positive), then $Ps$ is $ds$-preserving.
\end{fact}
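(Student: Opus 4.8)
The plan is to reduce the claim about sequences of negative (positive) conditions to two facts already available: Fact~\ref{fac:decrease}, which says that the repair program for a negative condition is decreasing and for a positive condition is increasing, and the definition of $ds$-preservation, which only asks that each $P_k$ preserve the \emph{conjunction} of the preceding conditions. So it suffices to show: if $P$ is a decreasing program and $d$ is a negative condition, then $P$ is $d$-preserving (and dually, if $P$ is increasing and $d$ is positive, then $P$ is $d$-preserving). Since preservation of a conjunction follows from preservation of each conjunct — if $g \models \bigwedge_{i} d_i$ and each step preserves every $d_i$, then $g' \models \bigwedge_i d_i$ — the general statement then drops out immediately by applying this to $d_1,\dots,d_{k-1}$ for each $k$.

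First I would fix the negative case and let $d = \NE a$ with $a\colon A \injto C$ a real inclusion, and let $\prule = \tuple{L \injlto K \injto R,\ac}$ be a decreasing rule, so $L \supset K \cong R$; every direct transformation $g \dder_{\prule} h$ with $g \colon X \injto G$, $h\colon Y\injto H$ deletes elements and adds none, hence $H$ is (isomorphic to) a subgraph of $G$ with the same typing. The key observation is that $\NE a$ is an \emph{anti-monotone} property: if $G \models \NE a$, i.e. there is no injective morphism $C \injto G$, and $H \hookrightarrow G$, then there can be no injective morphism $C \injto H$ either, since composing with the inclusion $H \injto G$ would yield one into $G$. Therefore $H \models \NE a$, so $\prule$ is $\NE a$-preserving, and a program all of whose rules are decreasing is $\NE a$-preserving by Definition~\ref{def:preservation}. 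The positive case is the exact dual: $\PE a$ is monotone, an increasing rule embeds $G$ into $H$, and an injective morphism $C \injto G$ composes with $G \injto H$ to witness $H \models \PE a$.

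Putting it together: suppose $ds = d_1,\dots,d_n$ is negative. By Theorem~\ref{thm:repair}(3) and Fact~\ref{fac:decrease}, each $P_i$ is a decreasing repair program for $d_i$. Fix $k \in \{2,\dots,n\}$; by the paragraph above each $P_k$ is $d_j$-preserving for every $j < k$, hence $P_k$ is $\bigwedge_{i=1}^{k-1} d_i$-preserving (a step preserving every conjunct preserves the conjunction, since satisfaction of a conjunction is satisfaction of each conjunct). Thus $Ps$ is $ds$-preserving. The positive case is verbatim the same with ``decreasing'' replaced by ``increasing'' and Theorem~\ref{thm:repair}(2) in place of (3). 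The only mildly delicate point — and the one I would state carefully rather than wave at — is that ``decreasing'' must genuinely entail $H \hookrightarrow G$ at the level of typed graphs, including that the relevant pushout complement exists and the comatch lands in a subgraph; but this is exactly what $L \supset K \cong R$ gives in an $\M$-adhesive category with the chosen $\M$, so no real obstacle remains.
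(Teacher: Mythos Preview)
Your proof is correct and follows the same approach as the paper's own argument, which is the single sentence ``For a sequence of negative (or positive) conditions, the underlying repair program is decreasing (or increasing), thus, the sequence is preserving.'' You have simply spelled out the step the paper leaves implicit---namely, that $\NE a$ is anti-monotone under subgraph inclusion and $\PE a$ is monotone, so decreasing (resp.\ increasing) rules preserve negative (resp.\ positive) basic conditions---and then passed to conjunctions in the obvious way.
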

\ignore{
\begin{proof}
For a sequence of negative (or positive) conditions, the underlying repair program is decreasing (or increasing), thus, the sequence is preserving. 
\end{proof}
}

In the following, we consider a conjunction of negative and universal conditions. Let $e_1$ be the conjunction of negative and $e_2$ the conjunction of universal conditions. By Fact \ref{fac:neg}, every sequence $ds_1$ of negative conditions is preserving and, by Lemma \ref{lem:presrepair}, the sequential composition $Q_1=\tuple{P_{1};\ldots;P_k}$ of the repair programs forms a repair program for $e_1$. 
In general, \ignore{By {\red Example \ref{exa:preserving1}},} not every sequentialization $ds_2$ of universal conditions is preserving. We have to require preservation.
In the case of preservation, the sequential composition $Q_2=\tuple{P_{k+1};\ldots;P_n}$ of the repair programs forms a repair program for $e_2$. 
But the repair program $Q_2$ may be not $e_1$-preserving. By Lemma \ref{lem:repairing} below, $Q_2$ can be modified to an $e_1$-preserving repair program $Q_2^{\prime e_1}$ for $e_2$. The idea is to delete all occurrences of the morphism $a$ of the universal condition $\PA (a,c)$, which violate the condition $c$. Given a universal condition, we mark an occurrence of the morphism violating the condition, then the occurrence of the morphism at that position is deleted. To mark the morphism at that position, we modify the left interface to the identity of the codomain of the morphism. For a conjunction of universal conditions, an $e_1$-preserving repair program can be constructed from the $e_1$-preserving program by destroying all non-repaired occurrences of all the universal conditions.
By Lemma \ref{lem:presrepair}, the program $\tuple{Q_1,\Q_2^{\prime e_1}}$ becomes a repair program for $e_1\wedge e_2$.
 
A conjunction is \emph{negative (universal)} if all conditions in the conjunction are negative (universal).
\begin{lemma}[preserving repair program]\label{lem:repairing}
If $Q_2$ is a repair program for a preserving universal conjunction~$e_2$ and $e_1$ is a negative conjunction,  
then there is an $e_1$-preserving repair program $Q_2^{\prime e_1}$ for $e_2$.
\end{lemma}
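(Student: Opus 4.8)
The plan is to set $Q_2^{\prime e_1} = \tuple{Q_2^{e_1}; D}$, where $Q_2^{e_1}$ is the $e_1$-preserving version of $Q_2$ provided by Lemma~\ref{lem:preserve} (each rule $\prule$ of $Q_2$ replaced by $\tuple{\prule,\cpres(\prule,e_1)}$), and $D$ is an added \emph{destruction} phase removing precisely the occurrences $Q_2^{e_1}$ could not repair without violating $e_1$. Write $e_2 = \wedge_i d_i$ with $d_i = \PA(a_i,c_i)$, $a_i\colon A_i\injto C_i$ a real inclusion (so $A_i\subsetneq C_i$, and $A_i = A$, the common interface). For each conjunct $d_i$ let $\mathrm{del}_i = \tuple{a_i,\tuple{C_i\injlto A_i\injto A_i},\neg c_i,\id_{A_i}}$; that is, $\mathrm{del}_i$ matches an occurrence of $C_i$ at which $c_i$ fails — which is exactly the left application condition $\neg c_i$ over the left-hand side $C_i$ — and deletes the elements of $C_i\setminus A_i$, keeping the interface $A_i$ fixed. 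Then $D$ is the as-long-as-possible iteration of the rule set $\{\mathrm{del}_i\}_i$, applied with the dangling edges operator (so that deleting nodes of $C_i\setminus A_i$ also removes incident dangling edges). Since $a_i$ is real, $C_i\setminus A_i\neq\emptyset$, so each application genuinely destroys the matched occurrence; for constraints $A_i = \emptyset$ and $\mathrm{del}_i$ just deletes a bad occurrence of $C_i$.

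First I would check that $Q_2^{\prime e_1}$ admits a transformation from every input. Since $Q_2$ is a repair program it admits a transformation from every input, and since its constituent programs are as-long-as-possible iterations (Construction~\ref{const:proper}(5)), equipping their rules with the additional application conditions $\cpres(\cdot,e_1)$ — which only restrict applicability, hence only shorten those iterations — does not destroy this property; so $Q_2^{e_1}$ admits a transformation from every input, as does the iteration $D$. The interfaces of $Q_2^{e_1}$ and $D$ coincide (both $A$), so by the semantics of sequential composition $Q_2^{\prime e_1}$ admits a transformation from every input.

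Next I would verify that $Q_2^{\prime e_1}$ is a repair program for $e_2$. Any transformation $g\dder_{Q_2^{\prime e_1}}h$ factors, by the semantics of ``$;$'', as $g\dder_{Q_2^{e_1}}g'\dder_{D}h$; let $H$ be the result graph of $h$. By the semantics of as-long-as-possible iteration no rule of $D$ applies to $H$, and since the dangling edges operator handles the dangling condition, the only obstruction to applicability of $\mathrm{del}_i$ is the absence of a match: so for each $i$ there is no $q\colon C_i\injto H$ extending the interface occurrence with $q\models\neg c_i$, which is exactly $H\models\PA(a_i,c_i) = d_i$. Hence $H\models\wedge_i d_i = e_2$. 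Together with the existence part this proves the claim. Note that it uses only that $D$ runs last, not that $Q_2$ repairs $e_2$; the role of $Q_2^{e_1}$ is merely to repair what can be repaired while staying $e_1$-preserving.

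Finally, for $e_1$-preservation it suffices by Definition~\ref{def:preservation} to show every rule in $Q_2^{\prime e_1}$ is $e_1$-preserving: those of $Q_2^{e_1}$ are $\tuple{\prule,\cpres(\prule,e_1)}$, hence $e_1$-preserving by Lemma~\ref{lem:pres}, and those of $D$ (including those introduced by the dangling edges operator) are decreasing and therefore preserve each negative $\NE b_j$ — the result graph embeds into the start graph, so acquires no new occurrence of $b_j$ — hence preserve $e_1 = \wedge_j\NE b_j$. The step I expect to be the main obstacle is getting the destruction phase exactly right: formalising $\mathrm{del}_i$ with the correct left interface and the encoding of $\neg c_i$ as a left application condition over $C_i$, its interaction with the dangling edges operator, and — conceptually — that the rules $\mathrm{del}_i$ must be iterated \emph{jointly} rather than conjunct by conjunct, since destroying a bad occurrence of $d_i$ can turn a good occurrence of $d_j$ into a bad one; that every step is decreasing is exactly what still yields termination of the joint iteration and, at termination, satisfaction of the full conjunction $e_2$.
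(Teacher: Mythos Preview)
Your proof is correct and takes a genuinely different route from the paper's. The paper builds $Q_2^{\prime e_1}$ conjunct by conjunct: for each $d_i = \PA(a_i,c_i)$ it sets $P_i^{\prime e_1} = \tuple{P_i^{e_1}; P_{\NE a_i}^{\id}}$, where the destruction phase $P_{\NE a_i}^{\id} = \tuple{\select(a_i,\neg c_i);\S_{a_i}^{\id}}\downarrow$ dismantles each remaining violating occurrence of $a_i$ one element at a time (using the rules of $\S_{a_i}$ with left interface replaced by $\id_{C_i}$), and then takes the sequential composition $\tuple{P_1^{\prime e_1};\ldots;P_n^{\prime e_1}}$. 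Correctness for the whole conjunction is then argued via Lemma~\ref{lem:presrepair}, invoking the hypothesis that the sequence $d_1,\ldots,d_n$ is preserving.

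You instead run all of $Q_2^{e_1}$ first and then a \emph{single joint} destruction iteration over coarse deletion rules that remove all of $C_i\setminus A_i$ in one step. This makes the $e_2$-argument immediate: at termination of the joint iteration no $\mathrm{del}_i$ applies, so every occurrence of each $a_i$ satisfies $c_i$. Crucially, your correctness argument never invokes the ``preserving'' hypothesis on $e_2$ (nor, as you note, even that $Q_2$ repairs $e_2$); the joint iteration absorbs any cross-conjunct damage that destruction causes, and termination follows simply from every step being strictly decreasing. What the paper's interleaved, element-by-element construction buys in exchange for the extra bookkeeping is finer granularity, in line with the ``maximally preserving'' aim pursued elsewhere in the paper. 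What your construction buys is a shorter and more robust proof that sidesteps the delicate point of whether the later $P_k^{\prime e_1}$---which now contain decreasing destruction steps---still preserve the earlier universal conditions $d_j$.
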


\begin{construction}\label{const:repairing} For a repair program $P$ for $\PA(a,c)$, let 
$P^{\prime e_1}=\tuple{P^{e_1}; P_{\NE a}^\id}$
where $P_{\NE a}^\id = \tuple{\select(a,\neg c);\S_a^\id}\downarrow$ and $\S_a^\id$ is obtained from $\S_a$ by replacing the left interface morphism $a\colon A\injto C$ by the identity $\id\colon C\injto C$.
%
For a  preserving conjunction $e_2$ of universal conditions with sequentialization $d_1,\ldots,d_n$ and repair program $Q_2=\tuple{P_1;\ldots;P_n}$, let $Q_2^{\prime e_1}=\tuple{P_1^{\prime e_1};\ldots;P_n^{\prime e_1}}$.
\end{construction}

\begin{proof}1. For a universal condition $\PA(a,c)$ with $a\colon A\injto C$, the programs $P$ and $P^{e_1}$ are increasing. 
By the $e_1$-preserving application condition, whenever the increasing program $P^{e_1}$ is not a repair program, the condition $c$ is not satisfied, and the decreasing program $P_{\NE a}^\id$ becomes applicable and destroys all occurrences that do not satisfy the condition $c$.
Since $e_1$ is a conjunction of negative
conditions, $P_{\NE a}^\id$ is $d_1$-preserving. Consequently, $P^{\prime e_1}$ is $e_1$-preserving. Then $P^{\prime e_1}$ is a repair program for $d$: For every occurrence of~$a$, the occurrence is either (1) repaired by $P^{e_1}$ or (2) destroyed by $P_{\NE a}^\id$.

2.~By assumption, $d_1,\ldots,d_n$ is preserving. Moreover, for $i=1,\ldots,n$, $P_i$ is repair program for $d_i$ and, by Lemma~\ref{lem:repairing}.1, $P_i^{\prime e_1}$ is an $e_1$-preserving repair program for~$d_i$.
Since $d_1,\ldots,d_n$ is preserving and by Lemma~\ref{lem:presrepair}, $Q_2^{\prime e_1} = \tuple{P_1^{\prime e_1};\ldots;P_n^{\prime e_1}}$ is a repair program for $\wedge_{i=1}^{n}d_i=e_2$. Since all programs in the sequential composition are $e_1$-preserving, the program $Q_2^{\prime e_1}$ is $e_1$-preserving.
Consequently, every transformation $g\dder_Q m$ is of the form $g\dder_{Q_1}h\dder_{Q_2^{\prime e_2}} m$. Since $Q_1$ is a repair program for $e_1$, $h\models e_1$. Since $Q_2^{\prime e_2}$ is the $e_1$-preserving repair program for $e_2$, $m\models e_1\wedge e_2$. Thus, $Q$ is a repair program for $e_1\wedge e_2$. 
\end{proof}


\begin{fact}[composition]\label{fac:composition}For arbitrary conditions $e_1,e_2$, the following holds.
If $Q_1$ is a repair program for $e_1$ and $Q_2^{\prime e_1}$ be an $e_1$-preserving repair program for $e_2$, then $\tuple{Q_1,Q_2^{\prime e_1}}$ a repair program for $e_1\wedge e_2$.
\end{fact}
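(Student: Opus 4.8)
The plan is to unfold Definition~\ref{def:repair} and verify, for the program $\tuple{Q_1;Q_2^{\prime e_1}}$, the two requirements separately: existence, i.e.\ for every graph $G$ there is a derivation $G\dder_{\tuple{Q_1;Q_2^{\prime e_1}}}H$, and correctness, i.e.\ every such $H$ satisfies $e_1\wedge e_2$. In fact this Fact is essentially the instance $n=2$ of Lemma~\ref{lem:presrepair} (take $d_1=e_1$, $d_2=e_2$, $P_1=Q_1$, $P_2=Q_2^{\prime e_1}$); the only difference is that here $e_1,e_2$ need not be proper, since the two repair programs are assumed given rather than produced by Theorem~\ref{thm:repair}. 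So I would either cite that instance or replay the short argument, which I sketch now.

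For existence, fix $G$. Since $Q_1$ is a repair program for $e_1$, there is $G\dder_{Q_1}H_1$ for some $H_1$; since $Q_2^{\prime e_1}$ is a repair program for $e_2$, applied at $H_1$ there is $H_1\dder_{Q_2^{\prime e_1}}H$ for some $H$. By clause~(3) of the program semantics (Definition~\ref{def:prog}), which composes exactly those computations whose intermediate graph coincides, we get $G\dder_{\tuple{Q_1;Q_2^{\prime e_1}}}H$; here I use that $Q_1$ and $Q_2^{\prime e_1}$ are constraint repair programs, so their interfaces are trivial and $\tuple{Q_1;Q_2^{\prime e_1}}$ is a well-formed program. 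For correctness, let $G\dder_{\tuple{Q_1;Q_2^{\prime e_1}}}H$. By the same semantics clause the derivation factors as $G\dder_{Q_1}H_1\dder_{Q_2^{\prime e_1}}H$ for some intermediate $H_1$. Because $Q_1$ is a repair program for $e_1$, $H_1\models e_1$; because $Q_2^{\prime e_1}$ is a repair program for $e_2$, $H\models e_2$; and because $Q_2^{\prime e_1}$ is $e_1$-preserving, the property $H_1\models e_1$ is carried along $H_1\dder_{Q_2^{\prime e_1}}H$, hence $H\models e_1$. Therefore $H\models e_1\wedge e_2$, as required.

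The only non-cosmetic point, and the step I expect to be the main obstacle, is the last one: lifting \emph{rule-level} $e_1$-preservation (Definition~\ref{def:preservation}: every rule occurring in $Q_2^{\prime e_1}$ is $e_1$-preserving in the sense of Construction~\ref{const:pres}) to the statement that $e_1$ is preserved along an \emph{arbitrary} computation of $Q_2^{\prime e_1}$, which may involve $\{\cdot,\cdot\}$, $\tuple{\cdot;\cdot}$, $\try$, and $\cdot{\downarrow}$. This follows by a routine induction on the structure of programs along Definition~\ref{def:prog}: a single rule step preserves $e_1$ by hypothesis; $\{P,Q\}$ and $\try P$ only perform rule steps of $P$ (or the identity); $P{\downarrow}$ iterates such steps via $\Fix(P)$, which introduces no new rules; and $\tuple{P;Q}$ concatenates two $e_1$-preserving computations. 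This is precisely the reasoning already used implicitly in the proof of Lemma~\ref{lem:presrepair} (``for every transformation $g\dder_P g_n\dder_{P_{n+1}}h$, $g_n\models d$ and $h\models\wedge_{i=1}^{n+1}d_i$''), so I would state it once as an auxiliary observation — or simply note it is subsumed by that lemma — after which the Fact is immediate.
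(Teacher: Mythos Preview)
Your proposal is correct and matches the paper's approach: the paper actually gives no separate proof for this Fact, treating it as immediate from the reasoning behind Lemma~\ref{lem:presrepair} (and essentially the same argument is inlined at the end of the proof of Lemma~\ref{lem:repairing}). Your identification of the Fact as the $n=2$ instance of Lemma~\ref{lem:presrepair}, together with the explicit lifting of rule-level $e_1$-preservation to program-level preservation, is exactly the intended argument and in fact spells out more than the paper does.
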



\longv{
\begin{remark}
In general, the program ${P'_2}^{d_1}$ is not increasing.
\end{remark}
}

A sequence $ds = d_1, \ldots, d_n$ is negative (or positive, or existential, or universal) if all $d_i$ are negative (or positive, or existential, or universal).
In the following, $ds_1 = d_1, \ldots, d_k$, $ds_2 = d_{k+1}, \ldots, d_n$ with conjunction $e_1 = \wedge_{i=1}^{k} d_i$ and $e_2 = \wedge_{i=k+1}^{n} d_i$.

The following theorem says under which conditions a repair program for a conjunction of conditions can be constructed from the repair programs of its components. 

\begin{theorem}[Repair II]\label{thm:repair2}There is a repair program $P$ for a conjunction $d=\wedge_{i=1}^{n} d_i$ of conditions provided that $d$ is satisfiable, there are repair programs $P_1, \ldots, P_n$ for $d_1, \ldots, d_n$, respectively, and there is a sequentialization $ds = d_1, \ldots, d_n$, and
\begin{enumerate}
\item $ds$ is negative, or positive, or preserving,
\item $ds_1$ is positive, and $ds_2$ is existential (or universal) \& preserving. 
\item $ds_1$ is negative, and $ds_2$ is universal \& preserving.
\end{enumerate}
\end{theorem}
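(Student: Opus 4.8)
The plan is to prove the three cases of Theorem~\ref{thm:repair2} by reducing each to the combination of Lemma~\ref{lem:presrepair}, Lemma~\ref{lem:repairing}, and Fact~\ref{fac:composition}, together with the structural facts about increasing/decreasing programs (Fact~\ref{fac:decrease}, Fact~\ref{fac:neg}). Throughout, the existence of the repair programs $P_1,\ldots,P_n$ is given by hypothesis (or by Theorem~\ref{thm:repair} if the $d_i$ are proper), and satisfiability of $d$ guarantees that we never construct a program equivalent to $\Abort$, so the ``$\exists$'' part of the repair-program definition is not vacuous.

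First I would dispatch case~(1). If $ds$ is preserving, then $\langle P_1;\ldots;P_n\rangle$ is a repair program for $d=\wedge_{i=1}^n d_i$ directly by Lemma~\ref{lem:presrepair}. If $ds$ is negative (or positive), then by Fact~\ref{fac:neg} the sequence $Ps$ of repair programs is automatically $ds$-preserving, so Lemma~\ref{lem:presrepair} again gives that $\langle P_1;\ldots;P_n\rangle$ repairs $d$; this is exactly the observation recorded earlier under item~1 of the informal discussion (a sequence of decreasing, resp.\ increasing, programs preserves the preceding negative, resp.\ positive, conditions, because decreasing programs only remove structure and hence cannot create a forbidden pattern, and increasing programs only add structure and hence cannot destroy a witnessed existential).

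Next I would handle case~(3), which is the one that needs the full machinery. Write $e_1=\wedge_{i=1}^k d_i$ (negative) and $e_2=\wedge_{i=k+1}^n d_i$ (universal and preserving). By Fact~\ref{fac:neg} and Lemma~\ref{lem:presrepair}, $Q_1=\langle P_1;\ldots;P_k\rangle$ is a repair program for $e_1$. Since $ds_2$ is preserving, Lemma~\ref{lem:repairing} applies with the negative conjunction $e_1$ and yields an $e_1$-preserving repair program $Q_2^{\prime e_1}$ for $e_2$. Then Fact~\ref{fac:composition} (with the roles $e_1,e_2$) gives that $\langle Q_1;Q_2^{\prime e_1}\rangle$ is a repair program for $e_1\wedge e_2=d$, and this is the program $P$ we want.

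Finally, case~(2): here $e_1=\wedge_{i=1}^k d_i$ is positive and $e_2=\wedge_{i=k+1}^n d_i$ is existential (or universal) and preserving. By Fact~\ref{fac:neg} and Lemma~\ref{lem:presrepair}, $Q_1=\langle P_1;\ldots;P_k\rangle$ is a repair program for $e_1$. By Fact~\ref{fac:decrease} the repair programs for positive conditions are increasing, so $Q_1$ is increasing; hence it is $e_2$-harmless only in the wrong direction, so instead I apply Fact~\ref{fac:decrease} the other way: the repair programs $P_{k+1},\ldots,P_n$ for existential or universal conditions are increasing, so $Q_2=\langle P_{k+1};\ldots;P_n\rangle$ (a repair program for $e_2$ by Lemma~\ref{lem:presrepair}, using that $ds_2$ is preserving) is increasing, and an increasing program preserves every positive condition; thus $Q_2$ is already $e_1$-preserving, so $Q_2^{\prime e_1}=Q_2$. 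Fact~\ref{fac:composition} then gives that $\langle Q_1;Q_2\rangle$ repairs $e_1\wedge e_2=d$. The main obstacle I expect is bookkeeping in case~(2) about which program must preserve which conjunction and in which direction the ``increasing preserves positive'' argument runs; once that is pinned down via Fact~\ref{fac:decrease}, each case is a short assembly of the earlier lemmas, and the only genuinely substantive inputs are Lemma~\ref{lem:presrepair} (sequential composition of a preserving sequence) and Lemma~\ref{lem:repairing} (turning a universal repair program into a negative-conjunction-preserving one by destroying the non-repaired occurrences).
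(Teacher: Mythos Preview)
Your proof is correct and follows essentially the same route as the paper: case~(1) via Fact~\ref{fac:neg} and Lemma~\ref{lem:presrepair}, case~(2) by observing that $Q_2$ is increasing (Fact~\ref{fac:decrease}) and hence already $e_1$-preserving so that $\langle Q_1;Q_2\rangle$ repairs $e_1\wedge e_2$, and case~(3) via Lemma~\ref{lem:repairing} and Fact~\ref{fac:composition}. The only cosmetic differences are the order in which you treat the cases and the brief detour in case~(2) before settling on the fact that it is $Q_2$ (not $Q_1$) whose increasing-ness is the relevant preservation argument.
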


\begin{construction}
\begin{enumerate}
\item For negative (or positive, or preserving) $ds$, let $P=\tuple{P_1;\ldots;P_n}$.
\item For positive  $ds_1$, universal (or existential) \& preserving $ds_{2}$, let $P=\tuple{Q_1;Q_2}$.
\item For negative $ds_1$, universal \& preserving $ds_{2}$, let 
$P=\tuple{Q_1;Q_2^{\prime e_1}}$.
\end{enumerate}
where $P_1,\ldots,P_n$ are repair programs for $d_1,\ldots,d_n$, respectively, $Q_1{=}\tuple{P_1;\ldots;P_k}$, $Q_2=\tuple{P_{k{+}1};\ldots;P_n}$, and $Q_{2}^{\prime e_1}=\tuple{P_{k{+}1}^{\prime e_1};\ldots;P_n^{\prime e_1}}$ where $e_1{=}\wedge_{i=1}^k~d_i$ and $e_2=\wedge_{i=k+1}^n d_i$.
\end{construction}

\begin{example}\label{exa:conjunctive}
Consider the constraints 
$d_1 = \NE(\;\scalebox{1}{\atmostonecontainerlab{Pl}{Tk}{Pl}{\token}{\token}}\;)$ and 
$d_2=\PA (\scalebox{1}{\classlab{Pl}}, \PE (\scalebox{1}{\containmentedgelab{Pl}{Tk}{\token}}))$
(see Example \ref{exa:repairprogram}). 
By Theorem~\ref{thm:repair}, there are repair programs 
\[\begin{array}{lcl}
P_1 &= &\tuple{\;\scalebox{1}{\atmostonecontainerlab{Pl}{Tk}{Pl}{\token}{\token}} \dder \scalebox{0.8}{
\begin{tikzpicture}[node distance=2em]
\node[node] (A) {$\mathrm{Tk}$};
\node[node,strictly right of=A] (B) {$\mathrm{Pl}$};
\node[node,strictly left of=A] (C) {$\mathrm{Pl}$};
\draw[contain] (B) to node[above] {\footnotesize $\mathrm{\token}$} (A);
\end{tikzpicture}}\;}\downarrow\\
P_2&=&\tuple{\select(a, \neg c);\try \R_b; \unselect(a}\downarrow\\
\end{array}\] 
where $a\colon\emptyset\injto\classlab{Pl}$, $c=\PE{\classlab{Pl}}\injto \containmentedgelab{Pl}{Tk}{\token}$, and $\R_b$ as in Example~\ref{ex:Ra}. 
\ignore{\[\R_b=\left\{\begin{array}{lcl}
\prule_1=\tuple{\;x_1, \classlab{Pl}&\dder&\containmentedgelab{Pl}{Tk}{\token},\NE \classlab{Pl} ~\classlab{Tk},y_1\;}\\
\prule_2=\tuple{\;x_2, \classlab{Pl}\classlab{Tk}&\dder&\containmentedgelab{Pl}{Tk}{\token},
\NE\containmentedgelab{Pl}{Tk}{\token} \wedge{\red \NE}\classlab{Tk}\containmentedgelab{Pl}{Tk}{\token}, y_2\;}\\
\end{array}\right.\]}

By Theorem \ref{thm:repair2}, there is a repair program $P=\tuple{P_1;P_2^{\prime d_1}}$ for $d=d_1\wedge d_2$. 
By Lemma \ref{lem:preserve}, the $d_1$-preserving version $P_2^{d_1}$ of $P_2$ is obtained from the repair program $P_2$ by equipping each rule $\prule$ in$\R_b$ with application condition $\cpres(\prule,d_1)$. For $\prule_2\in\R_b$, $\prule_2' = \tuple{\prule_2,\cpres(\prule_2, d_1)}
= \NE (\; \atmostonecontainerlabindex{Pl}{Tk}{Pl}{\tok}{\tok}{}{}\wedge \ldots\; \dder
\NE(\scalebox{0.8}{\begin{tikzpicture}[node distance=2em]
\node[node,label={[marking]below:}] (A) {$\mathrm{Pl}$};
\node[node,strictly right of=A,label={[marking]below:}] (B) {$\mathrm{Tk}$};
\node[node,strictly right of=B] (C) {$\mathrm{Pl}$};
\draw[contain] (C) to node[above] {\footnotesize tok} (B);
\end{tikzpicture}})\wedge \ldots$.
The program $P_2^{d_1}$ is not a repair program for~$d_2$: $P_2^{d_1}$ is not applicable to the graph $G$:
$\scalebox{0.8}{\begin{tikzpicture}[node distance=2em]
\node[node,label={[marking]below:}] (A) {$\mathrm{Pl}$};
\node[node,strictly right of=A,label={[marking]below:}] (B) {$\mathrm{Tk}$};
\node[node,strictly right of=B] (C) {$\mathrm{Pl}$};
\draw[contain] (C) to node[above] {\footnotesize tok} (B);
\end{tikzpicture}}$ and 
$G\not\models d_2$. 

By Lemma \ref{lem:repairing}, there is a $d_1$-preserving repair program $P_2^{\prime d_1} = \tuple{P_2^{d_1}; P_{\NE a}^\id}$ for $d_2$ where 
$P_{\NE a}^\id$ is a slightly modified version of the repair program $P_{\NE a}$ for the condition $\NE a$. 
In more detail, the program looks as follows: $P_{\NE a}^\id = \tuple{\select(\classlab{Pl}, \NE \containmentedgelab{Pl}{Tk}{\token}); \tuple{x,\classlab{Pl} \dder \emptyset}'}{\downarrow}$ where $x$ is the identity $x \colon \classlab{Pl} \injto \classlab{Pl}$. The program marks an occurrence of the Pl-node without incoming containment edge from a Tk-node and deletes (in SPO-style) the occurrence of the Pl-node; this is done as long as possible.
In this way, Pl-nodes not satisfying the condition $c$, are deleted. We obtain a repair program for $d_1\wedge d_2$.
\end{example}



\begin{proof} 
1. Let $d_1,\ldots,d_n$ be negative (positive). Then the repair programs $P_1,\ldots,P_n$ are decreasing (increasing) and $d_1,\ldots,d_n$ is preserving. Then, by Lemma~\ref{lem:presrepair}, $\tuple{P_1;\ldots;P_n}$ is a  repair program for $\wedge_{i=1}^n d_i$. 

2. Let $d_1,\ldots,d_k$ be positive and $d_{k+1},\ldots,d_n$ universal (or existential) and preserving. 
By Theorem \ref{thm:repair2}.1 there are repair programs 
$Q_1=\tuple{P_1;\ldots;P_k}$ and $Q_2=\tuple{P_{k{+1}};\ldots;P_{n}}$ for $e_1=\wedge_{i=1}^k d_i$ and $e_2=\wedge_{i=k+1}^n d_i$, respectively. Since $Q_2$ is increasing, it is $e_1$-preserving. By Lemma~\ref{lem:presrepair}, $\tuple{Q_1;Q_{2}}$ is a repair program for $e_1\wedge e_2=\bigwedge_{i=1}^nd_i$. 

3. Let $d_1,\ldots,d_k$ be negative and $d_{k+1},\ldots,d_n$ universal and preserving. 
By Theorem \ref{thm:repair2}.1 there are repair programs 
$Q_1=\tuple{P_1;\ldots;P_k}$ and $Q_2=\tuple{P_{k{+1}};\ldots;P_{n}}$ for $e_1=\wedge_{i=1}^k d_i$ and $e_2=\wedge_{i=k+1}^n d_i$, respectively. 
By Lemma~\ref{lem:repairing}, $Q_{2}^{\prime e_1}$ is the $e_1$-preserving repair program for $e_2$.
By Fact~\ref{fac:composition}, $\tuple{Q_1;Q_{2}^{\prime e_1}}$ is a repair program for $e_1\wedge e_2=\bigwedge_{i=1}^nd_i$. 
An illustration of this part of the proof is given in Figure~\ref{fig:illust}.\end{proof}

\ignore{\red My problem for the third case is: Since $P_{\NE a}(k+2)$ is decreasing, and $d_{k+1}$ is universal, we cannot proof, that the repaired occurrence of $d_{k+1}$ is not again destroyed be the deleting program $P_{\NE a}(k+2)$.
On the other side, we have no counter example for that. 
I think, that should be the case if: for $d = d_1 \wedge d_2 \wedge d_3 = \NE a_1 \wedge \PA(a_2, c_2) \wedge \PA(a_3, c_3)$, with:
\begin{itemize}
\item[(1)] $d_2, d_3$ are independent,
\item[(2)] $P_3$ introduced violation of $d_1$
\item[(3)] $P_{\NE a_3}$ deletes some already repaired parts of $c_2$.  
\end{itemize}
Solution: We require, that the $\prime$ program preserves the universal condition. Better, it has already been required. For a sequence of program, not repair programs, the $P_i$ has to preserve $d_j (j<i)$
}
\begin{figure}[htp]
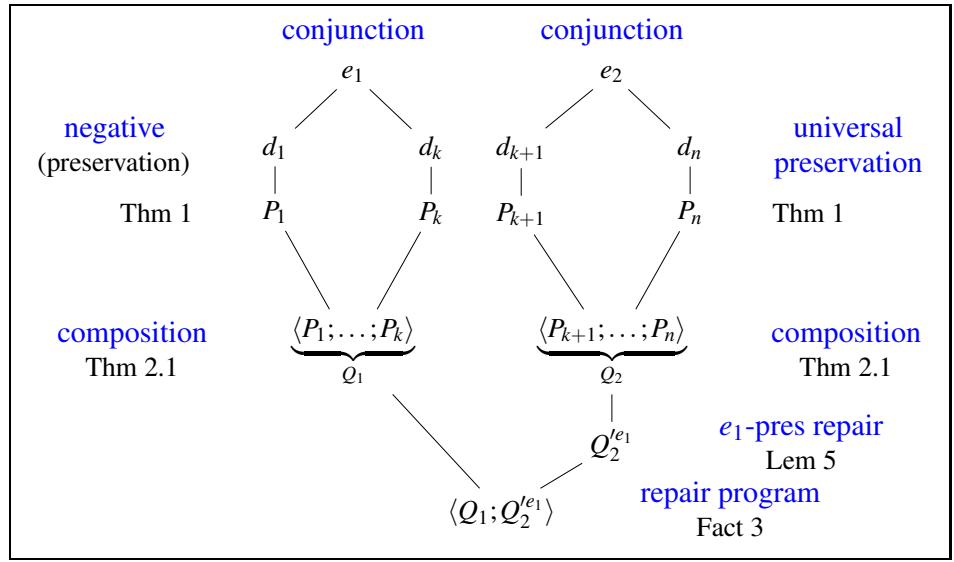
 
\[\scalebox{0.95}{\fbox{$\tikz[node distance=2em,shape=rectangle,outer sep=1pt,inner sep=2pt]{
\node(e1){$e_1$};
\node(d1)[strictly below left of=e1]{$d_1$};
\node(dk)[strictly below right of=e1]{$d_k$};
\node(P1)[strictly below of=d1,node distance=1em]{$P_1$};
\node(Pk)[strictly below of=dk,node distance=1em]{$P_k$};
\node(Q1)[node distance=8em,strictly below of=e1]{$\underbrace{\tuple{P_1;\ldots;P_k}}_{Q_1}$};
\node(T1)[strictly left of=P1]{Thm \ref{thm:repair}};
\node(T21)[strictly left of=Q1]{\begin{tabular}{c}{\blue\large composition}\\Thm \ref{thm:repair2}.1\end{tabular}};

\draw[-] (e1) -- (d1);\draw[-] (e1) -- (dk);
\draw[-] (d1) -- (P1);\draw[-] (dk) -- (Pk);
\draw[-] (P1) -- (Q1);\draw[-] (Pk) -- (Q1);
\node(e2)[node distance=8em,strictly right of=e1]{$e_2$};
\node(dk1)[strictly below left of=e2]{$d_{k+1}$};
\node(dn)[strictly below right of=e2]{$d_n$};
\node(Pk1)[strictly below of=dk1,node distance=1em]{$P_{k+1}$};
\node(Pn)[strictly below of=dn,node distance=1em]{$P_n$};
\node(Q2)[node distance=8em,strictly below of=e2]{$\underbrace{\tuple{P_{k+1};\ldots;P_n}}_{Q_2}$};
\node(d1aux)[node distance=1.5em,strictly left of=d1]{\begin{tabular}{c}\blue\large negative\\(preservation)\end{tabular}};
\node(dnaux)[node distance=1.5em,strictly right of=dn]{\blue\large \begin{tabular}{c}universal\\preservation\end{tabular}};
\node(e1aux)[node distance=0.2em,strictly above of=e1]{\blue\large conjunction};
\node(e2aux)[node distance=0.2em,strictly above of=e2]{\blue\large conjunction};
\node(Q2prime)[strictly below of=Q2,node distance=1em]{$Q_2^{\prime e_1}$};
\node(Q)[node distance=1em,strictly below left of=Q2prime]{$\tuple{Q_1;Q_2^{\prime e_1}}$};

\draw[-] (e2) -- (dk1);\draw[-] (e2) -- (dn);
\draw[-] (dk1) -- (Pk1);\draw[-] (dn) -- (Pn);
\draw[-] (Pk1) -- (Q2);\draw[-] (Pn) -- (Q2);
\draw[-] (Q2) -- (Q2prime);
\draw[-] (Q1) -- (Q);
\draw[-] (Q2prime) -- (Q);

\node(T1prime)[strictly right of=Pn]{Thm \ref{thm:repair}};
\node(T22)[strictly right of=Q2]{\begin{tabular}{c}{\blue\large composition}\\Thm \ref{thm:repair2}.1\end{tabular}};
\node(T3)[strictly right of=Q2prime]{\begin{tabular}{c}{\blue\large $e_1$-pres repair}\\Lem \ref{lem:repairing}\end{tabular}};
\node(T4)[strictly right of=Q]{\begin{tabular}{c}\blue\large repair program\\Fact \ref{fac:composition}\end{tabular}};
}$}}\]
\caption{\label{fig:illust}Illustration of the proof of Theorem \ref{thm:repair2}.3}
\end{figure}


In the following, we consider \emph{disjunctive} conditions, i.e. disjunctions of conditions. 
Whenever there exists a repair program for one of the subconditions, this repair program can be used for the disjunctive conditions as well. 
Every repair program for a condition is also a repair program for the corresponding disjunctive condition. 

\begin{theorem}[Repair III]\label{thm:disjunctive}

1. {If} $P$ is a repair program for $d$ and $d\impl d'$, {then} $P$ is a repair program for~$d'$. 
 
2. Every repair program for $d_1$ is repair program for $\bigvee_{i=1}^n d_i$.

3. {If} $P_1, \ldots, P_n$ are repair programs for $d_1, \ldots, d_n$, {then} $\{P_1, \ldots, P_n\}$ is a repair program for $\bigvee_{i=1}^n d_i$.
\end{theorem}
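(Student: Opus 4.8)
The plan is to unfold Definition~\ref{def:repair} in each of the three parts and, for part~3, to combine it with the union semantics of non-deterministic choice from Definition~\ref{def:prog}. Throughout I would argue for constraints; the case of a condition $\ac$ over a graph $A$ is the same after replacing ``$G\dder_P H$'' by ``$g\dder_{P,i}h$'' and ``$H\models d$'' by ``$h\circ i\models\ac$'', and observing that then all the $P_i$ must share the interface $A$ so that $\{P_1,\ldots,P_n\}$ is well formed in the sense of Definition~\ref{def:prog}(2).

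For part~1 I would fix a repair program $P$ for $d$ with $d\impl d'$ and check the two clauses of Definition~\ref{def:repair} for $d'$. The existence clause for $d'$ \emph{is} the existence clause for $d$, which holds by assumption: for every graph $G$ there is a transformation $G\dder_P H$. For the universal clause, take any $G\dder_P H$; then $H\models d$ since $P$ repairs $d$, and $H\models d'$ since $d\impl d'$ (Definition~\ref{def:cond}). Hence $P$ is a repair program for $d'$. Part~2 is then immediate: $d_1\impl\bigvee_{i=1}^n d_i$ holds trivially, so part~1 applied with $d=d_1$ and $d'=\bigvee_{i=1}^n d_i$ shows that a repair program for $d_1$ is a repair program for $\bigvee_{i=1}^n d_i$.

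For part~3 I would use that $\psem{\{P_1,\ldots,P_n\}}=\bigcup_{i=1}^n\psem{P_i}$ (Definition~\ref{def:prog}(2)). Existence: given a graph $G$, since $P_1$ is a repair program for $d_1$ there is a transformation $G\dder_{P_1}H$, which by the union semantics is also a transformation $G\dder_{\{P_1,\ldots,P_n\}}H$. Universality: given any $G\dder_{\{P_1,\ldots,P_n\}}H$, the union semantics yields an index $i\in\{1,\ldots,n\}$ with $G\dder_{P_i}H$; since $P_i$ repairs $d_i$ we get $H\models d_i$, hence $H\models\bigvee_{j=1}^n d_j$. Therefore $\{P_1,\ldots,P_n\}$ is a repair program for $\bigvee_{i=1}^n d_i$.

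The whole argument is routine and I do not expect a real obstacle; the only point worth a remark is the asymmetry in part~3, where the existence clause is discharged by the single component $P_1$ while the universal clause uses only ``soundness'' of the chosen $P_i$ (every $G\dder_{P_i}H$ satisfies $H\models d_i$). Thus the hypothesis is stronger than necessary: it already suffices that one $P_i$ be a full repair program and the remaining ones be sound in this weaker sense. I would include this observation, since it also makes transparent why parts~2 and~3 are essentially the same phenomenon seen through the disjunction.
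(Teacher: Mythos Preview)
Your proposal is correct and follows the same line as the paper: part~1 by the implication, part~2 as the instance $d_1\impl\bigvee_i d_i$, and part~3 via the union semantics of $\{P_1,\ldots,P_n\}$. The paper's proof of part~3 is terser---it argues ``$d_i\impl\bigvee_i d_i$ and statements~1 and~2'', i.e.\ each $P_i$ is already a repair program for the whole disjunction and then takes their union---whereas you verify the existence and universality clauses of Definition~\ref{def:repair} directly; both routes amount to the same two-line check, and your added remark on the weaker hypothesis suffices for part~3 is a harmless bonus.
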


\ignore{
\begin{example} 
Consider the constraint\footnote{In the graphs, the TPArc is abbreviated with TP.} below, intuitively meaning, that double arcs from a transition to a place, or vice versa, are forbidden. (\cite{Radke18+a})  
\[\begin{array}{c}
d = \PA (\;\classlab{{\blue TP}}~\classlab{{\blue TP}}, \underbrace{\underbrace{\PE (\: \referenceedgelab{{\blue TP}}{Tr}{src} ~\referenceedgelab{{\blue TP}}{Tr}{src}\;)}_{c_1} \vee \underbrace{\PE \referenceedgelab{{\blue TP}}{Pl}{tgt} ~\referenceedgelab{{\blue TP}}{Pl}{tgt}}_{c_2}}_{c}\;) \wedge \\
\end{array}
\]

By the Repair Theorem \ref{thm:repair1}, there are repair programs for $P_{c_i} = \try \R_{c_i}$ for the proper conditions $c_i$ ($i \in \{1,2\}$), where $\R_{c_i}$ are the sets according to Construction~\ref{const:basic}. 

By Theorem \ref{thm:disjunctive}, the program $P_c = \{P_{c_1}, P_{c_2}\}$ is a repair program for $c$. 
Intuitively, it selects an occurrence of the two nodes of type TP, then it adds either a node of type Tr or a node of type Pl with the respective edge, or only the edge, afterwards it unselects. This is done as long as possible. 
\ignore{
The resulting repair program is:
\[\begin{array}{ll}
P_d = \tuple{\select(a, \neg c); P_c; \unselect(a)}\downarrow \mbox{ with } a \colon \emptyset \injto \classlab{{\blue TP}}~\classlab{{\blue TP}}\\
P_c = \{P_{c_1}, P_{c_2}\}\\
P_{c_1} = \try \R_{c_1}\\
P_{c_2} = \try \R_{c_2}\\
\end{array}
\] 
}
{\red little Problem: since the inner condition is a disjunction, it is not proper according to Theorem 1, since the disjunction is not proper}
{\red It would be better, if I could define it in a way, such that this can be handled by our approach.}
\end{example}
}

\begin{proof}
1. If $P$ is a repair program for $d$ and $d\impl d'$, then for every transformation $g\dder _{P} h$, $h\models d\impl d'$, i.e., $P$ is a repair program for $d'$. 
2. By $d_1\impl \bigvee_{i=1}^n d_i$ and statement 1, $P_1$ is a repair program for $\bigvee_{i=1}^n d_i$.
3. Since $d_i \impl \bigvee_{i=1}^n d_i$ and by statement 1 and 2,  $\{P_1, \ldots, P_n\}$ is a repair program for $\bigvee_{i=1}^n d_i$
\end{proof}

\ignore{
The repair program for disjunctive conditions is stable and terminating, provided that the repair program for the subcondition is stable and terminating.

\begin{lemma}[stability, termination]\label{lem:disjunctive-properties}

1. {If} $P_1, \ldots, P_n$ is stable, {then} $\{P_1, \ldots, P_n\}$ is stable.

2. {If} $P_1, \ldots, P_n$ is terminating, {then} $\{P_1, \ldots, P_n\}$ is terminating.
\end{lemma}

\begin{proof}
By stability and termination of $P_1, \ldots, P_n$, $\{P_1, \ldots, P_n\}$ is stable and terminating.
\end{proof}
}

There are repair programs for a large class of conditions: for all proper and generalized-proper ones, preserving conjunctions, and disjunctions.
In this context, a condition is said to be generalized proper, if it is obtained form a proper one by replacing a subcondition by a condition (over the same graph) with a repair program.

\begin{definition}[generalized proper]
Let $d = \Q(a,c)$ be a proper condition. A \emph{generalized proper} condition $d' = [c/c']$ is obtained by replacing $c$ with a condition $c'$, provided there exists a repair program for $c'$.
\end{definition}

\begin{definition}[legit conditions]
The class of \emph{legit} conditions is defined inductively as follows.
\begin{enumerate}
\item[(1)] If $d$ is proper or generalized proper, then $d$ is legit.
\item[(2)] If $d_1,\ldots,d_n$ are legit and preserving, then $\bigwedge_{i=1}^n d_i$ is legit.
\item[(3)] If $d_1$ is legit, then $\bigvee_{i=1}^n d_i$ is legit.
\end{enumerate}
\end{definition}

\begin{theorem}[Repair for legit conditions]For legit conditions, there is a repair program.\end{theorem}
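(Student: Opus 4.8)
The plan is to argue by structural induction on the inductive definition of legit conditions, treating its three clauses in turn and pulling in the repair results already established.

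For clause~(1), I would split into the proper and the generalized-proper subcases. If $d$ is proper, Theorem~\ref{thm:repair} (Repair~I) --- via Construction~\ref{const:proper} --- directly supplies a repair program. If $d$ is generalized proper, say obtained from a proper condition $\Q(a,c)$ by replacing $c$ with some $c'$ for which a repair program $P_{c'}$ exists, I would invoke Remark~\ref{rem:generalized-proper}: the clauses of Construction~\ref{const:proper} for $\PE(a,c)$ and $\PA(a,c)$ use only that the inner condition has a repair program, not that it is proper, so substituting $P_{c'}$ for $P_c$ yields a repair program for $d$.

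For clause~(2), given $d=\bigwedge_{i=1}^n d_i$ with each $d_i$ legit and the sequence $d_1,\ldots,d_n$ preserving, I would first apply the induction hypothesis to obtain repair programs $P_1,\ldots,P_n$ for $d_1,\ldots,d_n$, and then, reading ``preserving'' as the statement that $P_k$ is $\bigwedge_{i=1}^{k-1}d_i$-preserving for $k=2,\ldots,n$, conclude by Lemma~\ref{lem:presrepair} (equivalently Theorem~\ref{thm:repair2}.1) that $\tuple{P_1;\ldots;P_n}$ is a repair program for $\bigwedge_{i=1}^n d_i$. For clause~(3), given $d=\bigvee_{i=1}^n d_i$ with $d_1$ legit, the induction hypothesis gives a repair program $P_1$ for $d_1$, and Theorem~\ref{thm:disjunctive}.2 then shows $P_1$ is a repair program for $\bigvee_{i=1}^n d_i$. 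Since these clauses exhaust the ways a condition can be legit, the induction closes.

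The main obstacle I anticipate is bookkeeping rather than depth, and it sits in clause~(2): the induction hypothesis only guarantees the \emph{existence} of repair programs $P_i$, whereas the hypothesis that the sequence $d_1,\ldots,d_n$ is ``preserving'' must be read relative to exactly this family $P_1,\ldots,P_n$. One cannot simply repair and then make preserving, since restricting a repair program's matches via Lemma~\ref{lem:preserve} can destroy the repair property (Example~\ref{exa:conjunctive}). I would therefore make explicit that the definition of legit deliberately builds preservation of the conjunction into clause~(2), so that the $P_i$ supplied by the induction already constitute a preserving sequence in the sense needed for Lemma~\ref{lem:presrepair}, and no fix-up of the kind used in Theorem~\ref{thm:repair2}.3 is required here.
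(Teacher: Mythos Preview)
Your proposal is correct and matches the paper's own proof essentially step for step: structural induction over the three clauses of the definition of legit, invoking Theorem~\ref{thm:repair} (and Remark~\ref{rem:generalized-proper}) for clause~(1), Theorem~\ref{thm:repair2}/Lemma~\ref{lem:presrepair} for clause~(2), and Theorem~\ref{thm:disjunctive} for clause~(3). Your closing paragraph on the bookkeeping in clause~(2) is a fair observation that the paper itself leaves implicit.
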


\begin{proof}
By induction of the structure of conditions. Let $d$ be legit.

(1) If $d$ is proper, then, by Theorem \ref{thm:repair}, there is a repair program for $d$.

If $d$ is generalized proper, then $d$ is of the form $\Q(a,c)$ where $c$ is legit. By induction, there is a repair program for $c$. By a generalized Theorem~\ref{thm:repair}, there is a repair program for $d$.
%
(2) Let $ds=d_1,\ldots,d_n$ are legit and $ds$ preserving. By induction hypothesis, there are repair programs for $d_1,\ldots, d_n$. By Theorem \ref{thm:repair2}, there is a repair program for the conjunction $\bigwedge_{i=1}^n$.
(3) Let $d_1$ be legit. By induction hypothesis, there is a repair program for $d_1$. By Theorem \ref{thm:disjunctive}, there is a repair program for $\bigvee_{i=1}^n d_i$.
This completes the inductive proof.
\end{proof}

As a consequence of Construction \ref{const:proper}, we obtain the following.

\begin{lemma}[program properties]\label{lem:mpres}
The repair programs for proper conditions based on Construction~\ref{const:proper} are
(1) stable, (2) maximally preserving, (3) and terminating.
\end{lemma}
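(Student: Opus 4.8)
The plan is to prove stability, maximal preservation and termination simultaneously, by induction on the structure of the proper condition $d$ along the five clauses of Construction~\ref{const:proper}; for plain graphs these properties are established in \cite{Habel-Sandmann18a} (stability, termination) and \cite{Sandmann-Habel19a} (maximal preservation), so the task is to re-run the induction, record the key observations, and note that the passage to typed graphs is merely cosmetic — exactly as in the proof of Theorem~\ref{thm:repair}, one equips every subgraph $B$ of a right-hand side $C$ with the induced typing $type_B=type_C\circ\inc_B$, which changes neither the shape of the rules, nor their side conditions, nor the relevant occurrence counts, and typing only refines, never enlarges, the set of matches.

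For $d=\ctrue$, $P_d=\Skip$, everything is trivial. For $d=\PE a$, $P_d=\try\R_a$ applies an increasing rule at most once, so it terminates and deletes nothing, hence is maximally preserving (Fact~\ref{fac:decrease}); it is stable because every rule of $\R_a$ carries the application condition $\ac=\Shift(A\injto B,\NE a)$, so, if $g\colon A\injto G$ already satisfies $\PE a$, then by Lemma~\ref{lem:shift} no match extending $g$ satisfies $\ac$ — that would force $g\models\NE a$ — and nothing fires. For $d=\NE a$, $P_d=\Rdown{\S_a'}$: it is stable since every rule of $\S_a$ has left interface $a\colon A\injto C$, which has no match once $g\models\NE a$; it terminates because, by condition~$(*)$, each rule of $\S_a$ (extended by the dangling-edges operator) is decreasing, so $|\V_G|+|\E_G|$ strictly drops at each step; maximal preservation — that $\S_a$ removes the least possible (one edge if $\E_C\supset\E_A$, one node otherwise) to destroy an occurrence of $C$ — is the content of the corresponding argument in \cite{Sandmann-Habel19a}.

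In the composite cases, the inductive hypothesis yields a stable, maximally preserving, terminating repair program $P_c$ for the sub-condition $c$, which is again proper. For $d=\PE(a,c)$, $P_d=P_{\PE a};\tuple{\select(a);P_c;\unselect(a)}$ is a finite sequential composition, so it terminates ($\try\R_a$, $\select(a)$, $\unselect(a)$ take one step, $P_c$ terminates) and is maximally preserving ($\try\R_a,\select(a),\unselect(a)$ delete nothing, $P_c$ is maximally preserving); it is stable because, if $g\models\PE(a,c)$, one may let $\select(a)$ pick an occurrence $q\colon C\injto G$ with $q\circ a=g$ and $q\models c$, on which $P_c$ does nothing by induction, while $\try\R_a$ does nothing since $g\models\PE a$, so $g\dder_{P_d}g$. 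For $d=\PA(a,c)$, $P_d=\Rdown{\tuple{\select(a,\neg c);P_c;\unselect(a)}}$: it is stable because $g\models\PA(a,c)$ forces every occurrence of $C$ through $g$ to satisfy $c$, so $\select(a,\neg c)$ has no match and the loop performs no iteration; it is maximally preserving because $c$ is $\ctrue$ or existential (the inner part of a proper universal), so $P_c$ is increasing (Fact~\ref{fac:decrease}) and the body deletes nothing.

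The main obstacle is the termination of the as-long-as-possible iteration in the universal case. One needs a strictly decreasing measure for $\tuple{\select(a,\neg c);P_c;\unselect(a)}\downarrow$: a round repairs the selected violating occurrence, which persists since $P_c$ is increasing and then satisfies $c$; but, being increasing, $P_c$ may introduce fresh elements, so one must also show that such elements are added only under the control of the application conditions $\ac_B$ in the inner $\R$-rules, so that the graph size — and with it the number of occurrences of $C$ through $g$ — stays bounded, whence the number of unrepaired such occurrences eventually decreases to $0$. This is the delicate point of \cite[Theorem 1]{Habel-Sandmann18a}; given that argument for plain graphs, the induced-typing construction above transfers stability, termination, and (via \cite{Sandmann-Habel19a}) maximal preservation to the typed setting without further change.
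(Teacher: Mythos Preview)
Your approach — a full structural induction along Construction~\ref{const:proper} — is considerably more detailed than the paper's proof, which for (1) gives only a one-line argument (the application condition $\ac=\Shift(A\injto B,\NE a)$ in $\R_a$ blocks application once the condition holds, and the semantics of $\Skip$, $\try$, $\downarrow$ propagate this) and for (2) and (3) simply cites \cite{Sandmann-Habel19a} and \cite{Habel-Sandmann18a}. Unpacking the induction is a reasonable and more informative route, and your treatment of cases $\ctrue$, $\PE a$, $\NE a$, $\PA(a,c)$ and of termination/maximal preservation is in line with what those references contain.

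There is, however, a genuine gap in your stability argument for case (4), $d=\PE(a,c)$. You argue that ``one may let $\select(a)$ pick an occurrence $q$ with $q\models c$'', on which $P_c$ does nothing by induction, and conclude $g\dder_{P_d}g$. But stability, as defined in the paper, is a \emph{universal} requirement: for \emph{every} run $g\dder_{P_d}h$ the graph must be unchanged. The rule $\select(a)=\tuple{a,\id_C}$ carries no guard (unlike $\select(a,\neg c)$ in the universal case), so it may nondeterministically choose an occurrence $q'$ with $q'\circ a=g$ but $q'\not\models c$; for $c=\NE b$ the program $P_c=\S_b'{\downarrow}$ is then genuinely applicable and alters the graph. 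Your argument thus establishes only the existence of an identity run, not that all runs are identities. The paper's one-liner for (1) — which names only $\R_a$, $\Skip$, $\try$, $\downarrow$ and is silent about the unguarded $\select(a)$ in case (4) — does not visibly cover this point either; but since you chose to spell out the induction, the issue surfaces in your argument and must be addressed (e.g.\ by restricting the stability claim, or by equipping $\select(a)$ with a suitable guard) rather than glossed over.
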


\begin{proof}
Let $d$ be a proper condition and~$P_d$ the program in Construction~\ref{const:proper}. 
(1) By the application condition $\ac = \Shift(A\injto B,\NE a)$, a rule in $\R_a$ can only be applied, iff the condition is not satisfied.
By the semantics of $\Skip, \try,$ and $\downarrow$, the repair programs are stable.
The proof of (2) and (3) can be found in \cite{Sandmann-Habel19a} and \cite{Habel-Sandmann18a}, respectively.
\end{proof}

\begin{lemma}[program properties]\label{lem:properties}
The repair programs for \legit conditions as above are stable \ignore{\red maximally preserving,}and terminating.
\end{lemma}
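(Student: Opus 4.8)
The plan is to argue by induction on the structure of legit conditions, mirroring the three clauses of the definition, with Lemma~\ref{lem:mpres} supplying the base case: for a proper condition $d$, the program $P_d$ of Construction~\ref{const:proper} is already known to be stable and terminating, and for a generalized-proper $d=\Q(a,c)$ the induction hypothesis provides a stable, terminating repair program $P_c$ to be plugged into Construction~\ref{const:proper}(4) or~(5). Throughout the inductive steps I would exploit that both properties are preserved by the program constructors in play: by finite sequential composition $\tuple{P_1;\dots;P_n}$, by the $\try$-statement and the identity-plain rules $\select(\cdot),\unselect(\cdot)$, and by equipping a rule with an extra application condition as in $\cpres$ (Construction~\ref{lem:pres}), which only restricts applicability and hence cannot break stability. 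Here I read ``stable'' in the sense that the identity transformation $G\dder_P G$ is among the possible computations whenever $G\models d$; decreasing programs such as the auxiliary $P_{\NE a}^{\id}$ of Construction~\ref{const:repairing} terminate trivially.

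For the generalized-proper case, if $d=\PE(a,c)$ and $G\models d$ I would pick with $\select(a)$ an occurrence of $C$ at which $c$ holds, let $P_c$ do nothing there (induction hypothesis), and $\unselect$; termination is that of a finite composition. If $d=\PA(a,c)$ and $G\models d$, then $\select(a,\neg c)$ has no match, so the $\downarrow$-loop is never entered; termination of that loop I would take over from \cite{Habel-Sandmann18a}, since each iteration consumes one of the finitely many $c$-violating occurrences of $a$. For a preserving legit conjunction $\bigwedge_{i=1}^n d_i$, the induction hypothesis makes each component program stable and terminating, and the repair programs of Theorem~\ref{thm:repair2} are finite sequential compositions of these (and of their $e_1$-preserving variants $P^{\prime e_1}=\tuple{P^{e_1};P_{\NE a}^{\id}}$); when the whole conjunction holds, every component holds, $P_{\NE a}^{\id}$ finds no violating occurrence, and the composition can leave $G$ unchanged, so the composed program is stable, and it is terminating as a finite composition of terminating programs.

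The disjunction case is where I expect the real work to sit. Termination is immediate: by Theorem~\ref{thm:disjunctive} a repair program for $\bigvee_{i=1}^n d_i$ is built from repair programs for the disjuncts via the nondeterministic choice $\{P_1,\dots,P_n\}$ (or is $P_1$ alone), and a nondeterministic choice of terminating programs terminates. Stability is the delicate point and the main obstacle: a program that only knows how to restore $d_1$ need not leave a graph that satisfies $\bigvee_i d_i$ through some other disjunct unchanged. The plan is therefore to use $\{P_1,\dots,P_n\}$ and observe that whenever $G\models\bigvee_i d_i$ some disjunct $d_j$ holds, the branch $P_j$ admits $G\dder_{P_j}G$ by the induction hypothesis, hence $G\dder_{\{P_1,\dots,P_n\}}G$ is a possible computation; the step needing care is to make this nondeterministic reading of stability precise and to reconcile it with the definition of legit disjunctions, which asks only that $d_1$ be legit, so one either strengthens the construction to use repair programs for all disjuncts or reads stability relative to the repaired disjunct. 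Assembling these cases completes the induction.
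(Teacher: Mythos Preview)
Your inductive skeleton matches the paper's proof, and your treatment of termination is essentially the same (closure under sequential composition, $\try$, $\downarrow$, and the observation that $P_{\NE a}^{\id}$ is decreasing). The paper's proof of case~(2) is exactly your argument: $P_i$ and $P_i^{e_1}$ are stable and terminating by induction, $P_{\NE a,j}^{\id}$ is stable because its $\select(a,\neg c)$ finds no match when the condition holds, and it terminates because it is decreasing; the sequential compositions inherit both properties.

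The genuine gap is your reading of \emph{stable}. You take it to mean that the identity run $G\dder_P G$ is \emph{among} the possible computations; the paper's definition is universal: for \emph{all} transformations $L\dder_P M$, $L\models d$ implies $L\cong M$ (this is stated explicitly just before Theorem~\ref{thm:emfk-repair}, and the informal ``does nothing whenever the condition is already satisfied'' earlier has the same intent). Under the correct reading your stability arguments in two places do not go through as written. For generalized-proper $d=\PE(a,c)$, you pick with $\select(a)$ a $c$-satisfying occurrence and let $P_c$ idle there; but $\select(a)$ is nondeterministic and may land on an occurrence where $c$ fails, so exhibiting one idle run is not enough. For the disjunction, exhibiting the branch $P_j$ with $G\models d_j$ again only gives existence. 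In the conjunction case your phrasing ``can leave $G$ unchanged'' is too weak, but the argument you sketch actually does yield the universal statement once you note that each $P_i$ (and each $P_i^{e_1}$, since adding $\cpres$ only restricts matches) leaves every $d_i$-satisfying graph unchanged in \emph{every} run, and $P_{\NE a,j}^{\id}$ has no match at all when $d_j$ holds.

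On the disjunction you correctly flag the real difficulty. The paper's construction for a legit disjunction goes through Theorem~\ref{thm:disjunctive}, where the repair program is taken to be $P_1$ (only $d_1$ is required to be legit), and the paper's proof of case~(3) simply records that $P_1$ is stable and terminating and then concludes for $\{P_1,\ldots,P_n\}$. Note that this already reads stability relative to the repaired disjunct rather than to the full disjunction; your proposed ``reconciliation'' is therefore closer to what the paper actually does than your existential re-reading, and the latter should be dropped.
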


\begin{proof}
By induction on the structure of conditions.
Let $d$ be a legit condition and $P_d$ be the corresponding repair program. 

(1) If $d$ is proper, then by Lemma~\ref{lem:mpres}, the repair program $P_d$ is stable and terminating.
If $d$ is generalized proper, then, by induction hypothesis, $P_d$ is stable and terminating.

(2) By induction hypothesis, $P_i$, $P_i^{e_1}$ are stable\ignore{ maximally preserving,} and terminating, $\tuple{P_1; \ldots; P_n}$, and $Q_1, Q_2$ are stable and terminating. Finally $Q_2^{\prime e_1} = \tuple{P_{k{+}1}^{\prime e_1};\ldots;P_n^{\prime e_1}}$, with $P_j^{\prime e_1} = \tuple{P_j^{e_1}; P_{\NE a,j}^\id}$, is (a) stable and (b) terminating:
(a) By Construction, $P_{\NE a,j}^\id$ is only applicable, iff the condition is not satisfied. By the semantics of $\downarrow$, it is stable. 
(b) By Construction, $P_{\NE a,j}^\id$ is decreasing, consequently it is terminating. 
Consequently, $\tuple{Q_1;Q_2^{\prime e_1}}$ is stable and terminating.

(3) If $d = \bigvee_{i}^n d_i$, then, by induction hypothesis, $P_{1}$ is stable and terminating. Consequently, $\{P_1, \ldots, P_n\}$ is stable and terminating.
\end{proof}


\begin{remark}[Implementation]
The approach to graph repair has been implemented in \textsc{ENFORCe$+$}.
\end{remark}
\section{Application to meta-modeling}\label{sec:application}

The standard tool for model-driven engineering is the Eclipse Modeling Framework (EMF). 
In \cite{Biermann-Ermel-Taetzer12a}, an EMF model graph is defined as a typed graph, representing the model, satisfying the following conditions: No node has more than one container.
There are no two parallel edges of the same type.
No cycles of containment occur.
For all edges in the opposite edges relation, there exists an edge in opposite direction.   

\begin{definition}[EMF-model graph]\label{def:emf}
A typed graph $G$~is an \emph{\mbox{EMF}-model graph}, if it satisfies the following conditions:
\[\begin{tabular}{p{4cm}p{8.5cm}}
1. At most one container &$\forall e_1, e_2 \in C_G$.  $\tar_G(e_1) = \tar_G(e_2)$ implies $e_1 = e_2$\\
2. No containment cycle &$\forall v\in V_G$.  $(v,v) \not\in\contains_G$\\
3. No parallel edges &$\forall e_1, e_2 \in E_G$. $\src_G(e_1) = \src_G(e_2)$, $\tgt_G(e_1) = \tgt_G(e_2)$, and $type_{E_G}(e_1) = type_{E_G}(e_2)$ implies $e_1 = e_2$\\
4. All opposite edges& $\forall (e_1,e_2) \in O$. $\forall e'_1\in E_G. type_G(e'_1) = e_1$. 
$\exists e'_2\in E_G$. $type_G(e'_2) = e_2, \src_G(e'_1) = \tgt_G(e'_2)$ and $\src_G(e'_2) = \tgt_G(e'_1)$\\
\end{tabular}\]
The set $C_G$\ignore{ = $\{e\in E_G\mid type_G(e) \in C\}$} denotes the set of edges in $G$ which are typed by a containment edge.
$\contains_{G} \subseteq V_G \times V_G$ is the \emph{containment relation} induced by the set $C\subseteq E_T$:
If $e\in C$ and $v_1\leq\sou(e),v_2 \leq\tar(e)$, then $(v_1,v_2)\in\contains_{G}$. If  $(v_1,v_2) ,(v_2,v_3) \in \contains_{G}$, then $(v_1,v_3)\in\contains_{G}$.\\
The conditions are said to be \emph{EMF-constraints}.
\end{definition}

The second constraint is a monadic second order constraint. Instead of it, we consider the constraint ``No containment cycle of length $\leq k$'' for a fixed natural number~$k$. The resulting constraints, called \emph{\EMF constraints}, are first-order constraints and can be expressed by typed graph constraints \cite{Habel-Pennemann09a}.


\begin{fact}[\EMF-constraints]\label{def:EMFnested} 
For the \EMF-constraints, there is a schema of\ignore{ corresponding} typed graph constraints:
\[\begin{tabular}{p{4cm}p{1.5cm}p{5cm}}
1. At most one container &&$\NE \atmostonecontainer$\\
2. No containment cycle \phantom{2.\;\;}of length $\leq k$&&$\NE\containmentloop \wedge \bigwedge_{i = 1}^{k{-}1} \NE\containmentcycle{i}$\\
3. No parallel edges&&$\NE\paredgelab{A}{B}{t}{t}$\\
4. All opposite edges&& $\PA(\directededgelabO{A}{B}{t_1}{t_2}, \PE \oppositeedgecondlabO{A}{B}{t_1}{t_2})$\\
\end{tabular}\]
where A,B,C,t,t$_1$,t$_2$ are node and edge types, respectively, edges without type are arbitrary typed, and $\scalebox{1}{\containmentp{i}}$ denotes a path of containment edges of length~$i$.
\end{fact}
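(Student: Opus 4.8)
The plan is to handle the four rows of the table one at a time and, for each, to verify that the displayed schema of typed graph constraints is semantically equivalent to the corresponding \EMF-constraint of Definition~\ref{def:emf} --- with row~2, a monadic second-order constraint, replaced as in the text by the first-order constraint ``no containment cycle of length $\leq k$''. In the background stands the known fact that first-order graph formulas and typed graph conditions are expressively equivalent \cite{Habel-Pennemann09a,Radke+18a}; since $TG$ is finite and $k$ is fixed, every \EMF-constraint is first-order and hence \emph{possesses} an equivalent typed graph constraint. What the \textbf{Fact} adds is that a \emph{particular}, readable schema works, so the task is to exhibit that schema and check the equivalence directly. I would first spell out what ``schema'' means here: each placeholder $A,B,t,t_1,t_2$ (and the implicit types of the unlabelled containment edges) ranges over the finitely many matching node and edge types of $TG$, and the actual constraint is the finite conjunction of all instances --- for row~1 over pairs of containment-edge types with a common target type, for row~3 over all edge types, for row~4 over all pairs $(t_1,t_2)\in O$, and for row~2 over all type-consistent labellings of the cycle edges. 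Since a finite conjunction of typed graph constraints is again a typed graph constraint, it then suffices to settle each row.

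For the three straightforward rows I would argue by inspecting matches. As all morphisms are injective, a match of $\atmostonecontainer$ into $G$ is exactly a choice of two \emph{distinct} containment edges of $G$ with a common target, so $G\models\NE\atmostonecontainer$ says that no node has two distinct incoming containment edges; the only subtlety is that ``two distinct containment edges with common target'' also subsumes two \emph{parallel} containment edges, which is not an instance of the two-source pattern but is an instance of the row~3 schema, so row~1 captures condition~1 of Definition~\ref{def:emf} exactly once row~3 is also assumed --- which is precisely the setting in which EMF-model graphs are considered. For row~3, a match of $\paredgelab{A}{B}{t}{t}$ is a pair of distinct $t$-edges with equal source and equal target, so forbidding all such patterns is ``no parallel edges''. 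For row~4, one uses that a formula $\forall x.(\varphi(x)\impl\exists y.\psi(x,y))$ translates to $\PA(\,\cdot\,,\PE(\,\cdot\,))$; unfolding $\PA(\directededgelabO{A}{B}{t_1}{t_2},\PE\oppositeedgecondlabO{A}{B}{t_1}{t_2})$ according to Definition~\ref{def:cond} gives ``for every $t_1$-edge there exists a $t_2$-edge in the opposite direction'', i.e.\ condition~4 for the opposite pair $(t_1,t_2)$, and conjoining over $O$ yields condition~4.

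The row that needs the most care, and the expected main obstacle, is ``no containment cycle of length $\leq k$'': the first-order formula refers to $\contains_G$, which is defined as a transitive closure rather than directly as a pattern. Here I would unfold the inductive definition of $\contains_G$ and show that $(v,v)\in\contains_G$ is derivable using at most $k$ immediate-containment steps if and only if $G$ contains a closed chain of at most $k$ containment edges through $v$, i.e.\ if and only if $G$ admits a match of $\containmentloop$ (the self-loop, length one) or of some $\containmentcycle{i}$ with $1\leq i\leq k{-}1$ (covering the remaining lengths up to $k$). The forward direction expands a derivation of $(v,v)\in\contains_G$ into such a chain; the backward direction reads a match of one of the patterns back into a short $\contains_G$-cycle. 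Hence $G\models\NE\containmentloop\wedge\bigwedge_{i=1}^{k-1}\NE\containmentcycle{i}$ if and only if $G$ has no containment cycle of length $\leq k$. Combining the four equivalences and using again that finite conjunctions of typed graph constraints are typed graph constraints completes the proof.
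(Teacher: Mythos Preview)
Your proposal is correct, but you should know that the paper does not supply a formal proof of this Fact at all: immediately after the table it merely paraphrases each row in one sentence (``The first constraint requires that there are no two different containment edges with a common target'', and so on), treating the correspondence as evident by inspection and relying implicitly on the general expressiveness result you cite. Your write-up is therefore considerably more detailed than the paper's own treatment.

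Two remarks on the comparison. First, your explicit unpacking of ``schema'' as a finite conjunction over all type instantiations is something the paper leaves entirely to the reader; making it explicit is helpful and costs nothing. Second, your observation about row~1 --- that with injective matches the three-node pattern $\atmostonecontainer$ alone misses the case of two \emph{parallel} containment edges (same source, same target), and that this case is only excluded once the row~3 schema is also assumed --- is a genuine subtlety the paper does not flag. It does not invalidate the Fact, since in the paper the four \EMF-constraints are always taken together (cf.\ Theorem~\ref{thm:emfk-repair} and Fact~\ref{fac:trivial}), but it is worth stating, as you do, that row~1 captures Definition~\ref{def:emf}.1 only modulo row~3.
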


The first constraint requires that there are no two different containment edges with a common target. The second constraint requires that there are no loops and no cycles of length $\leq k$. The third constraint requires, that there are no parallel edges of the same type. 
The fourth constraint requires that, if there is an opposite-edge marking between an A-typed and a B-typed node  with type requirement t$_1$,t$_2$, there exists already one edge $e_1$ with the type t$_1$, then an opposite edge with type t$_2$ in opposite direction should exist. 

\begin{fact}\label{fac:trivial}
The instances of \EMF constraints are negative or universal.
Every conjunction of \EMF constraint instances is satisfiable.
Every sequence of instances of one \EMF constraint is preserving.
\end{fact}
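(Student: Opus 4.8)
The plan is to treat the three claims in turn, each by reading off the syntactic shape of the instances from the schema in Fact~\ref{def:EMFnested}. For the first claim I would inspect the four schemata: an instance of constraint~1 or~3, and each conjunct of an instance of constraint~2, is a condition $\NE a$ with $a$ a real morphism out of $\emptyset$, hence negative; an instance of constraint~4 has the form $\PA(a,\PE b)=\PA(a,\PE(b,\ctrue))$, which has alternating quantifiers ending in $\ctrue$ and is therefore universal.

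For the second claim I would show that the empty graph $\emptyset$ satisfies every \EMF constraint instance, so that any conjunction of such instances is satisfied by $\emptyset$ and is therefore satisfiable (indeed $\emptyset$ is trivially an EMF model graph). Every instance is a constraint over $\emptyset$ whose body is $\NE a$ or $\PA(a,c)=\neg\PE(a,\neg c)$ for a real morphism $a\colon\emptyset\injto C$ with $C\neq\emptyset$; since there is no morphism $C\injto\emptyset$, the condition $\PE(a,\ldots)$ fails at $\emptyset$, so $\emptyset\models\NE a$ and $\emptyset\models\PA(a,c)$.

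For the third claim, fix a sequence $d_1,\ldots,d_n$ of instances of one fixed \EMF constraint. If that constraint is~1, 2, or~3, then by the first claim all $d_i$ are negative, so the sequence is negative and Fact~\ref{fac:neg} yields that it is preserving. The interesting case is when all $d_i$ are instances of constraint~4, because an increasing repair program need not preserve a universal condition in general (as the counterexamples in the text show), so I would have to exploit the special structure of the opposite-edge constraint. Writing such an instance as $d_{s,t}=\PA(a_{s,t},\PE b_{s,t})$ with $(s,t)\in O$, where the premise graph $C_{s,t}$ is a single $s$-typed edge with its endpoint nodes and $C'_{s,t}$ adds a single $t$-typed edge running in the opposite direction, Construction~\ref{const:proper}(5) together with Construction~\ref{const:basic} shows that $P_{d_{s,t}}$ consists, apart from the identity rules used for (un)marking (which preserve every condition), of just the rule $C_{s,t}\dder C'_{s,t}$.

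It then remains to check that this rule is $d_{s',t'}$-preserving for every $(s',t')\in O$: applying it adds one $t$-typed edge $e$, and a fresh violating occurrence of $d_{s',t'}$ could only use $e$ as its $s'$-typed edge, which forces $s'=t$; then anti-reflexivity, symmetry and functionality of $O$ force $(s',t')=(t,s)$, and the $s$-typed opposite edge required by that occurrence is exactly the $s$-typed edge already present in the match of $C_{s,t}$, so no violation is created. Since rule-wise $d$-preservation is at once closed under conjunction of the $d$'s, $P_{d_{s,t}}$ preserves every conjunction of constraint-4 instances, in particular $\bigwedge_{j<i}d_j$, so $d_1,\ldots,d_n$ is preserving. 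The one delicate point, which I expect to be the main obstacle, is making the ``fresh violating occurrence'' case analysis precise --- in particular observing that the opposite-edge markings in the schema are determined by the node types and hence contribute no extra graph structure to worry about.
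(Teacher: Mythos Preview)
The paper gives no proof of this Fact at all (note the label \texttt{fac:trivial}); the first two claims are treated as evident, and the third claim is immediately restated and proved as Lemma~\ref{lem:independentEMF}. Your argument for the first two claims is correct and is exactly the intended reading of the schemata; the observation that the empty graph witnesses satisfiability is the natural one.

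For the third claim your approach coincides with the paper's. The short proof of Lemma~\ref{lem:independentEMF} says precisely what you say for constraints~1--3 (negative, hence preserving via Fact~\ref{fac:neg}) and, for constraint~4, only announces ``by induction on the number of constraints''. The suppressed long version of that proof carries out the very analysis you sketch: it identifies the single non-identity rule in $P_{d_i}$ as the one adding a $t_{2i}$-typed edge between the matched nodes, and then uses functionality of $O$ to conclude that no earlier $P_m$ can become applicable to the freshly added edge, because the only candidate opposite requirement on an edge of that type is the pair $(t_{2i},t_{1i})$, and the required $t_{1i}$-edge is already present in the match. Your case analysis (``$s'=t$ forces $(s',t')=(t,s)$'') is this argument; symmetry plus functionality of $O$ already suffice, anti-reflexivity is not actually needed. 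Your remark that $\R_b$ collapses to the single rule $C_{s,t}\dder C'_{s,t}$ because $C'_{s,t}\setminus C_{s,t}$ is a single edge is correct and is implicit in the paper's displayed form of $P_i$.
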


\begin{lemma}[preservation\ignore{of \EMF constraints}]\label{lem:independentEMF}
Every conjunction of instances of an \EMF constraint is preserving.
\end{lemma}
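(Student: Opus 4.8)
My plan is to distinguish the four kinds of \EMF constraints. I would fix one of them, let $d_1,\dots,d_n$ be instances of the corresponding schema of Fact~\ref{def:EMFnested} (instantiations by concrete node and edge types), and show that this conjunction is preserving, i.e.\ that every sequentialization $d_1,\dots,d_n$ is preserving. By the definition of a preserving sequence this amounts to showing that, for $k=2,\dots,n$, the repair program $P_k$ of Construction~\ref{const:proper} is $\bigwedge_{i<k}d_i$-preserving. I would then reduce this further: a program is $e$-preserving as soon as each of its rules is, a rule preserves $\bigwedge_{i<k}d_i$ iff it preserves each $d_i$ with $i<k$, and among the rules of $P_k$ the rules $\select(\cdot)$ and $\unselect(\cdot)$ are identities and preserve everything; so it remains to check that every graph-changing rule of $P_k$ -- a rule of a repairing set of Construction~\ref{const:basic} -- preserves each $d_i$ with $i<k$.

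For the negative \EMF constraints this is immediate. If the fixed constraint is ``at most one container'' or ``no parallel edges'', every instance $d_i$ is a negative condition, so $d_1,\dots,d_n$ is a negative sequence and preserving by Fact~\ref{fac:neg}. If it is ``no containment cycle of length $\leq k$'', each instance is a conjunction of negative conditions, whose repair program is, by the construction of Theorem~\ref{thm:repair2}(1), a sequential composition of the decreasing repair programs of its conjuncts (Fact~\ref{fac:decrease}); a decreasing rule only deletes and can therefore never create a fresh occurrence of a left-hand side, so it preserves every negative condition and every conjunction of negative conditions. Hence $P_k$ preserves $\bigwedge_{i<k}d_i$, and the sequence is preserving.

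The case I expect to require real work is ``all opposite edges''. Here every instance has the form $d_k=\PA(a_k,c_k)$ with $a_k\colon\emptyset\injto C_k$, where $C_k$ consists of two nodes joined by an edge of some type $t_1^{(k)}$ with $(t_1^{(k)},t_2^{(k)})\in O$ (the two node types being forced by $\src$ and $\tgt$ of $t_1^{(k)}$), and $c_k=\PE b_k$ with $b_k\colon C_k\injto D_k$, where $D_k$ adds to $C_k$ an edge of type $t_2^{(k)}$ in the opposite direction. By Construction~\ref{const:proper}(5) the only graph-changing rule of $P_k$ is increasing and, at a match exhibiting such a $t_1^{(k)}$-edge whose opposite $t_2^{(k)}$-edge is missing, inserts exactly that missing $t_2^{(k)}$-edge. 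I would take such a rule application $g\dder h$ with $g\models\bigwedge_{i<k}d_i$, fix $i<k$, and argue $h\models d_i$. Since nothing is deleted, every match of $C_i$ in $g$ yields a match of $C_i$ in $h$ that keeps its witnessing $t_2^{(i)}$-edge; so it remains to inspect matches of $C_i$ in $h$ that do not arise this way. As $h$ has no new nodes, such a match must send the unique edge of $C_i$ to the newly inserted edge, forcing $t_1^{(i)}=t_2^{(k)}$; since $O$ is symmetric and functional (Definition~\ref{def:type-graph}), this forces $t_2^{(i)}=t_1^{(k)}$, i.e.\ $d_i$ is the ``reverse'' opposite-edge instance of $d_k$. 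But then the opposite edge that $c_i$ demands for this match is a $t_2^{(i)}$-edge -- that is, as $t_2^{(i)}=t_1^{(k)}$, a $t_1^{(k)}$-edge -- running opposite to the newly inserted $t_2^{(k)}$-edge, which is precisely the $t_1^{(k)}$-edge that enabled the rule and is still present since the rule is increasing. Hence $h\models d_i$; so the rule, and therefore $P_k$, is $\bigwedge_{i<k}d_i$-preserving, and $d_1,\dots,d_n$ is preserving.

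The main obstacle is exactly this last case: ruling out that supplying one missing opposite edge creates an unrepaired occurrence of another opposite-edge instance. What makes the argument go through is, on the one hand, that the opposite-edge relation $O$ is functional and symmetric, which pins down the unique ``reverse'' instance as the only one that could be disturbed, and, on the other hand, that the repair rule fires exactly when the triggering edge of that reverse instance is already present, so the insertion simultaneously discharges the reverse instance at the new occurrence. Combined with the routine facts that $\select$ and $\unselect$ are identities and that preservation distributes over conjunctions, this completes the proof.
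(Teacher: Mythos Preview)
Your proposal is correct and follows the same approach as the paper: the three negative \EMF constraints are dispatched via Fact~\ref{fac:neg} (or its obvious extension to conjunctions of negatives), and for the opposite-edges case you exploit the symmetry and functionality of $O$ to show that the single edge-inserting repair rule of $P_k$ can only create a fresh match of the ``reverse'' opposite-edge instance, whose required witness is exactly the triggering $t_1^{(k)}$-edge. The paper's published proof is much terser---for the universal case it merely says ``by induction on the number of constraints, it can be shown that each conjunction is preserving''---but your direct argument is precisely the content of that induction step.
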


\begin{proof} The instances of the first three \EMF constraints are negative; thus, each conjunction of them is preserving. 
The instances of the forth \EMF constraint are universal; by induction on the number of constraints, it can be shown that each conjunction is preserving.
\longv{
\begin{notation}
Let $d_i$ be an instance of the opposite-edge constraint. 
\[d_i=\PA(\;\directededgelabO{A_i}{B_i}{t_{1i}}{t_{2i}}\;,\;\PE \oppositeedgecondlabO{A_i}{B_i}{t_{1i}}{t_{2i}\;\;}\;)\]
The opposite-edge requirement in~$d_i$ is abbreviated by $\tuple{A_i,\Brm_i,\type_{1i},\type_{2i}}$. 
The edges in $d_i$ are called $\tuple{\Arm_i,\Brm_i,\type_{i1}}$ and $\tuple{\Brm_i,\Arm_i,\type_{i2}}$-edges, respectively. 
The repair program for $d_i$ is as follows: 
\[P_i=\tuple{\select(a,\neg c);
\directededgelabO{A_i}{B_i}{\type_{1i}}{\type_{2i}}\dder\oppositeedgecondlabO{A_i}{B_i}{t_{1i}}{t_{2i}}\;,\;\NE\oppositeedgecondlabO{A_i}{B_i}{\type_{1i}}{\type_{2i}}\;;\;\unselect(a)}\downarrow.\] 
\end{notation}

$n = 1$. Let $G\dder_{P_1} H$ with $G\models d_0=\ctrue$. Then $P_1$ is  $\ctrue$-preserving.

$n\to n+1$. Let $G\dder_{\tuple{P_1;\ldots;P_{n+1}}}H$ be an arbitrary transformation. Then the transformation is of the form $G\dder_{\tuple{P_1;\ldots;P_n}}H\dder_{P_{n+1}} M$.
By induction hypothesis, $P_1,\ldots,P_n$ is $d_0, \ldots, d_n$-preserving, i.e. for $k=1,\ldots,n$, $P_k$ is $\wedge_{i={\red 0}}^n d_i$-preserving. 

Consider now the transformation step $H\dder_{P_{n+1}} M$. Then $M$ consists of edges from $H$ as well as edges created by the program $P_{n+1}$. 
The first ones satisfy the opposite-edge requirement; for the second ones we have to show it. If we can show that there is no {\red proper application} of the programs $P_1,\ldots, P_n$ to $M$, then the graph $M\models\wedge_{i=1}^n d_i$, and, since $P_{n+1}$ is a repair program for $d_{n+1}$, $M\models d_{n+1}$. This holds for arbitrary transformations. Thus, 
$P_1,\ldots,P_{n+1}$ is $d_0, \ldots, d_{n+1}$-preserving, i.e. for $k=1,\ldots,n+1$, $P_k$ is $\wedge_{i=1}^{n+1} d_i$-preserving. In the following, let $o:=n+1$. 

\[\begin{tikzpicture}[node distance=2.5em]
\node[node] (A) {\small $\Arm_o$};
\node[node,strictly right of=A] (B) {\small $\Brm_o$};
\draw[<->] (A) to node[above left] {\footnotesize $\type_{o1}$} node[below right] {\footnotesize $\type_{o2}$} (B);
\draw[->] (A) to [bend left=40] node[above] {\footnotesize $\type_{o1}$} (B);
\draw[->,blue] (B) to [bend left=40] node[below] {\footnotesize  $\type_{o2}$} (A);
\node[node,node distance=3em,strictly below of=A] (A') {\small $\Brm_m$};
\node[node,strictly right of=A'] (B') {\small $\Arm_m$};
\draw[<->] (A') to node[above left] {\footnotesize $\type_{m2}$} node[below right] {\footnotesize $\type_{m1}$} (B');
\draw[->,blue] (B') to [bend left=40] node[below] {\footnotesize  $\type_{m1}$} (A');
\draw[->,dashed] (A') to (A);\draw[->,dashed] (B') to (B);
\end{tikzpicture}\]

 Consider a $\tuple{\Brm_o,\Arm_o,\type_{\red o2}}$-edge created in the last transformation step. 
 Then there exist a $\tuple{\Arm_o,\Brm_o,\type_{o1}}$-edge in $M$ with opposite-edge requirement  $\tuple{\Arm_o,\Brm_o,\type_{o1},\type_{\red o2}}$. 
 If there is a match for the plain rule of the program $P_m$ ($m<o$) using the created edge, then a $\Arm_m=\Brm_o$, $\Brm_m=\Arm_o$, $\type_{m1}=\type_{o2}$. 
By the definition of type graphs (Definition~\ref{def:type-graph}), the relation~$O$ is functional, i.e., \ignore{$ \PA((e_1,e_2),(e_1,e3)\in O$, $e2 = e3$,} there is only one opposite-edge requirement between the nodes in consideration. Thus, $\type_{m2}=\type_{\red n1}$. \CS{I think, $m2 = o1$ } 
Thus, there is an opposite-edge for the edge in consideration, i.e., the application condition of the rule in $P_m$ is not satisfied and $P_m$ cannot applied. 
Thus, $P_1.\ldots,P_{n+1}$ is $d_1,\ldots,d_{n+1}$-preserving. This completes the inductive proof.
}
\end{proof}

\ignore{
\begin{proof}
By induction on the number $n$ of pure constraints of opposite edges.
{Let $d = d_1, \ldots, d_n$ be the sequence of opposite edge constraints, and $P_i^\prime$ the $c$-preserving repair program (see Theorem \ref{thm:repair2}). By Construction, for each $d_i$, the program $P_i'$ either:
\begin{itemize}
\item[(1)] adds exactly one edge of type $t_{2,i}$, provided it preserves the negations, or
\item[(2)] deletes exactly one edge of type $t_{1,i}$, otherwise,
\end{itemize}
between a node of type $A_i'$, and $B_i'$ (see Figure \ref{fig:proof-illustration}).
\begin{figure}[h]
\[\begin{tikzpicture}[node distance=4em]
\node (TG) {$TG:$};
\node[node,strictly right of=TG] (A) {$\mathrm{A_i}$};
\node[node,strictly right of=A,node distance=4em] (B) {$\mathrm{B_i}$};
\draw[opposite] (A) to node[above right](t1) {$t_{1,i}$} node[below left](t2) {$t_{2,i}$} (B);
\node[strictly below of=TG] (G) {$G:$};
\node[node,strictly below of=A] (A') {$\mathrm{A'_i}$};
\node[node,strictly below of=B] (B') {$\mathrm{B'_i}$};
\draw[opposite] (A') to node[above right] {$t_{1,i}$} node[below left] {$t_{2,i}$} (B');
\draw[->] (A') to [bend left=40] node[above right](t1') {\footnotesize $t_{1,i}$} (B');
\draw[->,bend left=40] (B') to node[below left](t2') {$t_{2,i}$} (A');
\draw[->,dashed] (A') to (A);
\draw[->,dashed] (B') to (B);
\draw[->,dashed] (t1') to (t1);
\draw[->,dashed,bend left] (t2') to (t2);
\end{tikzpicture}\]
\caption{\label{fig:proof-illustration} For each opposite edge pair, the typing morphism $type_i$ is uniquely defined}
\end{figure}

For $n = 2$, let $d = d_1 \wedge d_2$. 
By Construction, $P_1$ adds an edge of type $t_{2,1}$, or deletes an edge of type $t_{1,1}$. In both cases, $P_1$ is a repair program for $d_1$.
Analogously, $P_2$ adds an edge of type $t_{2,2}$, or deletes an edge of type $t_{1,2}$.
Since each added edge of $P_2$ has either a different source, target, or edge type, than the edges of $P_1$, the program $P_2$ is independent from $d_1$.
\[\begin{tabular}{l|lllll}
& $P_{1,1}$ & $P_{1,2}$ & $P_{2,1}$ & $P_{2,2}$ &\\ \hline 
added edge& $t_{2,1}$ & $t_{1,1}$ & $t_{2,2}$ & $t_{1,2}$ \\ \hline 
deleted edge& $t_{1,1}$ & $t_{2,1}$ & $t_{1,2}$ & $t_{2,2}$ \\ 
\end{tabular}\]

Let $n= n+1$ and $d = d_1, \ldots, d_{n+1}$. By Induction Hypothesis $d_1, \ldots, d_n$ are independent. By Construction, $P_{n+1}$ adds an edge of type $t_{2,n+1}$, or deletes an edge of type $t_{1,n+1}$. Assume, that, the added edge, is of the same type, as one of the edges before, with the same source, and target node. Then, there exists a parallel edge of the same type between this source, and target node, which is a contradiction, to the constraint No parallel edges.}
\end{proof}
}

Let $\emfk_1,\emfk_2$ be conjunctions of \EMF constraints. An \emph{\EMF model repair} program for 

$\tuple{\emfk_1,\emfk_2}$ is an $\emfk_1$-preserving repair program for $\emfk_2$. 
An \emph{\EMF model completion} program is an EMF model repair program for $\ctrue$ and for the conjunction of all \EMF constraints.
A repair program $P$ for a constraint $e$ is \emph{stable} if, for all transformations $L\dder_P M$, $L\models e$ implies $L\cong M$, i.e., they do not change the the graph provided the constraint $e$ is satisfied.

\begin{theorem}[\EMF model repair \& completion]\label{thm:emfk-repair}
Let $\emfk_1$ is  a conjunction of negative \EMF constraints, $\emfk_2$ a conjunction of negative or universal and preserving \EMF constraints, and $\bar{\emfk}\; $the conjunction of all \EMF constraints.
\begin{enumerate}
\item There is a model-repair program for $\tuple{\emfk_1,\emfk_2}$.
\item There is a model-completion program. 
\item The \EMF model repair and completion programs are stable.
\end{enumerate}
\end{theorem}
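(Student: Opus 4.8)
The plan is to build the program from the conjunction-repair results of Section~\ref{sec:repair}, after decomposing each conjunction of EMF constraints into its negative part and its universal (opposite-edge) part. As preparation I would record that, by Fact~\ref{def:EMFnested}, every instance of an EMF constraint is proper --- the instances of constraints~1--3 are negative conditions $\NE a$, and the instances of constraint~4 are universal conditions $\PA(a,\PE b)$ --- so Theorem~\ref{thm:repair} supplies a repair program for each of them, decreasing for the negative ones and increasing for the universal ones by Fact~\ref{fac:decrease}. Moreover, by Fact~\ref{fac:trivial} every conjunction of EMF constraint instances is satisfiable, and by Lemma~\ref{lem:independentEMF} every conjunction of instances of the opposite-edge constraint is preserving.

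For statement~1 I would write $\emfk_2 = f_1 \wedge f_2$, where $f_1$ gathers the negative instances and $f_2$ the universal (opposite-edge) instances. Since $f_1$ is a negative conjunction, Theorem~\ref{thm:repair2}.1 gives a repair program $Q_1$ for $f_1$; being decreasing, $Q_1$ preserves every negative condition, in particular $\emfk_1$. Since $f_2$ is a preserving universal conjunction (Lemma~\ref{lem:independentEMF}), Theorem~\ref{thm:repair2}.1 gives a repair program $Q_2$ for $f_2$, and because $\emfk_1 \wedge f_1$ is again a negative conjunction, Lemma~\ref{lem:repairing} turns $Q_2$ into an $(\emfk_1\wedge f_1)$-preserving repair program $Q_2^{\prime(\emfk_1\wedge f_1)}$ for $f_2$; in particular this program is both $f_1$-preserving and $\emfk_1$-preserving. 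By Fact~\ref{fac:composition}, applied with the negative conjunction $f_1$ and with $f_2$, the composition $P=\tuple{Q_1;Q_2^{\prime(\emfk_1\wedge f_1)}}$ is a repair program for $f_1\wedge f_2=\emfk_2$; since both of its components are $\emfk_1$-preserving, so is $P$, and hence $P$ is a model-repair program for $\tuple{\emfk_1,\emfk_2}$.

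For statement~2, the conjunction $\bar{\emfk}$ of all EMF constraints decomposes as (constraints~1--3, negative)~$\wedge$~(constraint~4, universal and preserving by Lemma~\ref{lem:independentEMF}), so it meets the hypothesis imposed on $\emfk_2$ in statement~1 with $\emfk_1=\ctrue$; applying statement~1 yields a $\ctrue$-preserving repair program for $\bar{\emfk}$, which is by definition a model-completion program. For statement~3, note that $\emfk_2$ and $\bar{\emfk}$ are legit conditions (each conjunct is proper, and the universal conjuncts form a preserving conjunction), so by Lemma~\ref{lem:properties} --- and by the stability argument given in its proof for the $e_1$-preserving construction, which does not depend on which negative conjunction plays the role of the preservation context --- the components $Q_1$ and $Q_2^{\prime(\emfk_1\wedge f_1)}$ are stable for $f_1$ and $f_2$, respectively. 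Hence, if $L\models\emfk_2$, i.e. $L\models f_1$ and $L\models f_2$, then $Q_1$ leaves $L$ fixed up to isomorphism, so $f_2$ still holds, and then $Q_2^{\prime(\emfk_1\wedge f_1)}$ leaves it fixed as well; thus $L\cong M$ for every $L\dder_P M$. The same reasoning with $\bar{\emfk}$ shows the model-completion program is stable.

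The step I expect to be the main obstacle is keeping the three requirements on $P$ jointly consistent in statement~1: $P$ must be a repair program for $\emfk_2$ \emph{alone} (not for $\emfk_1\wedge\emfk_2$), it must be $\emfk_1$-preserving, and it must be stable for $\emfk_2$. This is what forces $Q_1$ to repair only the negative part $f_1$ of $\emfk_2$ while the $(\emfk_1\wedge f_1)$-preserving variant of $Q_2$ carries the burden of not destroying $\emfk_1$ --- a naive composition of a repair program for $\emfk_1$ with a repair program for $\emfk_2$ would fail stability. One then has to check that this variant still inherits stability and termination from Lemma~\ref{lem:properties}. The only genuinely EMF-specific ingredient is Lemma~\ref{lem:independentEMF}; everything else is bookkeeping on top of Theorems~\ref{thm:repair} and~\ref{thm:repair2} and Lemma~\ref{lem:repairing}.
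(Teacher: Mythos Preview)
Your approach is essentially the same as the paper's: both assemble the model-repair program from the Section~\ref{sec:repair} machinery (Theorem~\ref{thm:repair}, Theorem~\ref{thm:repair2}, Lemma~\ref{lem:repairing}, Fact~\ref{fac:composition}), and both derive parts~2 and~3 as immediate consequences. You are in fact more careful than the paper: you explicitly split $\emfk_2$ into its negative part $f_1$ and its universal part $f_2$ before invoking Lemma~\ref{lem:repairing}, whose hypothesis literally requires a \emph{universal} conjunction; the paper applies that lemma directly to the mixed $\emfk_2$ and also cites Lemma~\ref{lem:presrepair} where Lemma~\ref{lem:repairing} is meant. For part~3 the paper simply says ``follows immediately from Construction~\ref{const:proper}''; your route via Lemma~\ref{lem:properties} is the same idea spelled out.

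One imprecision to flag: you claim that $Q_2^{\prime(\emfk_1\wedge f_1)}$ is ``both $f_1$-preserving and $\emfk_1$-preserving'' because it is $(\emfk_1\wedge f_1)$-preserving. Under the paper's rule-wise Definition~\ref{def:preservation}, $(d_1\wedge d_2)$-preservation does \emph{not} in general entail $d_1$-preservation: a rule equipped with $\cpres(\prule,d_1\wedge d_2)$ may violate $d_1$ when applied to a graph satisfying $d_1$ but not $d_2$, since then the premise of the implication in $\cpres$ is false and the application condition imposes nothing. What your argument actually secures is the functional requirement pictured in Figure~\ref{fig:illustrationtheoremmodelrepair}: from $L\models\emfk_1$, the decreasing $Q_1$ yields $L'\models\emfk_1\wedge f_1$, and from there $(\emfk_1\wedge f_1)$-preservation of $Q_2^{\prime(\emfk_1\wedge f_1)}$ is enough to conclude $M\models\emfk_1\wedge\emfk_2$. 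That suffices for the intended statement, and the paper's own proof is no more precise on this point.
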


\begin{figure}[htp]
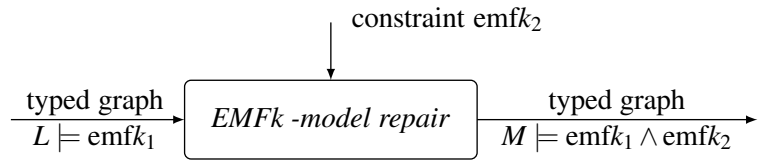

\[\tikz[node distance=1.2em,label distance=0pt,outer sep=0pt,inner sep=1pt]{
\node(MM)[label={[marking]right:\begin{tabular}{c}typed graph\\ $L\models \emfk_1$\end{tabular}}]{};
\node[rectangle,inner sep=10pt,round corners](Trans)[label={[marking,outer sep=0pt,inner sep=0pt]right:\begin{tabular}{@{}c@{}}\end{tabular}},node distance=6em,strictly right of=MM]{\emph{\EMF-model repair}};
\node(GG)[node distance=3em,strictly right of=Trans]{};
\node(rule)[node distance=2em,strictly above of=Trans,label={[marking,outer sep=0pt,inner sep=1pt]right:\begin{tabular}{c}{constraint} $\emfk_2$\end{tabular}}]{};
\node(Generate)[node distance=6.5em,strictly right of=GG]{};
\draw[arrow] (MM) -- (Trans);
\draw[arrow] (rule) -- (Trans);
\draw[arrow] (Trans) -- node{\begin{tabular}{c}typed graph\\ $M\models \emfk_1\wedge \emfk_2$ \\\end{tabular}}(Generate);}\]
\caption{\label{fig:illustrationtheoremmodelrepair} Illustration of \EMF model repair }
\end{figure}

\begin{proof}
1. By Theorem \ref{thm:repair}, there are repair program $Q_1,Q_2$ for $\emfk_1,\emfk_2$, respectively.
By Lemma~\ref{lem:independentEMF}, every conjunction $\emfk_1$ and $\emfk_2$ is preserving.
Consequently, \ignore{By the special assumption for $e_1,e_2$, ]}Lemma \ref{lem:presrepair} can be applied and there is an $\emfk_1$-preserving repair program $Q_2^{\prime ~\emfk_1}$ for $\emfk_2$, and by Fact \ref{fac:composition}, $\tuple{Q_1,Q_2^{\prime ~\emfk_1}}$ is a repair program for $\emfk_1\wedge \emfk_2$.

2. The statement is an immediate consequence of Theorem \ref{thm:emfk-repair}.1.

3. The statement follows immediately from Construction \ref{const:proper}.
\end{proof}

\begin{remark}[(OCL-)Constraints]
By the repair results on typed graphs, (model) repair and completion can be done for other constraints satisfying the requirements in Theorem~\ref{thm:repair2}, e.g., for first-order \\(OCL-)constraints.
\end{remark}


Inspecting the \EMF repair (completion) program, it turns out that the program deletes and adds an edge, but it does not change the number of nodes.

\begin{fact}[preservation of the number of nodes]\label{fac:nodes}
The application of the \EMF repair (completion) program does not change the number of nodes.
\end{fact}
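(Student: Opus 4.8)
The plan is to prove the sharper statement that every rule occurring in the \EMF repair (completion) program has an underlying plain rule $\brule{L}{K}{R}$ with $\V_L=\V_K=\V_R$ --- so it neither creates nor deletes a node --- and then to lift this to arbitrary derivations by a routine induction on the structure of programs. First I would unfold the construction underlying Theorem~\ref{thm:emfk-repair}: the \EMF repair program is built by sequential composition, non-deterministic choice, $\try$, and as-long-as-possible iteration out of only the following rules: (i)~the identity rules $\Skip$, $\select(a)$, $\select(a,\neg c)$, $\unselect(a)$; (ii)~for every negative \EMF constraint $\NE a$ (an instance of constraints~1--3) the rules of $\S_a'$; (iii)~for every instance $\PA(a,c)$ of the opposite-edge constraint (constraint~4) the rules of $\R_b$ occurring in $P_c=\try\R_b$ and, in the $\emfk_1$-preserving variant, the rules of the interface-modified set $\S_a^{\id}$ occurring in $P_{\NE a}^{\id}$ (Construction~\ref{const:repairing}); each of these possibly equipped with a $\cpres(\cdot,\emfk_1)$ application condition, which restricts applicability but leaves the plain rule unchanged.

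Then I would verify case by case that each of these plain rules satisfies $\V_L=\V_K=\V_R$. For the identity rules in~(i) this is immediate. For a rule $C\dder B$ of $\S_a'$ with $a\colon\emptyset\injto C$ an instance of constraints~1--3, the forbidden graph $C$ contains at least one edge --- a containment edge for constraints~1 and~2, a parallel edge for constraint~3 --- so $\E_C\neq\emptyset$; since $\E_A=\emptyset$, condition~(*) in Construction~\ref{const:basic} selects the branch $|\V_C|=|\V_B|$, $|\E_C|=|\E_B|+1$, so the rule deletes exactly one edge and no node, and the dangling edges operator only prepends further edge deletions. For the rules of $\R_b$ in~(iii), the inner condition $c=\PE b$ of constraint~4 only requires the additional opposite-directed $t_2$-edge between the already present $\mathrm{A}$- and $\mathrm{B}$-node, so the codomain of $b$ has the same node set as its domain and the single rule of $\R_b$ adds one edge and no node. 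For the rules of $\S_a^{\id}$ in~(iii), the codomain $C$ of $a\colon\emptyset\injto C$ already contains the $t_1$-edge, so again $\E_C\neq\emptyset$ and condition~(*) gives a rule deleting exactly the $t_1$-edge and no node; replacing the left interface by $\id_C$ and applying the dangling edges operator again affect only edges.

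Finally I would conclude by induction on the formation of programs (Definition~\ref{def:prog}): the base case is the case analysis above, and for $\tuple{P;Q}$, $\{P,Q\}$, $\try P$, and $P{\downarrow}$ the invariant ``$G\dder_P H$ implies $|\V_G|=|\V_H|$'' follows at once from the semantics, since every computation of a compound program decomposes into computations of its components and of $\Skip$. Together with the existence of such transformations (Theorem~\ref{thm:emfk-repair}), this yields the fact. The point I expect to require real care is the use of condition~(*): unlike an arbitrary universal constraint such as $\PA(\classlab{Pl},\PE\containmentedgelab{Pl}{Tk}{\token})$, whose repair program may delete a $\mathrm{Pl}$-node, the codomain of the morphism of every \EMF constraint already contains an edge, so \emph{all} deletions performed anywhere in the \EMF repair program are edge deletions --- which is exactly why no node is ever lost (and clearly none is ever gained).
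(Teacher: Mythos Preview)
Your proposal is correct and is essentially a detailed elaboration of what the paper leaves as a bare assertion: the paper merely states, just before the fact, that ``the program deletes and adds an edge, but it does not change the number of nodes,'' without spelling out the case analysis. Your rule-by-rule verification --- in particular the use of condition~(*) in Construction~\ref{const:basic} to conclude that every instance of $\S_a'$ and $\S_a^{\id}$ deletes only an edge because each forbidden graph in the \EMF constraints contains at least one edge, and the observation that the inner morphism $b$ of the opposite-edge constraint adds only an edge --- is exactly the inspection the paper gestures at, and the closing induction on program structure is routine.
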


There is a close relationship between \EMF and EMF. For an \EMF constraint $\emfk$, $\emf$ denotes the more rigorous EMF constraint requiring no containment cycles and, for an EMF-constraint $\emf$, $\emfk$ denotes the weaker \EMF constraint $\emf$ requiring no containment cycles of length $\leq k$.

\begin{fact}[\EMF-EMF]\label{fac:emf}For typed graphs $L$ of node size $\leq k$, $L\models\emfk$ iff $L\models\emf$. \end{fact}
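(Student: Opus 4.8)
The plan is to prove the two implications of the biconditional separately, noting that only the forward direction actually uses the size bound $|V_L|\le k$. First I would dispatch the easy direction: if $L\models\emf$, then $L$ contains no containment cycle at all, so in particular none of length $\le k$; by the schema for the cycle constraint in Fact~\ref{def:EMFnested} this gives $L\models\emfk$. This holds for every typed graph, with no restriction on the number of nodes.

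For the converse I would argue by contraposition, assuming $L\not\models\emf$ with $|V_L|\le k$ and exhibiting a short containment cycle. By Definition~\ref{def:emf}, $L\not\models\emf$ means $(v,v)\in\contains_L$ for some $v\in V_L$. Unfolding the inductive definition of $\contains_L$ as the transitive closure of the direct-containment relation $\{(\sou_L(e),\tar_L(e))\mid e\in C_L\}$ on $V_L$, this produces a closed walk $v=v_0,v_1,\dots,v_m=v$ with $m\ge 1$ in which every step $(v_j,v_{j+1})$ is witnessed by a containment edge of $L$.

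The key step is a pigeonhole / cycle-shortening argument. Among all such closed containment walks in $L$ I pick one of minimal length $m$. If some vertex were repeated, say $v_a=v_b$ with $0\le a<b\le m$ and $(a,b)\neq(0,m)$, then splicing out the sub-walk between $v_a$ and $v_b$ yields a strictly shorter closed containment walk of length $m-(b-a)\ge 1$, contradicting minimality; hence $v_0,\dots,v_{m-1}$ are pairwise distinct, so $m\le |V_L|\le k$. This minimal walk is therefore a simple containment cycle of length $m\in\{1,\dots,k\}$ — a containment loop when $m=1$, a containment cycle of length $m$ otherwise — i.e.\ exactly a witness against one of the conjuncts $\NE\containmentloop\wedge\bigwedge_{i=1}^{k-1}\NE\containmentcycle{i}$ of the $\emfk$ schema in Fact~\ref{def:EMFnested}. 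Thus $L\not\models\emfk$, completing the contrapositive. The only point needing care is the bookkeeping translating the relational statement $(v,v)\in\contains_L$ of Definition~\ref{def:emf} into the morphism-based conjuncts of Fact~\ref{def:EMFnested} and making the splicing step precise; I expect this to be routine once a minimal closed walk is in hand.
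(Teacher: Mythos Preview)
The paper does not give a proof of this fact at all; it is stated as a bare \emph{Fact} and immediately used in the proof of Theorem~\ref{thm:emf-repair}. The intended justification is exactly the one you supply: the only difference between $\emf$ and $\emfk$ is the containment-cycle constraint, and a graph with at most $k$ nodes can only have simple containment cycles of length $\leq k$, so forbidding cycles up to length $k$ already forbids all of them. Your contraposition via a minimal closed containment walk and the pigeonhole/splicing step is the standard way to make this precise, and it is correct; the paper simply leaves this routine reasoning implicit.
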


As a consequence, we obtain the following statement for EMF model repair \& completion.

Let $\emf_1,\emf_2$ be conjunctions of EMF constraints.
There is \emph{EMF model repair} for a typed graph $L\models\emf_1$ and $\emf_2$ if there is a program $P$ such that, for all transformations $L\dder_P M$, $M \models \emf_1 \wedge \emf_2$.
There is an \emph{EMF model completion} for a typed graph $L$ if there is a program $P$ such that, for all transformations $L\dder_P M$, $M$ is an EMF-model graph.
Model repair and completion for a constraint $e$ are \emph{stable} if, for all typed graphs $L\models e$, all repairs (completions) yield a typed graph isomorphic to $L$.

\begin{theorem}[EMF model repair \& completion]\label{thm:emf-repair}
Let $\emf_1$ is  a conjunction of negative EMF-constraints, $\emf_2$ a conjunction of negative or universal and preserving EMF-constraints, and $\bar{\emfk}$ the conjunction of all \EMF constraints.
\begin{enumerate}
\item There is a EMF model repair for all typed graphs $L\models\emf_1$ and $\emfk_2$.
\item There is a EMF model completion for all typed graphs $L$.
\item The EMF model repair and completion for a constraint $e$ are stable for all typed graphs $L$.
\end{enumerate}
\end{theorem}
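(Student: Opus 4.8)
The plan is to reduce EMF model repair and completion to the \EMF versions established in Theorem~\ref{thm:emfk-repair}, using two bridging facts: on graphs whose node set is not too large the \EMF and EMF constraints coincide (Fact~\ref{fac:emf}), and the \EMF repair (completion) program does not change the number of nodes (Fact~\ref{fac:nodes}). Concretely, given an input typed graph $L$, I would choose the cycle-length bound $k$ to be the number of nodes of $L$, so that $k$ is part of the construction and fixed per input. Then Fact~\ref{fac:emf} applies to $L$, and --- crucially --- by Fact~\ref{fac:nodes} it also applies to every graph reachable from $L$ by the program, since such a graph still has exactly $|V_L| = k$ nodes.

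For part~1, let $L \models \emf_1$ and write $\emfk_1, \emfk_2$ for the \EMF analogues of $\emf_1, \emf_2$, i.e., with ``no containment cycle'' weakened to ``no containment cycle of length at most $k$''. By Fact~\ref{fac:emf}, $L \models \emfk_1$; moreover $\emfk_1$ is a conjunction of negative \EMF constraints and $\emfk_2$ a conjunction of negative or universal and preserving \EMF constraints, so Theorem~\ref{thm:emfk-repair}.1 yields an $\emfk_1$-preserving repair program $P$ for $\emfk_2$. Hence every transformation $L \dder_P M$ gives $M \models \emfk_1 \wedge \emfk_2$; by Fact~\ref{fac:nodes} we have $|V_M| = |V_L| \le k$, so Fact~\ref{fac:emf} gives $M \models \emf_1 \wedge \emf_2$. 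Thus $P$ witnesses EMF model repair for $L$.

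Part~2 is the special case $\emfk_1 = \ctrue$, $\emfk_2 = \bar{\emfk}$ the conjunction of all \EMF constraints: by Fact~\ref{fac:trivial} and Lemma~\ref{lem:independentEMF}, $\bar{\emfk}$ is satisfiable and admits a sequentialization consisting of the negative instances followed by the (preserving) universal ones, so Theorem~\ref{thm:emfk-repair}.2 gives an \EMF model completion program; applied to $L$ it yields $M \models \bar{\emfk}$ with $|V_M| = |V_L| \le k$, and by Fact~\ref{fac:emf} this $M$ satisfies all EMF constraints, i.e., is an EMF model graph. For part~3, if $L$ already satisfies $\emf_1 \wedge \emf_2$ (resp.\ is an EMF model graph), then Fact~\ref{fac:emf} gives $L \models \emfk_1 \wedge \emfk_2$ (resp.\ $L \models \bar{\emfk}$), and stability of the \EMF programs (Theorem~\ref{thm:emfk-repair}.3) yields $L \cong M$ for every transformation $L \dder_P M$.

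The only real subtlety --- and hence the main obstacle --- is the treatment of the fixed parameter $k$: ``no containment cycle'' is monadic second order and cannot be captured by a single first-order \EMF constraint, so $k$ must depend on the input $L$, and one must check that no repair step can push the node count above $k$; this is precisely Fact~\ref{fac:nodes}, whose content is that the universal repair step for the opposite-edge constraint, even when made $\emfk_1$-preserving, only adds an edge and never a node. The remaining work, namely that the conjunctions occurring have the shape (negative, resp.\ universal \& preserving, and satisfiable) required by Theorem~\ref{thm:repair2} so that Theorem~\ref{thm:emfk-repair} genuinely applies, is exactly Fact~\ref{fac:trivial} together with Lemma~\ref{lem:independentEMF}.
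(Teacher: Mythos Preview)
Your proposal is correct and follows essentially the same route as the paper: choose $k=|V_L|$, invoke the \EMF repair/completion program from Theorem~\ref{thm:emfk-repair}, use Fact~\ref{fac:nodes} to keep the node count at $k$, and then apply Fact~\ref{fac:emf} on both ends to pass between EMF and \EMF constraints. Your write-up is in fact a bit more explicit than the paper's (you spell out why the \EMF conjunctions have the shape required and how stability transfers via Fact~\ref{fac:emf}), but the argument is the same.
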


\begin{figure}[htp]
\[\scalebox{0.9}{
\tikz[node distance=6em,inner sep=1pt,outer sep=2pt]{
\node(L)[label={above right:}] {$\emf_1 \vmod ~L$};
\node(G)[below of=L]{$\emfk_1 \vmod ~L$};
\node(H)[right of=G,node distance=12em]{\begin{tabular}{c}$M\models \emfk_1 \wedge \emfk_2$\end{tabular}};
\node(M')[right of=L,node distance=12em] {\begin{tabular}{c}$M\models \emf_1 \wedge \emf_2$\end{tabular}};

\draw[derivation] (L) to node[left] {\begin{tabular}{c}if $|V_L| = k$ \end{tabular}} (G);
\draw[derivation] (L) to node[above] {$P$} (M');
\draw[derivation] (G) to node[above] {$P$} (H);
\draw[derivation] (H) to node[right] {\begin{tabular}{c}if $|V_M| = k$ \end{tabular}} (M');
}}\]
\caption{\label{fig:emf-emfk}Relation on $\emf$ and $\emfk$}
\end{figure}

\begin{proof} 1. For a typed graph $L$ of node size $k$ satisfying $\emf_1$ and $\emf_2$, we take the \EMF repair program $P$ for $\tuple{\emfk_1,\emfk_2}$ and apply it to $L$. 
By Fact \ref{fac:emf}, $L\models\emf_1$ implies  $L\models\emfk_1$.
By Theorem~\ref{thm:repair2}, the application of $P$ to $L$ yields a typed graph $M$ satisfying $\emfk_1\wedge\emfk_2$.
The program does not change the number of nodes, i.e., $M$ is a typed graph with $k$ nodes. 
By Fact \ref{fac:emf}, $M$ satisfies $\emf_1\wedge\emf_2$.

2. For a typed graph $L$ of node size $k$, we take the \EMF-completion program and apply it to $L$ yielding an \EMF-model graph $M$, i.e, a typed graph $M$ satisfying $\bar{\emfk}$. The program does not change the number of nodes, i.e., $M$ is a typed graph with $k$ nodes. 
By Fact \ref{fac:emf}, $M$ satisfies $\emf_1 \wedge \emf_2$.

3. By Theorem \ref{thm:emfk-repair}, the \EMF-model repair and completion programs are stable, i.e.,
for all transformations $L\dder_P M$, $L\models e$ implies $L\cong M$. Applying the programs to a typed graph, the property remains preserved.
\end{proof}

\begin{remark}[Repair of other structures]
In this section, the results in Section \ref{sec:repair} are applied to meta-modeling: typed graphs are repair w.r.t. EMF constraints. Obviously, typed graphs can also repaired w.r.t. other constraints, e.g. OCL-graph constraints as considered in \cite{Radke+18a}.
Note that the presented results hold in every $\M$-adhesive category with $\Epi'$-$\M$-pair factorization. 
As a consequence, we can do repair for 
high-level structures and high-level constraints \cite{Ehrig-Golas-Habel-Lambers-Orejas14a}.\end{remark}

\begin{remark}[Model generation] 
In model generation, given a meta-model, one tries to find some (all) instances of the meta-model. 
Model generation may be seen as a special case of model completion applying the program: 
{For a fixed $k$, the application of the \EMF model completion program to the empty typed graph yields \EMF model graphs. 
By Fact~\ref{fac:nodes}, every \EMF model graph with node size $\leq k$ is an EMF model graph. }
In this way, we obtain some instances of the meta-model. 
\end{remark}

\ignore{ 
\begin{corollary}[EMF model completion]
For every typed graphs $G$, there is a completion program $P_d$ such that for all transformations $G \dder_{P_d} H$, $H$ is an EMF model graph. 
\end{corollary}
\begin{proof}
Let $k$ be the number of nodes in $G$, $d$ the conjunction of all \EMF constraints, and $P_d$ the corresponding \EMF completion program. Then, for every transformation $G\dder_{P_d} H$, $H$ is an \EMF model graph. All cycles in $G$ are of length $\leq k$, and the \EMF model repair program does not change the number of nodes, i.e., $|V_G| = k = |V_{H}|$. Since typed graphs of node size~$k$ only may possess cycles of length~$k$, $H$ is an EMF model graph, i.e. $H \models$ EMF constraints $\Leftrightarrow H \models$ \EMF constraints.
\end{proof}
}
%

\section{Related work}\label{sec:related}
In this section, we present some related concepts on model repair, for which there is a wide variety of different approaches. Recently, there has been a sophisticated survey on different model repair techniques, and a feature-based classfication of these approaches, see \cite{Macedo-etal17a}.
In their sense, our approach is \emph{stable} (see Theorem~\ref{thm:emfk-repair}.3).


\ignore{ 
In \textbf{Nentwich et al.} \cite{Nentwich-etal03a} a repair framework for inconsistent distributed UML documents is presented.
Given a first order formula, the algorithm automatically creates a set of repair actions from which the user can choose, when an inconsistency occurs.  
These repair actions can either delete, add or change a model document. 
It can be shown, that the repair actions are correct and complete. 
The problem of repair cycles is left for future work. 
Since, in general, it is undecidable, if a constraint is satisfiable, the algorithm may not terminate.
}

In {\bf Schneider et al. 2019} \cite{Schneider-etal19a}, a logic-based incremental approach to graph repair is presented, generating a sound and complete (upon termination) overview of least changing repairs. The graph repair algorithm takes a graph and a first-order (FO) graph constraint as inputs and returns a set of graph repairs. Given a condition and a graph, they compute a set of symbolic models, which cover the semantics of a graph condition.
Both approaches are proven to be \textbf{correct}, i.e. the repair (programs) yield to a graph satisfying the condition. 
The delta-based repair algorithm takes the graph update history explicitly into account, i.e. the approach is \textbf{dynamic}. In contrast, our approach is static, i.e., we first construct a repair program, then apply this program to an arbitrary graph. 
The repair algorithm does not \textbf{terminate}, if the repair updates trigger each other ad infinitum. Here, we have constructed terminating repair programs.

In \textbf{Biermann et al. 2012} \cite{Biermann-Ermel-Taetzer12a}, for EMF model transformations, consistent transformations are defined. For a set of rules, they slightly modify them, to get so-called consistent transformation rules. This way, a direct transformation step applied at an EMF model yields an EMF model graph again. 
In our approach, a direct transformation step leads to typed graphs. We use the repair program to complete the typed graph to an EMF model graph.

In {\bf Nassar et al. 2017} \cite{Nebras-etal17a}, a rule-based approach to support the modeler in automatically trimming and completing EMF models is presented. For that, repair rules are automatically generated from multiplicity \ignore{(mult)}constraints imposed by a given meta-model. The rule schemes are carefully designed to preserve the EMF model constraints. 
One can use the approach in this paper to transform a typed graph to an \EMF model graph, then, one can use the approach of \cite{Nebras-etal17a}, to transform an EMF model graph to an EMF model graph, satisfying additional multiplicity constraints.
\longv{
\[\scalebox{0.9}{
\tikz[node distance=1.2em,label distance=0pt,outer sep=0pt,inner sep=1pt]{
\node(MM)[label={[marking]right:\begin{tabular}{c}EMF model\\graph\end{tabular}}]{};
\node[rectangle,inner sep=10pt,round corners](Trans)[label={[marking,outer sep=0pt,inner sep=0pt]right:\begin{tabular}{@{}c@{}}\end{tabular}},node distance=6em,strictly right of=MM]{\large\begin{tabular}{c}trimming \& completion\end{tabular}};
\node(GG)[node distance=3em,strictly right of=Trans]{};
\node(Generate)[node distance=6em,strictly right of=GG]{};
\draw[arrow] (MM) -- (Trans);
\draw[arrow] (Trans) -- node{\begin{tabular}{c}EMF model\\graph + multiplicities\end{tabular}}(Generate);
}}\]}

In \textbf{Nassar et al. 2020} \cite{Nassar-etal20a}, a method to simplify constraint-preserving application conditions is presented. 
Their simplifications of the application conditions are based on three main concepts: 
(1) If the elements which are deleted (or added) by a rule are type-disjoint with the types of the constraint, i.e. they share no types, the application condition simplifies to $\ctrue$,
(2) For increasing (or decreasing) rules, and positive (or negative) constraints, the application is $\ctrue$,
(3) For negative constraints $\NE C$ one may omit the cases where $C$ and the elements created by the rule overlap in at least one element. 
The simplified application conditions are proven to be logically equivalent to the original application condition.
The results are proven to be correct for $\M$-adhesive categories, and can be used to simplify the application conditions needed for our construction of condition-preserving application conditions (Lemma~\ref{lem:pres}), as well. 
Furthermore, the EMF model constraints (Fact~\ref{def:EMFnested}) are simplified by replacing every subcondition violating the no parallel edge, and at most one container constraints with $\cfalse$.

In \textbf{Kosiol et al. 2020} \cite{Kosiol-etal20a} two notions of consistency as a graduated property are introduced: consistency-sustaining rules do not change the number of violations of a constraint in a graph, and consistency-improving rule reduce the number of violations in a graph. 
The definition is based on the so-called consistency index, given by the number of constraint violations in a graph, divided by the number of ``relevant occurrences'' of the constraints in a graph. 
A transformation is consistency sustaining, if the consistency index for the input graph is equal or less than the resulting graph, and consistency improving if the number of the violations in the resulting graph is smaller than in the input graph. 
A rule is consistency sustaining, if all transformations are. 
A rule is consistency improving, if all applications of the rule are consistency sustaining, there exists a graph with constraint violations, and a transformation, such that the number of the violations in the resulting graph is smaller than in the input graph. 
A rule is strongly consistency improving if all its applications to a graph with constraint violations is consistency improving.
In their setting, the rules derived from our approach are strongly consistency improving,

In {\bf Taentzer et al. 2017} \cite{Taentzer-etal17a}, a designer can specify a set of so-called change-preserving rules, and a set of edit rules.
Each edit rule, which yields to an inconsistency, is then repaired by a set of repair rules.
The construction of the repair rules is based on the complement construction.
It is shown, that a consistent graph is obtained by the repair program, provided that each repair step is sequentially independent from each following edit step, and each edit step can be repaired. 
The repaired models are not necessarily as close as possible to the original model.

In \textbf{Rabbi et al. 2015} \cite{Rabbi-etal15a}, a model completion approach for predicates specified in the Diagrammic Predicate Framework (DPF) is introduced. For every predicate in the model, they derive a set of completion rules, by constructing the pullback of the instance, the meta-model and the graph of the condition. These rules, applied as long as possible, yields a model which conforms to the predicate. 
In our approach, the rules are derived from the constraint and the meta-model. In both approaches, the meta-model remains unchanged. 

In \textbf{Wang 2016} \cite{Wang16a}, the semantics of the predicates in the DPF is specified as graph constraints, and a model repair approach for these graph constraints is introduced. For constraints of the form $\PA (L, \PE R)$ or $\PA(L, \NE R)$, repair rules are directly derived from the constraints. The construction is based on the construction of subgraphs of $L$ and $R$. For the constraint $\PA (L, \PE R)$, for each subgraph~$B$, they derive rules $\tuple{B \dder R, \NE R}$ and $\tuple{L \dder B, \NE R}$. For the constraint $\PA(L, \NE R)$, rules of the form $\tuple{L \dder B}$ are derived. The performance of the approach has been optimized for practical application scenarios. In this work, we have combined the programs for proper conditions to a repair program for conjunctions of proper conditions. The properties of the repaired conditions remain preserved, whenever possible. If this is not possible, we delete the occurrence. As far as we can see, the approach in \cite{Wang16a} does not handle conjunctions.

\ignore{
In \textbf{Puissant et al.} \cite{Mens-etal15a}, a regression planner is used to automatically generate sequences of repair actions that transform a model with inconsistencies to a valid model. The initial state of the planner is the invalid model, represented as logical formula, the accepting state is a condition specifying the absence of inconsistencies. Then, a recursive best-first search is used to find the best suitable plan for resolving the inconsistencies. The correctness of the algorithm is not proven, but the approach is evaluated through tests on different UML models. {\red In contrast to our approach, the algorithm takes the decision of the user into account and uses backtracking, to find ``the best'' suitable plan to repair the inconsistencies. In contrast, our approach is purely automatic.}
}

In \textbf{Barriga et al 2019} \cite{Barriga-etal19a}, an algorithm for model repair based on EMF is presented, which relies on reinforcement learning. For each error in the model, a so-called Q-table is constructed, storing a weight for each error, \ignore{\red context}and repair action. This weight indicates how good a repair action is, depending on the repair action and regarding the users preferences. 
The approach can repair errors provided by the EMF diagnostician. The results are not proven but evaluated using mutation testing.

\section{Conclusion}\label{sec:conclusion}
In this paper, we have presented the theory of typed repair programs, applied it to \EMF-and EMF model graph repair. 
\begin{enumerate}
\item {\bf Typed graph repair.} We have extended our results on graph repair to typed graphs. 
There are repair programs for a large class of conditions, called \emph{legit} conditions.
Application of the repair programs to an arbitrary typed graph yields a typed graph satisfying~the condition.
\item {\bf \EMF model repair.} For \EMF constraints, a first-order variant of EMF constraints,
we present stable \EMF model repair and completion programs. 
Application of these programs to any typed graph yields a repaired typed graph and an \EMF model graph, respectively,
\item \textbf{EMF model repair.} These results are applied to the EMF world and yield to EMF model repair and completion results.
\end{enumerate}

\ignore{ 
In the present approach, the repair is done by deletion and inserting of items: If there are two containment edges into one container, this is repaired by deletion of one of the edges. In the rule-based approach \cite{Sandmann-Habel19a}, the repair program is constructed from a given set of rules. In the case of the constraint ``At most one container'', a containment edge could be redirected provided that a redirection rule is in the rule set. Similar, containment cycles could be {\red repaired} by redirecting one containment edge. 
\ignore{\red One possible way, would be a repairing set, which merges nodes, instead of deleting them. If this can be proven to be compatible with the repairing sets here, we could merge nodes, instead of deleting them.}
}

Further topics may be the following. 

1. {\bf Least changng repairs.} Our repair programs induce \emph{repairs} in the sense of Schneider et al. \cite{Schneider-etal19a}. It would be nice to show that these induced repairs are least changing repairs. 

2. {\bf Redirection of edges.} Our repair programs try to preserve items; if this is not possible, they delete items. In  Nassar et al. \cite{Nebras-etal17a}, multiplicity constraints are considered. In this context, they use the idea, to redirect an edge instead of deleting it. How this can be included in our approach?

3. {\bf Generalization to attributed type graphs} as e.g. in \cite{Radke+18a,Orejas-Lambers10a}. 


\paragraph{\normalfont\textbf{Acknowledgements.} We are grateful to Annegret Habel, Marius Hubatschek, Jens Kosiol, Okan Özkan, Gabriele Taentzer, and the anonymous reviewers for their helpful comments to this paper.}

\bibliographystyle{eptcs}
\bibliography{bib/lit-GRAGRA,bib/lit-LOGIK,bib/lit-model-repair,bib/lit-MCheck,bib/lit-ocl,bib/lit-PN}

\newpage
\begin{appendix}
\end{appendix}
\end{document}